\newcommand{\blind}{1}
\DeclareMathOperator*{\bbE}{\mathbb{E}}
\DeclareMathOperator*{\Cov}{\mathrm{Cov}}
\DeclareMathOperator*{\Var}{\mathrm{Var}}
\DeclareMathOperator*{\Lip}{\mathrm{Lip}}
\renewcommand{\hat}{\widehat}
\renewcommand{\tilde}{\widetilde}
\renewcommand{\check}{\widecheck}
\theoremstyle{definition}
\newtheorem{theorem}{Theorem}[section]
\newtheorem{assumption}{Assumption}[section]
\newtheorem{definition}{Definition}[section]
\newtheorem{example}{Example}[section]
\newtheorem{lemma}{Lemma}[section]
\newtheorem{proposition}{Proposition}[section]
\newtheorem{remark}{Remark}[section]
\numberwithin{equation}{section}
\begin{document}

\def\spacingset#1{\renewcommand{\baselinestretch}%
	{#1}\small\normalsize} \spacingset{1}


\if1\blind
{
	\title{\bf Causal Inference with Noncompliance and Unknown Interference}
	\author{Tadao Hoshino\thanks{1-6-1 Nishi-Waseda, Shinjuku-ku, Tokyo 169-8050, Japan. Email: \href{mailto:thoshino@waseda.jp}{thoshino@waseda.jp}}\hspace{.2cm}\\
		School of Political Science and Economics, Waseda University \\
		and \\
		Takahide Yanagi\thanks{Corresponding author: Yoshida Honmachi, Sakyo, Kyoto, 606-8501, Japan. Email: \href{mailto:yanagi@econ.kyoto-u.ac.jp}{yanagi@econ.kyoto-u.ac.jp}} \\
		Graduate School of Economics, Kyoto University}
	\maketitle
} \fi

\if0\blind
{
	\bigskip
	\bigskip
	\bigskip
	\begin{center}
		{\LARGE\bf Causal Inference with Noncompliance and Unknown Interference}
	\end{center}
	\medskip
} \fi

\bigskip
\begin{abstract}
	We consider a causal inference model in which individuals interact in a social network and they may not comply with the assigned treatments.
	In particular, we suppose that the form of network interference is unknown to researchers.
	To estimate meaningful causal parameters in this situation, we introduce a new concept of exposure mapping, which summarizes potentially complicated spillover effects into a fixed dimensional statistic of instrumental variables.
	We investigate identification conditions for the intention-to-treat effects and the average treatment effects for compliers, while explicitly considering the possibility of misspecification of exposure mapping.
	Based on our identification results, we develop nonparametric estimation procedures via inverse probability weighting.
	Their asymptotic properties, including consistency and asymptotic normality, are investigated using an approximate neighborhood interference framework.
	For an empirical illustration, we apply our method to experimental data on the anti-conflict intervention school program.
	The proposed methods are readily available with the companion {\ttfamily R} package \href{https://tkhdyanagi.github.io/latenetwork/}{{\ttfamily latenetwork}}.
\end{abstract}

\noindent%
{\it Keywords:} exposure mapping, instrumental variables, local average treatment effect, network interference, spillover effects.
\vfill

\newpage
\spacingset{1.9} 


\section{Introduction} \label{sec:introduction}

Estimating causal effects under cross-unit interference has become increasingly important in various fields.
When individuals interact with each other, using the conventional potential outcome framework of \cite{rubin1980discussion} based on the \textit{stable unit treatment value assumption} (SUTVA) is inappropriate.
To address the potential interference, there has been a rapidly growing number of studies that attempt to mitigate SUTVA by replacing it with some weaker restrictions.

A common approach to dealing with interference is to assume the existence of a low-dimensional \textit{exposure mapping} which serves as a sufficient statistic of spillover effects in that others' treatments affect one's outcomes only through this function (e.g., \citealp{hong2006evaluating}; \citealp{hudgens2008toward}; \citealp{manski2013identification}; \citealp{aronow2017estimating}; \citealp{li2019randomization}; \citealp{forastiere2021identification}; \citealp{li2022random}).
Some frequently-used forms of exposure mapping include, for example, simply extracting the neighbors' treatments from the entire treatment vector or calculating the proportion of treated neighbors.
The exposure mapping is useful for summarizing potentially complicated spillover effects, but there is an inherent difficulty in how to choose the ``right'' functional form.
Thus, some recent studies investigate under what conditions one can estimate meaningful causal parameters even under unknown interference (\citealp{savje2021average}; \citealp{leung2022causal}; \citealp{savje2023causal}).

Including the aforementioned studies, much of the research on causal inference with interference assumes the availability of experimental data where the individuals fully comply with their assigned treatments.
However, this should be restrictive in many applications (e.g., \citealp{miguel2004worms}; \citealp{dupas2014short}; \citealp{zelizer2019position}).
As a real example, consider the experiment on social norms and behaviors of adolescents conducted by \cite{paluck2016changing}.
They randomly selected students to participate in the anti-conflict intervention program where the participants were encouraged to take on leadership roles to reduce conflicts in school.
The authors were interested in assessing the effectiveness of the intervention against one's own behavior, as well as whether the participants influence their peers through their friendship network.
Unfortunately, a certain proportion of the selected students did not join the intervention program, which led the authors to compromise with an intention-to-treat (ITT) analysis.

Although the coexistence of spillovers and noncompliance should be prevalent in empirical applications, only a few studies have explicitly tackled this issue.
\cite{sobel2006randomized} shows that the conventional methods, such as the two-stage least squares estimator, may not admit causal interpretations when ignoring spillover effects.
While there are studies that deal with both spillovers and noncompliance using an instrumental variable (IV) method by extending the local average treatment effect (LATE) framework of \cite{imbens1994identification} (e.g., \citealp{kang2016peer}; \citealp{kang2018spillover}; \citealp{imai2020causal}; \citealp{ditraglia2023identifying}; \citealp{vazquez2023causal}), they rule out interactions on a large network and, more importantly, do not explicitly consider the misspecification of exposure mappings.

Taken these points together, it should be of primary importance to understand what causal parameters we can identify (if any) and how to perform statistical inference on them under the possibility of noncompliance and network interference of unknown form, which is the objective of this study.
We consider a model in which individuals are connected through a single large network and they may self-select their treatment status.
To account for the noncompliance issue and network interference, we employ the IV method and introduce a new concept of exposure mapping, which we call \textit{instrumental exposure mapping} (IEM).
The IEM is similar to the conventional exposure mapping in that it is a function summarizing the spillover effects into low-dimensional variables, but it differs in that it is a function of IVs.

We begin by considering the ITT analysis, wherein the estimands of interest are the average direct effect (ADE) and average indirect effect (AIE) of the IV on the outcome and those on the treatment choice.
We show that these estimands have clear causal interpretation even with a misspecified IEM.
Next, we focus on identifying the average direct and indirect effects for \textit{compliers} who comply with their assigned treatments, which we call the \textit{local average direct effect} (LADE) and \textit{local average indirect effect} (LAIE), respectively.
Under certain conditions, these LATE-type parameters capture the direct and indirect effects of the treatment receipt on the outcome for compilers, and thus, they should be more interpretable and policy-relevant than the simple ITT parameters.
The technical difficulty in identifying these LATE-type parameters is that the standard identification argument in the LATE literature cannot be directly applied if no additional restrictions on the interaction structure are given.
To address this problem, we extend the \textit{restricted interference} assumption in \cite{imai2020causal} to our situation.
It is shown that the LADE and LAIE parameters are identifiable from Wald-type estimands under certain restricted interference assumptions.
Importantly, these identification results imply that the ITT analysis disregarding the noncompliance may underestimate the direct and indirect treatment effects.

We propose nonparametric estimation procedures via inverse probability weighting.
Our estimators are easy to implement, but their statistical properties are non-trivial because of unknown interference.
We impose two key assumptions to show that our estimators are consistent and asymptotically normal.
The first is the \textit{approximate neighborhood interference} (ANI) assumption of \cite{leung2022causal}, which is suitable for many empirical situations where spillover effects from distant units are weaker than those from close ones.
The second assumption is that the network structure is ``sparse'' such that the number of link connections is sufficiently small for each unit.

\paragraph{Related literature}

Our identification results for the ADE parameters build particularly on \cite{imai2020causal}, who study the identification of average causal effects for compliers in two-stage randomized experiments under noncompliance.
A crucial assumption underlying their model is that interference is restricted within disjoint groups (i.e., partial interference).
More importantly, they a priori assume a stratified interference mechanism in which spillovers are determined only through the number of treatment assignments within each cluster.
In contrast, this paper addresses the network interference of unknown form leveraging a potentially misspecified IEM.
See Remark \ref{remark:imaietal} for further comparison.

Our definitions and identification arguments for the AIE parameters extend \cite{hu2022average} to the case of noncompliance.
As in their paper, we define the AIE parameters based on the \textit{interference graph} (cf. \citealp{aronow2021spillover}).
Compared to \cite{hu2022average}, we explicitly consider a potential misspecification of \textit{interference set} -- the set of units who are affected by a focal unit.
Specifically, we allow for the possibility that the effects originated from a focal unit spillover beyond its interference set.

Another closely related study is \cite{leung2022causal}.
He proposes an ANI model in experimental situations with perfect compliance and develops inferential methods for average treatment effects while explicitly allowing for the misspecification of exposure mapping.
The major distinction between \cite{leung2022causal} and ours is that not only the spillover effects of treatments on the outcome but also that of others' IVs on own treatment choice are considered in our study.

\paragraph{Paper organization}

Section \ref{sec:model} presents our model setup.
Sections \ref{sec:identification} and \ref{sec:estimation} provide the identification and estimation results, respectively.
Section \ref{sec:numerical} reports the numerical results.
The companion {\ttfamily R} package \href{https://tkhdyanagi.github.io/latenetwork/}{{\ttfamily latenetwork}} is available from the authors' websites.

\section{Model} \label{sec:model}

Consider a finite population of $n \in \mathbb{N}$ units $N_n \coloneqq \{1, 2, \dots, n\}$.
The units form an undirected network represented by the $n \times n$ symmetric adjacency matrix $\bm{A} = (A_{ij})_{i,j \in N_n}$, where $A_{ij} \in \{ 0, 1 \}$ indicates whether or not $i$ and $j$ are connected.
We assume that there are no self-links so that $A_{ii} = 0$ for all $i \in N_n$.
Denote the set of possible adjacency matrices of $n$ units as $\mathcal{A}_n$.

In a later section, we study asymptotic theory under the condition that the network size $n$ grows to infinity.
This means that we consider a sequence of networks $\{ \bm{A}_m \}$ for $m = 1, 2, \dots$.
The observed adjacency matrix $\bm{A}$ with no subscript is regarded as an $n$-th element of the sequence.
Generally, networks $\bm{A}_{m_1}$ and $\bm{A}_{m_2}$ are completely unrelated, such that the members of the former and latter do not overlap.
Meanwhile, it is possible to create a new network $\bm{A}_{m_3}$ ($m_3 = m_1 + m_2$) as a block diagonal matrix with the diagonal submatrices $\bm{A}_{m_1}$ and $\bm{A}_{m_2}$.
Furthermore, the distributions of variables such as treatment assignments may be specific to each network in general; that is, they form a triangular array defined along with the network sequence.
However, for notational simplicity, we suppress the dependence of variables on the network.

Let $Y_i \in \mathbb{R}$ be an observed outcome variable and $D_i \in \{ 0, 1 \}$ an indicator of the treatment receipt for $i$.
In observational studies or randomized experiments with possible noncompliance, individuals may self-select their treatment status and the existing methods under perfect compliance may not be applicable.
To address this problem, suppose that there is a binary IV, $Z_i \in \{ 0, 1 \}$.
In an experimental setup, $Z_i$ is typically an indicator of initial treatment recommendation for $i$.
Denote the $n$-dimensional vector of realized treatments as $\bm{D} = (D_i)_{i \in N_n}$, and similarly let $\bm{Z} = (Z_i)_{i \in N_n}$.
We write the support of $\bm{D}$ and that of $\bm{Z}$ as $\mathcal{D}_n = \{ 0, 1 \}^n$ and $\mathcal{Z}_n = \{ 0, 1 \}^n$, respectively.
For each $\bm{d} \in \mathcal{D}_n$ and $\bm{z} \in \mathcal{Z}_n$, we denote the potential outcome of unit $i$ when $\bm{D} = \bm{d}$ and $\bm{Z} = \bm{z}$ as $Y_i(\bm{d}, \bm{z})$.
Similarly, the potential treatment status given $\bm{Z} = \bm{z}$ is written as $D_i(\bm{z})$.
Let $\bm{D}(\bm{z}) = (D_i(\bm{z}))_{i \in N_n}$ be the $n$-dimensional vector of potential treatments.
By construction, we have $Y_i = Y_i(\bm{D}, \bm{Z})$, $D_i = D_i(\bm{Z})$, and $\bm{D} = \bm{D}(\bm{Z})$.
Hence, we can further write $y_i(\bm{z}) = Y_i(\bm{D}(\bm{z}), \bm{z})$ for some function $y_i: \mathcal{Z}_n \to \mathbb{R}$, and we have $Y_i = y_i(\bm{Z}) = \sum_{\bm{z} \in \mathcal{Z}_n} \bm{1}\{ \bm{Z} = \bm{z} \} y_i(\bm{z})$.
Denoting $\bm{z}_{-i} = (z_k)_{k \neq i}$, we write the potential outcome of unit $i$ given $Z_j = z_j$ and $\bm{Z}_{-j} = \bm{z}_{-j}$ as $y_i(Z_j = z_j, \bm{Z}_{-j} = \bm{z}_{-j})$, where $i$ may differ from $j$.
When $i = j$, we use both $y_i(z_i, \bm{z}_{-i})$ and $y_i(\bm{z})$ interchangeably depending on the situation.
The same notation applies to the other functions of $\bm{z}$.

Because we can observe only one realization from $(y_i(\bm{z}), D_i(\bm{z}))_{\bm{z} \in \mathcal{Z}_n}$ for each unit, it is generally impossible to define identifiable causal estimands without introducing some restrictions.
Here, we consider a pre-specified function $T: N_n \times \{0,1\}^{n-1} \times \mathcal{A}_n \to \mathcal{T}$, where $\mathcal{T} \subset \mathbb{R}^{\dim(T)}$ does not depend on $i$ and $n$, and $\dim(T)$ is a fixed positive integer.\footnote{
	In the literature, \cite{forastiere2021identification} consider an exposure mapping whose range may be heterogeneous across $i$ and $n$.
	Although our results would hold with minor modifications even in the presence of heterogeneity in exposure mappings, we consider this aspect beyond this study's scope because such a generalization substantially complicates asymptotic theory.
	Nevertheless, the common range assumption should not be too restrictive in practice, considering that in our framework, researchers can arbitrarily specify the form of IEM.
}
We call the function $T$ the {\it instrumental exposure mapping} (IEM) and its realization $T_i = T(i, \bm{Z}_{-i}, \bm{A})$ the {\it instrumental exposure}.
When no confusion exists, we suppress the dependence of $T$ on $\bm{A}$, that is, $T(i, \bm{z}_{-i}) = T(i, \bm{z}_{-i}, \bm{A})$.
Without loss of generality, we may suppose that the functional form of $T$ does not depend on $i$'s own $Z_i$.\footnote{
	Even when the level of exposure changes depending on unit's own IV, we can consider the exposures when $Z_i = 1$ and $Z_i = 0$ separately, say $T_1(i, \bm{z}_{-i})$ and $T_0(i, \bm{z}_{-i})$, respectively, and re-define $T(i, \bm{z}_{-i}) \coloneqq (T_0(i, \bm{z}_{-i}), T_1(i, \bm{z}_{-i}))$.
	We thank the referee for suggesting this point.
}
We say that the IEM is {\it correctly specified} on $\bm{A}$ if
\begin{align} \label{eq:correctIEM}
	T(i, \bm{z}_{-i}) = T(i, \bm{z}_{-i}') \Longrightarrow D_i(z_i, \bm{z}_{-i}) = D_i(z_i, \bm{z}_{-i}')\;\; \text{and} \;\; y_i(z_i, \bm{z}_{-i}) = y_i(z_i, \bm{z}_{-i}')
\end{align}
for all $i \in N_n$, $z_i \in \{ 0, 1 \}$, and $\bm{z}_{-i}, \bm{z}_{-i}' \in \{ 0, 1 \}^{n-1}$.\footnote{
	Note that the definition in \eqref{eq:correctIEM} does not imply the uniqueness of correct IEM.
	For example, if having at least one treatment-eligible neighbor is a correct IEM for $i$ (i.e., $T_i^\text{max} = \max\{Z_j: A_{ij} = 1\}$), so is $i$'s neighborhood average $T_i^\text{ave} = \sum_{j \neq i} A_{ij} Z_j/\sum_{j \neq i} A_{ij}$, because $T_i^\text{ave} = 0$ and $T_i^\text{max} = 0$ are equivalent and $T_i^\text{ave} = t$ for any $t > 0$ is only a special case of $T_i^\text{max} = 1$.
	That is, $T_i^\text{ave}$ is a ``finer'' exposure than $T_i^\text{max}$.
	Based on such a hierarchical structure of different exposure mappings, \cite{hoshino2023randomization} propose a randomization test for the specification of exposure mappings.
	If any spillover effects do not exist in the first place, any IEM is correct.
} 

If the IEM is correctly specified, it serves as a fixed dimensional sufficient statistic that summarizes potentially high-dimensional spillover effects.
That is, the potential treatment status and the potential outcome of unit $i$ can be fully characterized by $i$'s own IV $Z_i$ and her instrumental exposure $T_i$, and there exist functions $\tilde d_i: \{ 0, 1 \} \times \mathcal{T} \to \{ 0, 1 \}$ and $\tilde y_i: \{ 0, 1 \} \times \mathcal{T} \to \mathbb{R}$ satisfying $\tilde d_i(z_i, T(i, \bm{z}_{-i})) = D_i(z_i, \bm{z}_{-i})$ and $\tilde y_i(z_i, T(i, \bm{z}_{-i})) = y_i(z_i, \bm{z}_{-i})$ for any $z_i \in \{ 0, 1 \}$ and $\bm{z}_{-i} \in \{ 0, 1 \}^{n-1}$.
Then, $\tilde y_i(z, t)$ and $\tilde d_i(z, t)$ represent the potential outcome and the potential treatment status, respectively, given $Z_i = z$ and $T_i = t$.
In this way, a properly specified IEM alleviates the complexity of handling general spillover effects and greatly simplifies the estimation of causal parameters under interference.
However, in reality, the user-specified IEM is generally incorrect.
In this case, $\tilde y_i(z, t)$ and $\tilde d_i(z, t)$ are no longer well-defined.

Throughout the paper, following the recent literature on causal inference with interference, we focus on a design-based uncertainty framework where the randomness comes only from $\bm{Z}$.
That is, we treat the potential outcomes, potential treatments, and adjacency matrix as non-stochastic components.
The design-based approach is suitable for randomized experiments where researchers can design the random assignment mechanism for treatment eligibility or initial recommendations, as in \cite{dupas2014short} and \cite{paluck2016changing}.
Even in observational studies, the design-based approach should be relevant when we can observe the entire population or most of the finite population (cf. \citealp{abadie2020sampling}).
It should be noted that we can also view our framework as a random design on which everything other than $\bm{Z}$ is conditioned.

We assume that the IV can affect the outcome only through the treatment.

\begin{assumption} [Exclusion restriction] \label{as:exclusion}
	$Y_i(\bm{d}, \bm{z}) = Y_i(\bm{d}, \bm{z}')$ for all $i \in N_n$, $\bm{d} \in \mathcal{D}_n$, and $\bm{z}, \bm{z}' \in \mathcal{Z}_n$.
\end{assumption}

Under Assumption \ref{as:exclusion}, we can reduce the potential outcome when $\bm{D} = \bm{d}$ to $Y_i(\bm{d}) = Y_i(\bm{d}, \bm{z})$, and we have $y_i(\bm{z}) = Y_i(\bm{D}(\bm{z}))$.
Note that this assumption is not essentially necessary in terms of the ITT analysis, but it can greatly improve the causal interpretation of our parameters.

We provide two specific examples that can be effectively analyzed within our model.

\begin{example}\label{example:linear1}
	Suppose that the outcome is generated as $Y_i = \beta_{0i} + \beta_1 D_i + \beta_2 \cdot \bm{1}\{ \sum_{j \neq i} A_{ij} D_j > c \}$, where $\beta_{0i}$ is an idiosyncratic intercept term, $\beta_1$ and $\beta_2$ indicate the direct and spillover effects, respectively, and $c$ is a given threshold.
	Assume that the treatment status of each unit is determined only by her own IV: $D_i(z_i) = D_i(z_i, \bm{z}_{-i})$.
	Then, the potential outcome when $\bm{Z} = \bm{z}$ is $y_i(\bm{z}) = \beta_{0i} + \beta_1 D_i(z_i) + \beta_2 \cdot \bm{1} \{ \sum_{j \neq i} A_{ij} D_j(z_j) > c \}$.
	A correctly specified IEM is, for example, $T(i, \bm{Z}_{-i}) = \bm{1}\{ \sum_{j \neq i} A_{ij} D_j(Z_j) > c \}$ with $\mathcal{T} = \{ 0, 1 \}$.
	In the literature, this type of exposure mapping is used, for example, in \cite{hong2006evaluating} and \cite{leung2022causal}.
\end{example}

\begin{example}\label{example:latent1}
	Suppose that Assumption \ref{as:exclusion} holds and that no interference exists in the outcome.
	For the treatment choice, consider the latent index model $D_i = \bm{1}\{ \gamma_{0i} + \gamma_{1i} Z_i + \gamma_{2i} \cdot \bm{1} \{ \sum_{j \neq i} A_{ij} Z_j > c \} > 0 \}$, where $\gamma_{0i}$ is the preference heterogeneity for the treatment, and $\gamma_{1i}$ and $\gamma_{2i}$ respectively capture the direct and spillover effects of the IV.
	In this situation, the potential outcome when $\bm{Z} = \bm{z}$ is given by $y_i(\bm{z}) = \beta_{0i} + \beta_{1i} D_i(\bm{z})$.
	As such, the model is a simple binary treatment model with potentially many IVs.
	It is straightforward to find that we can estimate a LATE-type parameter using the two-stage least squares method under a monotonicity condition between $D_i$ and $Z_i$ (e.g., $\gamma_{1i} \ge 0$ for all $i$), ignoring the spillover effect in the treatment choice model.
	If we set $T(i, \bm{Z}_{-i}) = \bm{1}\{ \sum_{j \neq i} A_{ij} Z_j > c \}$, this is clearly a correct IEM.
\end{example}

\section{Identification} \label{sec:identification}

To begin with, we introduce the following assumption:

\begin{assumption}[Independence] \label{as:independence}
	$\{Z_i\}_{i \in N_n}$ are mutually independent.
\end{assumption}

This assumption would be reasonable for many empirical situations.
For example, the assumption is satisfied in a randomized experiment where the treatment eligibility is independently assigned to each unit with a given probability.
Another example is an observational study in which the IV of each unit is determined independently from the other units.

Assumption \ref{as:independence} will be used for both the identification analysis and asymptotic investigations.
If the IVs are correlated with each other, this causes non-trivial difficulties in proceeding the subsequent analysis.
For the same reason, several studies in the literature of network interference predominantly focus on Bernoulli experiments (e.g., \citealp{hu2022average}; \citealp{li2022random}).
Investigating more general treatment assignment mechanisms is left for future research.

\subsection{Intention-to-treat estimands and causal parameters of interest} \label{subsec:parameters}

We define the ITT estimands and the causal parameters of interest.
To this end, consider a non-random sub-population $S_n \subseteq N_n$.
Throughout the paper, we assume $S_n$ to be non-empty, and consider estimating causal parameters specific to this sub-population.
For an example of $S_n$, let $S_n(\delta)$ be the set of units whose degrees are $\delta$: $S_n(\delta) = \{ i \in N_n: \sum_{j \neq i} A_{ij} = \delta \}$.
In this case, we can examine whether the causal impacts vary across individuals with different centrality by comparing the estimates obtained with different $\delta$'s.
See Remark \ref{remark:subpopulation} for further discussion.

To define the ADE estimands, let $\mu_i^Y(z, t) \coloneqq \bbE[ Y_i \mid Z_i = z, T_i = t]$ for $z \in \{ 0, 1 \}$ and $t \in \mathcal{T}$.
Here, the expectation is taken with respect to the distribution of $\bm{Z}$; that is, $\mu_i^Y(z, t) 
= \sum_{\bm{z} \in \mathcal{Z}_n} \Pr[\bm{Z} = \bm{z} \mid Z_i = z, T_i = t] y_i(\bm{z})$.
Due to the heterogeneity in the potential outcomes, generally we cannot obtain a consistent estimator of $\mu_i^Y(z, t)$ in the design-based approach.
Denoting $\bar \mu_{S_n}^Y(z, t) \coloneqq |S_n|^{-1} \sum_{i \in S_n} \mu_i^Y(z, t)$, where $|S_n|$ is the cardinality of $S_n$, the ADE of the IV is defined by $\mathrm{ADEY}_{S_n}(t) \coloneqq \bar \mu_{S_n}^Y(1, t) - \bar \mu_{S_n}^Y(0, t)$.
Similarly, we define $\mu_i^D(z, t) \coloneqq \bbE[ D_i \mid Z_i = z, T_i = t]$, $\bar \mu_{S_n}^D(z, t) \coloneqq |S_n|^{-1} \sum_{i \in S_n} \mu_i^D(z, t)$, and $\mathrm{ADED}_{S_n}(t) \coloneqq \bar \mu_{S_n}^D(1, t) - \bar \mu_{S_n}^D(0, t)$.

Next, we turn to the AIE estimands.
Let $\ell_{\bm{A}}(i, j)$ denote the path distance (defined on the whole population $N_n$) between units $i$ and $j$.\footnote{
	The {\it path distance} between $i$ and $j$ is the length of the shortest sequence of neighboring edges connecting them.
	As convention, we set $\ell_{\bm{A}}(i, j) = \infty$ when no path exists between $i$ and $j$ in $\bm{A}$ and 0 if $i = j$.
}
Suppose that each $i$'s instrumental exposure $T_i$ depends only on unit $j$'s such that $1 \le \ell_{\bm{A}}(i, j) \le K$ with some constant $K \ge 1$ (see Assumption \ref{as:exposure}).
Given this, we define the \textit{interference graph} $\bm{E} = ( E_{ij} )_{i,j \in N_n}$, where $E_{ij} \coloneqq \bm{1}\{1 \le \ell_{\bm{A}}(i, j) \le K\}$.
For each $i \in S_n$, let $\mathcal{E}_i \coloneqq \{ j \in N_n: E_{ij} = 1 \}$, which we call \textit{$i$'s interference set}.
Note that $i \notin \mathcal{E}_i$ because $\ell_{\bm{A}}(i, i) = 0$.
Write $\mu_{ji}^Y(z) \coloneqq \bbE[ Y_j \mid Z_i = z ]$ and $\bar \mu_{S_n}^{Y}(z ; \mathcal{E}) \coloneqq |S_n|^{-1} \sum_{i \in S_n} \sum_{j \in \mathcal{E}_i} \mu_{ji}^Y(z)$. 
Then, we define $\mathrm{AIEY}_{S_n} \coloneqq \bar \mu_{S_n}^Y(1; \mathcal{E}) - \bar \mu_{S_n}^Y(0; \mathcal{E})$.
Similarly, we define $\mu_{ji}^D(z) \coloneqq \bbE[ D_j \mid Z_i = z ]$, $\bar \mu_{S_n}^D(z; \mathcal{E}) \coloneqq |S_n|^{-1} \sum_{i \in S_n} \sum_{j \in \mathcal{E}_i} \mu_{ji}^D(z)$, and $\mathrm{AIED}_{S_n} \coloneqq \bar \mu_{S_n}^D(1; \mathcal{E}) - \bar \mu_{S_n}^D(0; \mathcal{E})$.

We will show later that these ITT estimands have clear causal interpretation as a certain weighted average of the effect of the IV on the outcome and as that on the treatment receipt.
However, as will be highlighted, the ITT parameters may underestimate the effects of the treatment.
To depart from the ITT analysis, we extend the notions of {\it compliers} and LATE to our setting.
For each given $\bm{z}_{-i} \in \{ 0, 1 \}^{n-1}$, let $\mathcal{C}_i(\bm{z}_{-i}) \coloneqq \mathbf{1}\{ D_i(1, \bm{z}_{-i}) = 1, D_i(0, \bm{z}_{-i}) = 0 \}$ indicate a complier who takes the treatment only when $Z_i = 1$ conditional on $\bm{Z}_{-i} = \bm{z}_{-i}$.
Notably, the compliance status may change with the others' IV values.
Denote the realized compliance status as $\mathcal{C}_i = \mathcal{C}_i(\bm{Z}_{-i})$.
Letting $\pi_i(\bm{z}_{-i}, t) \coloneqq \Pr[ \bm{Z}_{-i} = \bm{z}_{-i} \mid T_i = t ]$, the expected compliance status conditional on $T_i = t$ is $\bbE[ \mathcal{C}_i \mid T_i = t] 
= \sum_{\bm{z}_{-i} \in \{ 0, 1 \}^{n-1}} \mathcal{C}_i(\bm{z}_{-i}) \pi_i(\bm{z}_{-i}, t)$.
Then, the LADE is defined by the weighted average of $y_i(1, \bm{z}_{-i}) - y_i(0, \bm{z}_{-i})$ over the compliers:
\begin{align*}
	\mathrm{LADE}_{S_n}(t)
	& \coloneqq \sum_{i \in S_n} \sum_{\bm{z}_{-i} \in \{0,1\}^{n-1}} \left\{ y_i(1, \bm{z}_{-i}) - y_i(0, \bm{z}_{-i}) \right\} \frac{\mathcal{C}_i(\bm{z}_{-i}) \pi_i(\bm{z}_{-i}, t)}{\sum_{i \in S_n} \bbE[ \mathcal{C}_i \mid T_i = t]}.
\end{align*}
Similarly, noting that $\bbE[ \mathcal{C}_i ] = \sum_{\bm{z}_{-i} \in \{ 0, 1 \}^{n-1}} \mathcal{C}_i(\bm{z}_{-i}) \pi_i(\bm{z}_{-i})$ with $\pi_i(\bm{z}_{-i}) \coloneqq \Pr[ \bm{Z}_{-i} = \bm{z}_{-i} ]$, the LAIE is defined by the weighted average of $\sum_{j \in \mathcal{E}_i} \{ y_j(Z_i = 1, \bm{Z}_{-i} = \bm{z}_{-i}) - y_j(Z_i = 0, \bm{Z}_{-i} = \bm{z}_{-i}) \}$ over the compliers:
\begin{align*}
	\mathrm{LAIE}_{S_n}
	& \coloneqq \sum_{i \in S_n} \sum_{\bm{z}_{-i} \in \{ 0, 1\}^{n-1}} \sum_{j \in \mathcal{E}_i} \{ y_j(Z_i = 1, \bm{Z}_{-i} = \bm{z}_{-i}) - y_j(Z_i = 0, \bm{Z}_{-i} = \bm{z}_{-i}) \} \frac{ \mathcal{C}_i(\bm{z}_{-i}) \pi_i(\bm{z}_{-i}) }{\sum_{i \in S_n} \bbE[ \mathcal{C}_i ] }.
\end{align*}

\begin{remark}[ADE parameters with a constant IEM] \label{remark:constantIEM1}
	The identification and estimation results presented below are still valid even when $T(i, \bm{z}_{-i})$ is constant for all $i$ and $\bm{z}_{-i}$.
	In this case, $\mathrm{ADEY}_{S_n}(t)$ reduces to $\mathrm{ADEY}_{S_n} \coloneqq \bar \mu_{S_n}^Y(1) - \bar \mu_{S_n}^Y(0)$, where $\bar \mu_{S_n}^{Y}(z) \coloneqq |S_n|^{-1} \sum_{i \in S_n} \mu_i^Y(z)$ with $\mu_i^Y(z) \coloneqq \bbE[ Y_i \mid Z_i = z ]$.
	Analogously, $\mathrm{ADED}_{S_n}(t) = \mathrm{ADED}_{S_n} \coloneqq \bar \mu_{S_n}^D(1) - \bar \mu_{S_n}^D(0)$, where $\bar \mu_{S_n}^{D}(z) \coloneqq |S_n|^{-1} \sum_{i \in S_n} \mu_i^D(z)$, and $\mu_i^D(z) \coloneqq \bbE[ D_i \mid Z_i = z ]$.
	Further, $\mathrm{LADE}_{S_n}(t)$ reduces to
	\begin{align*}
		\mathrm{LADE}_{S_n}
		\coloneqq \sum_{i \in S_n} \sum_{\bm{z}_{-i} \in \{ 0, 1 \}^{n-1}} \{ y_i(1, \bm{z}_{-i}) - y_i(0, \bm{z}_{-i}) \} \frac{ \mathcal{C}_i(\bm{z}_{-i}) \pi_i(\bm{z}_{-i}) }{ \sum_{i \in S_n} \bbE[ \mathcal{C}_i ] }.
	\end{align*}
	These parameters are natural extensions of those in \cite{hu2022average} to the case of noncompliance.
	Compared to the ADE parameters discussed above, the parameters defined here do not depend on the instrumental exposure explicitly.
	Thus, the latter parameters would be easier to interpret than the former with incorrect IEM.
	Meanwhile, if a correct IEM is used, the former parameters are more informative for understanding the treatment effect heterogeneity for the exposure level.
\end{remark}

\subsection{Average direct effects} \label{subsec:direct1}

\subsubsection{Intention-to-treat analysis} \label{subsec:ITT1}

The following proposition presents the causal interpretation of $\mathrm{ADEY}_{S_n}(t)$ and $\mathrm{ADED}_{S_n}(t)$.

\begin{proposition} \label{prop:ITT}
	Under Assumption \ref{as:independence},
	\begin{align*}
		\mathrm{ADEY}_{S_n}(t) 
		& = \frac{1}{|S_n|} \sum_{i \in S_n} \sum_{\bm{z}_{-i} \in \{0, 1\}^{n-1}} \{ y_i(1, \bm{z}_{-i}) - y_i(0, \bm{z}_{-i}) \} \pi_i(\bm{z}_{-i}, t), \\
		\mathrm{ADED}_{S_n}(t) 
		& = \frac{1}{|S_n|} \sum_{i \in S_n} \sum_{\bm{z}_{-i} \in \{0, 1\}^{n-1}} \{ D_i(1, \bm{z}_{-i}) - D_i(0, \bm{z}_{-i}) \} \pi_i(\bm{z}_{-i}, t).
	\end{align*}
\end{proposition}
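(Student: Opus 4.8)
The plan is to establish the two identities unit by unit: for each $i\in S_n$ (with $t$ in the support of $T_i$, so that the conditioning event has positive probability), I would show
\[
\mu_i^Y(z,t)=\sum_{\bm{z}_{-i}\in\{0,1\}^{n-1}} y_i(z,\bm{z}_{-i})\,\pi_i(\bm{z}_{-i},t),\qquad
\mu_i^D(z,t)=\sum_{\bm{z}_{-i}\in\{0,1\}^{n-1}} D_i(z,\bm{z}_{-i})\,\pi_i(\bm{z}_{-i},t),
\]
for $z\in\{0,1\}$. Taking the difference between $z=1$ and $z=0$, then averaging over $i\in S_n$ with weight $|S_n|^{-1}$, yields exactly the two displayed formulas for $\mathrm{ADEY}_{S_n}(t)$ and $\mathrm{ADED}_{S_n}(t)$.

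To prove the reduction for $\mu_i^Y$, I would start from the design-based representation recalled just after its definition, $\mu_i^Y(z,t)=\sum_{\bm{z}\in\mathcal{Z}_n}\Pr[\bm{Z}=\bm{z}\mid Z_i=z,T_i=t]\,y_i(\bm{z})$. Writing $\bm{z}=(z_i,\bm{z}_{-i})$ and using $y_i(\bm{z})=y_i(z_i,\bm{z}_{-i})$, the event $\{\bm{Z}=(z_i,\bm{z}_{-i})\}$ is incompatible with $\{Z_i=z\}$ unless $z_i=z$, so only the $z_i=z$ terms survive and $\mu_i^Y(z,t)=\sum_{\bm{z}_{-i}}\Pr[\bm{Z}_{-i}=\bm{z}_{-i}\mid Z_i=z,T_i=t]\,y_i(z,\bm{z}_{-i})$. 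The key step is to drop the conditioning on $Z_i=z$: since the IEM is a function of $\bm{Z}_{-i}$ and $\bm{A}$ only (it does not depend on $Z_i$), $T_i=T(i,\bm{Z}_{-i},\bm{A})$ is $\sigma(\bm{Z}_{-i})$-measurable; and under Assumption \ref{as:independence} the coordinates of $\bm{Z}$ are mutually independent, so $Z_i$ is independent of $\bm{Z}_{-i}$ and hence of the pair $(\bm{Z}_{-i},T_i)$. Therefore $\Pr[\bm{Z}_{-i}=\bm{z}_{-i}\mid Z_i=z,T_i=t]=\Pr[\bm{Z}_{-i}=\bm{z}_{-i}\mid T_i=t]=\pi_i(\bm{z}_{-i},t)$, which gives the claimed expression. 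The argument for $\mu_i^D$ is verbatim the same after replacing $Y_i$ by $D_i$ and $y_i(z_i,\bm{z}_{-i})$ by $D_i(z_i,\bm{z}_{-i})$, using $D_i=D_i(\bm{Z})$ and $D_i(\bm{z})=D_i(z_i,\bm{z}_{-i})$; note that no appeal to the exclusion restriction (Assumption \ref{as:exclusion}) is needed here.

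There is no serious obstacle in this proof — it is essentially an unpacking of the conditional expectation combined with one independence argument. The only points I would be careful about are: (i) ensuring the conditioning events $\{Z_i=z,T_i=t\}$ and $\{T_i=t\}$ have positive probability, which I treat as implicit in restricting $t$ to the support of $T_i$; and (ii) stressing that the result invokes only Assumption \ref{as:independence} together with the structural fact that $T$ does not depend on $Z_i$ — in particular, correct specification of the IEM is never used. This last point is precisely what makes the ITT estimands $\mathrm{ADEY}_{S_n}(t)$ and $\mathrm{ADED}_{S_n}(t)$ retain a clean causal interpretation as exposure-weighted averages of the unit-level effects $y_i(1,\bm{z}_{-i})-y_i(0,\bm{z}_{-i})$ and $D_i(1,\bm{z}_{-i})-D_i(0,\bm{z}_{-i})$ even under a misspecified IEM.
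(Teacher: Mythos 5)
Your proposal is correct and follows essentially the same route as the paper: write $Y_i$ (resp.\ $D_i$) as a sum over potential values, reduce $\mu_i^Y(z,t)$ to $\sum_{\bm{z}_{-i}} y_i(z,\bm{z}_{-i}) \Pr[\bm{Z}_{-i}=\bm{z}_{-i}\mid Z_i=z,T_i=t]$, and use Assumption \ref{as:independence} together with $T_i=T(i,\bm{Z}_{-i},\bm{A})$ to replace that conditional probability by $\pi_i(\bm{z}_{-i},t)$ before differencing and averaging. Your explicit justification of the independence step (that $Z_i$ is independent of $(\bm{Z}_{-i},T_i)$ since $T_i$ is $\sigma(\bm{Z}_{-i})$-measurable) is exactly the reasoning the paper invokes implicitly.
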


Proposition \ref{prop:ITT} shows that $\mathrm{ADEY}_{S_n}(t)$ and $\mathrm{ADED}_{S_n}(t)$ have clear causal interpretation as the weighted average of $y_i(1, \bm{z}_{-i}) - y_i(0, \bm{z}_{-i})$ and as that of $D_i(1, \bm{z}_{-i}) - D_i(0, \bm{z}_{-i})$, respectively.
If we additionally impose Assumption \ref{as:exclusion}, the result for $\mathrm{ADEY}_{S_n}(t)$ can be further well interpreted.
To see this, let $\bm{D}_{-i}(Z_i = z_i, \bm{Z}_{-i} = \bm{z}_{-i}) = (D_k(Z_i = z_i, \bm{Z}_{-i} = \bm{z}_{-i}))_{k \neq i}$.
By the definition of $y_i(z_i, \bm{z}_{-i})$ and Assumption \ref{as:exclusion}, we can observe that
\begin{align*}
	& y_i(1, \bm{z}_{-i}) - y_i(0, \bm{z}_{-i}) \\
	& = \underset{\text{effect of changing $i$'s treatment through $i$'s IV}}{Y_i(D_i(1, \bm{z}_{-i}), \bm{D}_{-i}(Z_i = 1, \bm{Z}_{-i} = \bm{z}_{-i})) - Y_i(D_i(0, \bm{z}_{-i}), \bm{D}_{-i}(Z_i = 1, \bm{Z}_{-i} = \bm{z}_{-i}))} \\
	& \quad + \underset{\text{effect of changing others' treatments through $i$'s IV}}{Y_i(D_i(0, \bm{z}_{-i}), \bm{D}_{-i}(Z_i = 1, \bm{Z}_{-i} = \bm{z}_{-i})) - Y_i(D_i(0, \bm{z}_{-i}), \bm{D}_{-i}(Z_i = 0, \bm{Z}_{-i} = \bm{z}_{-i}))}.
\end{align*}
That is, $y_i(1, \bm{z}_{-i}) - y_i(0, \bm{z}_{-i})$ comprises of the direct effect of changing $i$'s own treatment status from $D_i(0, \bm{z}_{-i})$ to $D_i(1, \bm{z}_{-i})$ and the spillover effect by changing the others' treatments from $\bm{D}_{-i}(Z_i = 0, \bm{Z}_{-i} = \bm{z}_{-i})$ to $\bm{D}_{-i}(Z_i = 1, \bm{Z}_{-i} = \bm{z}_{-i})$.
Hence, Proposition \ref{prop:ITT} can be read as that $\mathrm{ADEY}_{S_n}(t)$ consists of the sum of the average direct effect from own IV and the average spillover effect caused by changing the unit's own IV.

Proposition \ref{prop:ITT} is also useful for interpreting the ADEs obtained with different IEMs.
For example, suppose that two IEMs $T$ and $T'$ generate the same ADEY value at $t$ and at $t'$, respectively:
\begin{align*}
	0 & = \mathrm{ADEY}_{S_n}(t)|_{\text{IEM}=T} -   \mathrm{ADEY}_{S_n}(t')|_{\text{IEM}=T'} \\ 
	& = \frac{1}{|S_n|} \sum_{i \in S_n} \sum_{\bm{z}_{-i} \in \{0, 1\}^{n-1}} \{ y_i(1, \bm{z}_{-i}) - y_i(0, \bm{z}_{-i}) \} \cdot \{\pi_i(\bm{z}_{-i}, t) - \pi_i'(\bm{z}_{-i}, t') \}.
\end{align*}
Thus, if this equality holds, it is a strong indication that $y_i(1, \bm{z}_{-i}) - y_i(0, \bm{z}_{-i})$ is homogeneous with respect to $\bm{z}_{-i}$ for all individuals.
Notably, if it is indeed homogeneous, the above equality must hold for any combination of IEMs, which is testable from the data.

\subsubsection{Local average direct effect} \label{subsec:LADE}

The following two conditions are analogous to the IV relevance condition and the monotonicity condition for the standard LATE estimation without interference.

\begin{assumption}[Relevance 1]\label{as:relevance1}
	$|S_n|^{-1} \sum_{i \in S_n} \bbE[ \mathcal{C}_i \mid T_i  = t ] \ge c$ for a constant $c > 0$.
\end{assumption}

\begin{assumption}[Monotonicity 1]\label{as:monotone1}
	$D_i(1, \bm{z}_{-i}) \ge D_i(0, \bm{z}_{-i})$ for all $i \in S_n$ and $\bm{z}_{-i} \in \{0, 1\}^{n - 1}$ such that $\pi_i(\bm{z}_{-i}, t) > 0$.
\end{assumption}

Assumption \ref{as:relevance1} states that there is a non-negligible proportion of units among those with $T_i = t$ whose treatment status is positively affected by the IV.
The assumption is necessary to well-define the LADE.
Assumption \ref{as:monotone1} requires that there do not exist {\it defiers}, whose treatment status is negatively affected by the IV.
This assumption limits the heterogeneity in treatment choice in that the response to the IV must be uniform.
For instance, the treatment choice equation in Example \ref{example:latent1} satisfies the condition if $\gamma_{1i} \ge 0$ for all $i$.
Note that these assumptions do not have to be fulfilled uniformly in $t \in \mathcal{T}$ as the LADE of interest is conditioned on $T_i = t$ at a given $t$.

Under Assumption \ref{as:monotone1}, conditional on $\bm{Z}_{-i}$, each individual $i$ can be classified into one of the following three latent types: {\it compliers}, {\it always takers} (those who always take the treatment), and {\it never takers} (those who never take the treatment).
More precisely, we define the indicator for always takers as $\mathcal{AT}_i = \mathcal{AT}_i(\bm{Z}_{-i}) \coloneqq \bm{1}\{ D_i(1, \bm{Z}_{-i}) = D_i(0, \bm{Z}_{-i}) = 1 \}$.
Similarly, the indicator for never takers is $\mathcal{NT}_i = \mathcal{NT}_i(\bm{Z}_{-i}) \coloneqq \bm{1}\{ D_i(1, \bm{Z}_{-i}) = D_i(0, \bm{Z}_{-i}) = 0 \}$.
Under Assumption \ref{as:monotone1}, only one of $\mathcal{C}_i$, $\mathcal{AT}_i$, and $\mathcal{NT}_i$ equals one for each $i \in S_n$.\footnote{
	While $\mathcal{C}_i$, $\mathcal{AT}_i$, and $\mathcal{NT}_i$ are determined from $D_i(1, \bm{Z}_{-i})$ and $D_i(0, \bm{Z}_{-i})$ conditional on $\bm{Z}_{-i}$, it is possible to more finely classify the units by incorporating other potential treatment responses $\{ D_i(\bm{z}) \}_{\bm{z} \in \mathcal{Z}_n}$. 
	As such an example, \cite{vazquez2023causal} classifies the units into always takers, never takers, compliers, \textit{social-interaction compliers}, and \textit{group compliers}.
	See Appendix F for further discussion.
}

Unlike conventional identification results without interference, the set of the exclusion restriction, relevance condition, and monotonicity condition does not suffice to identify the LADE.
This is because we need to account for two potential interference channels at the same time: one is the spillover effect of the IV on the treatment receipt, and the other is the spillover effect of the treatment on the outcome.
As in the conventional method, we use the variation in the instrumental value to identify the LADE, but in the present situation, the effect of shifting the IV can be amplified in two steps by the two different spillovers.
Therefore, to facilitate the identification of the LADE, some additional restriction on the interference structure is needed.
In this study, similar to \cite{imai2020causal}, we require the potential outcome $y_i(z_i, \bm{z}_{-i})$ of noncompliers to be insensitive to their own instrumental value $z_i$.

\begin{assumption}[Restricted interference 1]\label{as:restrict1}
	For all $i \in S_n$ and $\bm{z}_{-i} \in \{ 0, 1 \}^{n - 1}$ such that $\pi_i(\bm{z}_{-i}, t) > 0$, $y_i(1, \bm{z}_{-i}) = y_i(0, \bm{z}_{-i})$ holds whenever $D_i(1, \bm{z}_{-i}) = D_i(0, \bm{z}_{-i})$.
\end{assumption}

Here, we provide three empirically relevant sufficient conditions for this assumption.
The first condition is no spillovers between the IV and treatment choice:
\begin{align} \label{eq:sufficient1}
	D_i(z_i, \bm{z}_{-i}) = D_i(z_i, \bm{z}_{-i}')
	\quad
	\text{for any $z_i \in \{ 0, 1 \}$ and $\bm{z}_{-i}, \bm{z}_{-i}' \in \{ 0, 1 \}^{n-1}$}.
\end{align}
This corresponds to the {\it personalized encouragement} assumption of \cite{kang2016peer}, which states that an incentive to take the treatment must be personalized to everyone.
Under this condition, we can define the potential treatment status as $D_i(z_i) = D_i(z_i, \bm{z}_{-i})$. 
Then, the potential outcome satisfies $y_i(z_i, \bm{z}_{-i}) = Y_i(D_i(z_i), (D_k(z_k))_{k \neq i})$, implying Assumption \ref{as:restrict1}.

The second situation in which Assumption \ref{as:restrict1} holds is when there is no treatment spillover effect on the outcome; that is,
\begin{align} \label{eq:sufficient2}
	Y_i(d_i, \bm{d}_{-i}) = Y_i(d_i, \bm{d}_{-i}')
	\quad
	\text{for any $d_i \in \{ 0, 1 \}$ and $\bm{d}_{-i}, \bm{d}_{-i}' \in \{ 0, 1 \}^{n-1}$}.
\end{align}
Then, we may write the potential outcome given $D_i = d_i$ as $Y_i(d_i)$.
It is easy to see that \eqref{eq:sufficient2} implies Assumption \ref{as:restrict1}.

The third sufficient condition for Assumption \ref{as:restrict1} is that the IV of any noncomplier does not affect the treatment status of all other units; specifically, for any $\bm{z}_{-i} \in \{ 0, 1 \}^{n-1}$,
\begin{align} \label{eq:sufficient3}
	\bm{D}_{-i}(Z_i = 1, \bm{Z}_{-i} = \bm{z}_{-i}) = \bm{D}_{-i}(Z_i = 0, \bm{Z}_{-i} = \bm{z}_{-i})
	\quad
	\text{whenever $\mathcal{C}_i(\bm{z}_{-i}) = 0$}.
\end{align}
If this condition holds, the potential outcome of unit $i$ with $D_i(1, \bm{z}_{-i}) = D_i(0, \bm{z}_{-i})$ satisfies $y_i(1, \bm{z}_{-i}) = y_i(0, \bm{z}_{-i})$, which implies Assumption \ref{as:restrict1}.

Although it is difficult to directly verify conditions \eqref{eq:sufficient1} -- \eqref{eq:sufficient3} from data alone, they have different testable implications (see Appendix F).
There would also be cases where the experimental design suggests which is more likely to hold than the others.
For example, consider the empirical analysis of the anti-conflict intervention program, where $Y_i$ is an outcome variable relating to anti-conflict norms and behaviors, $D_i$ is an indicator for participating in the program, and $Z_i$ indicates whether receiving an invitation for the program.
In this experiment, the participation in the program was not compulsory and the students without an invitation were not able to attend.
Thus, there are only compliers and never takers in this empirical example, and joining in the intervention program means that the student is a complier.
It is plausible to imagine that the never takers were unlikely to affect the participation of others, as the never takers would be those who were not interested in the program.
Thus, the sufficient condition in \eqref{eq:sufficient3} would be met here.

The causal interpretation of the LADE can be different depending on which sufficient condition the researcher considers for Assumption \ref{as:restrict1} to hold.
To see this, when $i$ is a complier, we have
\begin{align*}
	y_i(1, \bm{z}_{-i}) - y_i(0, \bm{z}_{-i})
	& = \underset{\text{effect of changing $i$'s treatment}}{Y_i(1, \bm{D}_{-i}(Z_i = 1, \bm{Z}_{-i} = \bm{z}_{-i})) - Y_i(0, \bm{D}_{-i}(Z_i = 1, \bm{Z}_{-i} = \bm{z}_{-i}))} \\
	& \quad + \underset{\text{effect of changing others' treatments through $i$'s IV}}{Y_i(0, \bm{D}_{-i}(Z_i = 1, \bm{Z}_{-i} = \bm{z}_{-i})) - Y_i(0, \bm{D}_{-i}(Z_i = 0, \bm{Z}_{-i} = \bm{z}_{-i}))}.
\end{align*}
Here, under \eqref{eq:sufficient1} or \eqref{eq:sufficient2}, the second line vanishes.
Therefore, the LADE purely captures the direct treatment effect for the compliers.
On the contrary, \eqref{eq:sufficient3} admits that the IV of a complier may affect the treatment status of others.
In this case, the second line is generally nonzero, and the LADE contains the average spillover effect for the compliers as well.

The following theorem shows that the LADE is identifiable from a Wald-type estimand.  

\begin{theorem} \label{thm:LADE}
	Under Assumptions \ref{as:independence} -- \ref{as:restrict1}, $\mathrm{LADE}_{S_n}(t) = \mathrm{ADEY}_{S_n}(t) / \mathrm{ADED}_{S_n}(t)$.
\end{theorem}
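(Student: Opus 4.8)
The plan is to start from the two identities for $\mathrm{ADEY}_{S_n}(t)$ and $\mathrm{ADED}_{S_n}(t)$ supplied by Proposition \ref{prop:ITT} and to rewrite the numerator and denominator of the Wald ratio $\mathrm{ADEY}_{S_n}(t)/\mathrm{ADED}_{S_n}(t)$ so that, term by term, they match the definition of $\mathrm{LADE}_{S_n}(t)$. Because Assumption \ref{as:independence} is already invoked inside Proposition \ref{prop:ITT}, the remaining work is purely to exploit Assumptions \ref{as:relevance1}--\ref{as:restrict1}, and the argument is essentially bookkeeping rather than a substantive estimation step.

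First I would handle the denominator. Under Assumption \ref{as:monotone1}, for every $i \in S_n$ and every $\bm{z}_{-i}$ with $\pi_i(\bm{z}_{-i},t) > 0$ we have $D_i(1,\bm{z}_{-i}) - D_i(0,\bm{z}_{-i}) \in \{0,1\}$, and this difference equals $1$ exactly when $i$ is a complier at $\bm{z}_{-i}$, i.e.\ it equals $\mathcal{C}_i(\bm{z}_{-i})$; summands with $\pi_i(\bm{z}_{-i},t) = 0$ drop out of the expression in Proposition \ref{prop:ITT}. Substituting this and using the identity $\bbE[\mathcal{C}_i \mid T_i = t] = \sum_{\bm{z}_{-i}} \mathcal{C}_i(\bm{z}_{-i})\pi_i(\bm{z}_{-i},t)$ recorded in Section \ref{subsec:parameters} yields $\mathrm{ADED}_{S_n}(t) = |S_n|^{-1}\sum_{i\in S_n}\bbE[\mathcal{C}_i\mid T_i=t]$, which Assumption \ref{as:relevance1} bounds below by $c>0$, so the ratio is well defined.

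Next I would treat the numerator. The key observation is that Assumption \ref{as:restrict1}, combined with Assumption \ref{as:monotone1}, forces $y_i(1,\bm{z}_{-i}) - y_i(0,\bm{z}_{-i}) = 0$ whenever $\mathcal{C}_i(\bm{z}_{-i}) = 0$ and $\pi_i(\bm{z}_{-i},t) > 0$, since under monotonicity $\mathcal{C}_i(\bm{z}_{-i}) = 0$ is equivalent to $D_i(1,\bm{z}_{-i}) = D_i(0,\bm{z}_{-i})$. Hence inserting the factor $\mathcal{C}_i(\bm{z}_{-i})$ into each summand of the Proposition \ref{prop:ITT} expression for $\mathrm{ADEY}_{S_n}(t)$ leaves it unchanged, giving $\mathrm{ADEY}_{S_n}(t) = |S_n|^{-1}\sum_{i\in S_n}\sum_{\bm{z}_{-i}}\{y_i(1,\bm{z}_{-i}) - y_i(0,\bm{z}_{-i})\}\mathcal{C}_i(\bm{z}_{-i})\pi_i(\bm{z}_{-i},t)$. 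Dividing this by the denominator expression from the previous step, the $|S_n|^{-1}$ factors cancel and what remains is verbatim the definition of $\mathrm{LADE}_{S_n}(t)$, which completes the proof.

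I do not anticipate a genuine obstacle. The only points requiring care are: (i) keeping track of the qualifier ``$\bm{z}_{-i}$ such that $\pi_i(\bm{z}_{-i},t) > 0$'' under which Assumptions \ref{as:monotone1} and \ref{as:restrict1} are stated, and observing that summands with $\pi_i(\bm{z}_{-i},t) = 0$ contribute nothing so that the rewrites are valid over the full sum; and (ii) making sure the denominator identity $\bbE[\mathcal{C}_i\mid T_i=t] = \sum_{\bm{z}_{-i}} \mathcal{C}_i(\bm{z}_{-i})\pi_i(\bm{z}_{-i},t)$ invoked here is exactly the one used in the definition of $\mathrm{LADE}_{S_n}(t)$, so that the numerator and denominator are genuinely the same objects appearing there.
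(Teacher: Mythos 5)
Your proposal is correct and follows essentially the same route as the paper's proof: the paper likewise uses independence to write $\mu_i^D(z,t)$ and $\mu_i^Y(z,t)$ as $\pi_i(\bm{z}_{-i},t)$-weighted sums (the content of Proposition \ref{prop:ITT}), splits each sum by whether $D_i(1,\bm{z}_{-i})=D_i(0,\bm{z}_{-i})$, and invokes Assumption \ref{as:monotone1} to identify the denominator with $|S_n|^{-1}\sum_{i\in S_n}\bbE[\mathcal{C}_i\mid T_i=t]$ and Assumptions \ref{as:monotone1} and \ref{as:restrict1} to kill the noncomplier terms in the numerator, before applying Assumption \ref{as:relevance1} to form the ratio. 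Your added care about the qualifier $\pi_i(\bm{z}_{-i},t)>0$ is a harmless refinement of the same argument.
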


\begin{remark}[Failure of Assumption \ref{as:restrict1}]
	The Wald-type estimand does not generally admit valid causal interpretation without Assumption \ref{as:restrict1} due to potential spillovers through noncompliers (see Appendix A.3 for details).
	The assumption fails if a noncomplier's IV affects the treatment choices of other units, which further influence on her own outcome.
	In our empirical setting, this occurs when students who are not interested in the anti-conflict campaign discourage other students from participating in the program, and this further reinforces their negative attitudes toward the school climate.
\end{remark}

\begin{remark}[\cite{imai2020causal}] \label{remark:imaietal}
	The IV relevance condition, monotonicity condition, and restricted interference in Assumptions \ref{as:relevance1} -- \ref{as:restrict1} are essentially the same as those in \cite{imai2020causal}.
	Consequently, Theorem \ref{thm:LADE} follows from nearly the same argument as in Theorem 2(1) of \cite{imai2020causal}.
	However, the two papers have two notable differences in terms of the identification of ADEs.
	First, we impose the mutual independence of IVs in Assumption \ref{as:independence}, while \cite{imai2020causal} focus on a two-stage randomized experiment.
	Second and more importantly, the two papers consider different network structures and asymptotic frameworks to establish the estimability for the target parameters.
	Our paper achieves this by showing that $\mathrm{ADEY}_{S_n}(t)$ and $\mathrm{ADED}_{S_n}(t)$ can be consistently estimated under the ANI framework on a large network and certain restrictions on the denseness of the network (see Section \ref{subsec:asymptotics}).
	By contrast, \cite{imai2020causal} prove the consistency of their estimators in the setting of partial interference assuming that both the cluster size and number of clusters grow to infinity.
\end{remark}

\subsection{Average indirect effects} \label{subsec:indirect1}

\subsubsection{Intention-to-treat analysis} \label{subsec:ITT2}

The following proposition presents the causal interpretation of $\mathrm{AIEY}_{S_n}$ and $\mathrm{AIED}_{S_n}$.

\begin{proposition} \label{prop:AIE_ITT}
	Under Assumption \ref{as:independence},
	\begin{align*}
		\mathrm{AIEY}_{S_n}
		& = \frac{1}{|S_n|} \sum_{i \in S_n} \sum_{\bm{z}_{-i} \in \{ 0, 1 \}^{n-1}} \sum_{j \in \mathcal{E}_i} \{ y_j(Z_i = 1, \bm{Z}_{-i} = \bm{z}_{-i}) - y_j(Z_i = 0, \bm{Z}_{-i} = \bm{z}_{-i}) \} \pi_i(\bm{z}_{-i}), \\
		\mathrm{AIED}_{S_n}
		& = \frac{1}{|S_n|} \sum_{i \in S_n} \sum_{\bm{z}_{-i} \in \{ 0, 1 \}^{n-1}} \sum_{j \in \mathcal{E}_i} \{ D_j(Z_i = 1, \bm{Z}_{-i} = \bm{z}_{-i}) - D_j(Z_i = 0, \bm{Z}_{-i} = \bm{z}_{-i}) \} \pi_i(\bm{z}_{-i}).
	\end{align*}
\end{proposition}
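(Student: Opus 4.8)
The plan is to mirror the proof of Proposition~\ref{prop:ITT}: expand each conditional mean over the support of $\bm{Z}_{-i}$ and invoke Assumption~\ref{as:independence} to strip the conditioning on $Z_i$ out of the resulting weights. As in Proposition~\ref{prop:ITT}, the exclusion restriction plays no role here; only Assumption~\ref{as:independence} is used.

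First I would fix $i \in S_n$ and $j \in \mathcal{E}_i$. Since the setup is design-based with $Y_j = y_j(\bm{Z})$ a fixed function of the random assignment vector, and presuming $\Pr[Z_i = z] > 0$ so that the conditional law is well-defined,
\[
	\mu_{ji}^Y(z) = \bbE[Y_j \mid Z_i = z] = \sum_{\bm{z}_{-i} \in \{0,1\}^{n-1}} y_j(Z_i = z, \bm{Z}_{-i} = \bm{z}_{-i}) \, \Pr[\bm{Z}_{-i} = \bm{z}_{-i} \mid Z_i = z].
\]
By Assumption~\ref{as:independence}, $Z_i$ and $\bm{Z}_{-i}$ are independent, so $\Pr[\bm{Z}_{-i} = \bm{z}_{-i} \mid Z_i = z] = \Pr[\bm{Z}_{-i} = \bm{z}_{-i}] = \pi_i(\bm{z}_{-i})$. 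Taking the difference at $z = 1$ and $z = 0$ gives
\[
	\mu_{ji}^Y(1) - \mu_{ji}^Y(0) = \sum_{\bm{z}_{-i} \in \{0,1\}^{n-1}} \left\{ y_j(Z_i = 1, \bm{Z}_{-i} = \bm{z}_{-i}) - y_j(Z_i = 0, \bm{Z}_{-i} = \bm{z}_{-i}) \right\} \pi_i(\bm{z}_{-i}).
\]
Substituting this into $\mathrm{AIEY}_{S_n} = |S_n|^{-1} \sum_{i \in S_n} \sum_{j \in \mathcal{E}_i} \{ \mu_{ji}^Y(1) - \mu_{ji}^Y(0) \}$ and swapping the two finite sums over $j \in \mathcal{E}_i$ and over $\bm{z}_{-i}$ yields the claimed identity. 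The argument for $\mathrm{AIED}_{S_n}$ is verbatim the same, with $y_j$ replaced throughout by the potential treatment $D_j(Z_i = z, \bm{Z}_{-i} = \bm{z}_{-i})$, using $D_j = D_j(\bm{Z})$.

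I do not expect a genuine obstacle: the proof is essentially bookkeeping built on the law of total expectation and the mutual independence of $\{Z_i\}_{i \in N_n}$. The only points that require minor care are the implicit overlap condition that $Z_i$ takes both values with positive probability (so that $\mu_{ji}^Y$ and $\mu_{ji}^D$ are defined), the harmless fact that terms with $\pi_i(\bm{z}_{-i}) = 0$ simply drop out, and keeping the cross-unit notation $y_j(Z_i = z, \bm{Z}_{-i} = \bm{z}_{-i})$ with $j \neq i$ consistent with its earlier definition.
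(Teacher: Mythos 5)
Your proof is correct and follows exactly the paper's argument: write $Y_j = y_j(\bm{Z})$, expand $\mu_{ji}^Y(z)$ over the support of $\bm{Z}_{-i}$, and use Assumption \ref{as:independence} to replace $\Pr[\bm{Z}_{-i} = \bm{z}_{-i} \mid Z_i = z]$ with $\pi_i(\bm{z}_{-i})$ before differencing and averaging, with the same argument repeated for the treatment. No gaps; the paper's proof is the same bookkeeping.
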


In view of Proposition \ref{prop:AIE_ITT}, these estimands measure the weighted average of the effect of $i$'s IV on the sum of $j$'s ($j \in \mathcal{E}_i$) outcomes and that on the sum of $j$'s treatments with the weight equal to $\pi_i(\bm{z}_{-i})$.
Furthermore, under Assumption \ref{as:exclusion}, we can see that $\mathrm{AIEY}_{S_n}$ captures both the effect of changing $i$'s treatment through $i$'s IV on the sum of $j$'s outcomes and that of changing others' treatments (including $j$'s treatment) .
This result immediately follows from the same arguments as in the previous subsection, once we notice that
\begin{align} \label{eq:yjpotential}
	y_j(Z_i = z_i, \bm{Z}_{-i} = \bm{z}_{-i})
	= Y_j(D_i(z_i, \bm{z}_{-i}), \bm{D}_{-i}(Z_i = z_i, \bm{Z}_{-i} = \bm{z}_{-i})).
\end{align}

While these ITT estimands help infer the spillover effects even with a misspecified IEM, whether they can correctly capture the spillover effects from distant units depends on the correctness of the selected IEM.
To see this, let $T_i^*$ denote the true instrumental exposure defined by $j$'s such that $1 \le \ell_{\bm{A}}(i, j) \le K^*$ with some $K^*$.
We write $i$'s interference set based on $T_i^*$ as $\mathcal{E}_i^* \coloneqq \{ j \in N_n : E_{ij}^* = 1 \}$.
If $K < K^*$ so that $\mathcal{E}_i \subset \mathcal{E}_i^*$ for some $i \in S_n$, there are some spillover effects that are not accounted for by the AIEY under $K$.
Meanwhile, if $K \ge K^*$ so that $\mathcal{E}_i \supseteq \mathcal{E}_i^*$ for all $i \in S_n$, the AIEY defined by $K$ coincides with the one defined by $K^*$.
This result immediately follows from Proposition \ref{prop:AIE_ITT} and the fact that $y_j(Z_i = 1, \bm{Z}_{-i} = \bm{z}_{-i}) = y_j(Z_i = 0, \bm{Z}_{-i} = \bm{z}_{-i})$ for all $j \notin \mathcal{E}_i^*$.
Considering this, one might expect that choosing a large $K$ is practically desirable.
However, a large $K$ implies a strong network dependence between distant units, and might deteriorate the estimation precision (see Lemma B.7 for a related result).
Studying how to balance such a trade-off is beyond the scope of this paper.

\subsubsection{Local average indirect effect} \label{subsec:LAIE}

We introduce the following three assumptions, which are parallel to Assumptions \ref{as:relevance1}, \ref{as:monotone1}, and \ref{as:restrict1}, respectively.

\begin{assumption}[Relevance 2]\label{as:relevance2}
	$|S_n|^{-1} \sum_{i \in S_n} \bbE[ \mathcal{C}_i ] \ge c$ for a constant $c > 0$.
\end{assumption}

\begin{assumption}[Monotonicity 2]\label{as:monotone2}
	$D_i(1, \bm{z}_{-i}) \ge D_i(0, \bm{z}_{-i})$ for all $i \in S_n$ and $\bm{z}_{-i} \in \{0, 1\}^{n - 1}$ such that $\pi_i(\bm{z}_{-i}) > 0$.
\end{assumption}

\begin{assumption}[Restricted interference 2] \label{as:restrict2}
	For all $i \in S_n$, $j \in \mathcal{E}_i$, and $\bm{z}_{-i} \in \{ 0, 1 \}^{n-1}$ such that $\pi_i(\bm{z}_{-i}) > 0$, $y_j(Z_i = 1, \bm{Z}_{-i} = \bm{z}_{-i}) = y_j(Z_i = 0, \bm{Z}_{-i} = \bm{z}_{-i})$ holds whenever $D_i(1, \bm{z}_{-i}) = D_i(0, \bm{z}_{-i})$.
\end{assumption}

As in Assumption \ref{as:restrict1}, Assumption \ref{as:restrict2} restricts the interference structure, but they are different in that the latter limits (not $i$'s own but) $j$'s outcome value when $i$ is a noncomplier.
In the same manner as in the previous subsection, we can see that \eqref{eq:sufficient1} or \eqref{eq:sufficient3} fulfills Assumption \ref{as:restrict2}.
Meanwhile, \eqref{eq:sufficient2} is not sufficient for Assumption \ref{as:restrict2}.
The causal interpretation of the LAIE varies with which sufficient condition holds for Assumption \ref{as:restrict2}.
Specifically, for a complier $i$, \eqref{eq:yjpotential} leads to the following decomposition:
\begin{align*}
	& y_j(Z_i = 1, \bm{Z}_{-i} = \bm{z}_{-i}) - y_j(Z_i = 0, \bm{Z}_{-i} = \bm{z}_{-i}) \\
	& = \underset{\text{effect of changing $i$'s treatment}}{ Y_j(D_i = 1, \bm{D}_{-i} = \bm{D}_{-i}(Z_i = 1, \bm{Z}_{-i} = \bm{z}_{-i})) - Y_j(D_i = 0, \bm{D}_{-i} = \bm{D}_{-i}(Z_i = 1, \bm{Z}_{-i} = \bm{z}_{-i})) } \\
	& \quad + \underset{\text{effect of changing others' treatments through $i$'s IV}}{ Y_j(D_i = 0, \bm{D}_{-i} = \bm{D}_{-i}(Z_i = 1, \bm{Z}_{-i} = \bm{z}_{-i})) - Y_j(D_i = 0, \bm{D}_{-i} = \bm{D}_{-i}(Z_i = 0, \bm{Z}_{-i} = \bm{z}_{-i})) }.
\end{align*}
Under \eqref{eq:sufficient1}, the second line vanishes, and the LAIE captures the average effect of complier $i$'s treatment on the sum of $j$'s outcomes.
By contrast, the second line remains in the case of \eqref{eq:sufficient3}, and the LAIE recovers the sum of the average effect of complier $i$'s treatment on the others' outcomes and that of others' treatments caused by changing $i$'s IV.

The next theorem shows that the LAIE is identifiable from a Wald-type estimand.

\begin{theorem} \label{thm:LAIE1}
	Under Assumptions \ref{as:independence} and \ref{as:relevance2} -- \ref{as:restrict2}, $\mathrm{LAIE}_{S_n} = \mathrm{AIEY}_{S_n} / \mathrm{ADED}_{S_n}$.
\end{theorem}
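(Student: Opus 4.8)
The plan is to reduce both sides of the claimed identity to explicit weighted sums over the pairs $(i,\bm{z}_{-i})$ using the intention-to-treat characterizations already available, and then to match them.

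First I would rewrite the numerator. By Proposition~\ref{prop:AIE_ITT}, under Assumption~\ref{as:independence},
\begin{align*}
\mathrm{AIEY}_{S_n} = \frac{1}{|S_n|}\sum_{i\in S_n}\sum_{\bm{z}_{-i}\in\{0,1\}^{n-1}}\sum_{j\in\mathcal{E}_i}\bigl\{ y_j(Z_i=1,\bm{Z}_{-i}=\bm{z}_{-i}) - y_j(Z_i=0,\bm{Z}_{-i}=\bm{z}_{-i}) \bigr\}\pi_i(\bm{z}_{-i}).
\end{align*}
The key observation is that for every $(i,\bm{z}_{-i})$ with $\pi_i(\bm{z}_{-i})>0$ the inner sum over $j\in\mathcal{E}_i$ vanishes whenever $\mathcal{C}_i(\bm{z}_{-i})=0$: Assumption~\ref{as:monotone2} forces $D_i(1,\bm{z}_{-i})\ge D_i(0,\bm{z}_{-i})$, so $\mathcal{C}_i(\bm{z}_{-i})=0$ is equivalent to $D_i(Z_i=1,\bm{Z}_{-i}=\bm{z}_{-i})=D_i(Z_i=0,\bm{Z}_{-i}=\bm{z}_{-i})$, and then Assumption~\ref{as:restrict2} yields $y_j(Z_i=1,\bm{Z}_{-i}=\bm{z}_{-i})=y_j(Z_i=0,\bm{Z}_{-i}=\bm{z}_{-i})$ for every $j\in\mathcal{E}_i$. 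Consequently I may insert the factor $\mathcal{C}_i(\bm{z}_{-i})$ into each summand without changing the value of the triple sum.

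Next I would evaluate the denominator. Using Proposition~\ref{prop:ITT} in its constant-IEM form (valid by Remark~\ref{remark:constantIEM1}), $\mathrm{ADED}_{S_n}=|S_n|^{-1}\sum_{i\in S_n}\sum_{\bm{z}_{-i}}\{D_i(1,\bm{z}_{-i})-D_i(0,\bm{z}_{-i})\}\pi_i(\bm{z}_{-i})$; Assumption~\ref{as:monotone2} turns the bracket into $\mathcal{C}_i(\bm{z}_{-i})$ on $\{\pi_i(\bm{z}_{-i})>0\}$, so $\mathrm{ADED}_{S_n}=|S_n|^{-1}\sum_{i\in S_n}\bbE[\mathcal{C}_i]$, which is bounded below by $c>0$ by Assumption~\ref{as:relevance2}. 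Comparing the re-weighted expression for $\mathrm{AIEY}_{S_n}$ obtained above with the definition of $\mathrm{LAIE}_{S_n}$, whose numerator equals $|S_n|\cdot\mathrm{AIEY}_{S_n}$ and whose denominator equals $\sum_{i\in S_n}\bbE[\mathcal{C}_i]=|S_n|\cdot\mathrm{ADED}_{S_n}$, dividing yields $\mathrm{LAIE}_{S_n}=\mathrm{AIEY}_{S_n}/\mathrm{ADED}_{S_n}$.

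I expect no serious obstacle here: once Propositions~\ref{prop:ITT} and~\ref{prop:AIE_ITT} are in hand the argument is essentially bookkeeping, paralleling the proof of Theorem~\ref{thm:LADE}. The one point requiring care is the insertion of the complier indicator in the first step — namely verifying that, under monotonicity, the hypothesis $D_i(Z_i=1,\bm{Z}_{-i}=\bm{z}_{-i})=D_i(Z_i=0,\bm{Z}_{-i}=\bm{z}_{-i})$ of Assumption~\ref{as:restrict2} coincides, on the support of $\pi_i(\cdot)$, with the non-complier event, so that restricted interference annihilates the whole inner sum over $\mathcal{E}_i$ for non-compliers rather than merely individual terms.
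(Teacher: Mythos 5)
Your argument is correct and essentially identical to the paper's proof: the paper likewise writes $\mathrm{ADED}_{S_n}=|S_n|^{-1}\sum_{i\in S_n}\bbE[\mathcal{C}_i]$ via monotonicity (reusing the Theorem \ref{thm:LADE} computation), and establishes the numerator by splitting the expectation over the events $D_i(1,\bm{Z}_{-i})\neq D_i(0,\bm{Z}_{-i})$ and $D_i(1,\bm{Z}_{-i})= D_i(0,\bm{Z}_{-i})$, killing the latter term with Assumptions \ref{as:monotone2} and \ref{as:restrict2} — exactly your insertion of the complier indicator. The point you flag as needing care (that under monotonicity, on the support of $\pi_i$, the hypothesis of Assumption \ref{as:restrict2} coincides with the non-complier event and annihilates the whole inner sum over $\mathcal{E}_i$) is handled the same way in the paper, so there is no gap.
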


\begin{remark}[Another Wald-type estimand]
	We can naturally think of another Wald-type estimand $\mathrm{AIEY}_{S_n} / \mathrm{AIED}_{S_n}$.
	A causal interpretation of this parameter can be derived but with a set of assumptions that appears somewhat restrictive in practice.
	See Appendix F for further discussion.
\end{remark}

\section{Estimation and Asymptotic Theory} \label{sec:estimation}

\subsection{Estimators} \label{subsec:estimator}

We consider the following data generating process (DGP):
\begin{assumption}[DGP] \label{as:dgp}
	(i) Assumption \ref{as:independence} holds.
	(ii) $\{ (Z_i, T_i) \}_{i \in S_n}$ (resp. $\{ Z_i \}_{i \in S_n}$) are identically distributed across $i \in S_n$ for estimating the parameters conditioned on $(Z_i, T_i) = (z, t)$ (resp. $Z_i = z$).
	(iii) $\mathcal{T}$ is a finite subset of $\mathbb{R}^{\dim(T)}$.
	(iv) $|S_n| \to \infty$.
\end{assumption}

We reintroduce Assumption \ref{as:independence} here for the sake of self-containedness of this section.
The identical distribution of $\{ (Z_i, T_i) \}_{i \in S_n}$ in Assumption \ref{as:dgp}(ii) can be justified by appropriately choosing IEM $T$ and sub-population $S_n$.
For example, this assumption holds when $T_i = \bm{1}\{ \sum_{j \neq i} A_{ij} Z_j > c \}$ and $S_n = \{ i \in N_n: \sum_{j \neq i} A_{ij} = \delta \}$ for some $c$ and $\delta$, provided that $\{ Z_i \}_{i \in N_n}$ are IID.
We require this assumption to construct a consistent estimator of $\Pr[Z_i = z, T_i = t]$ for those $i \in S_n$ and prove a weak dependence property of some variables in the estimation of the ADE parameters.
Similarly, the identical distribution of $\{ Z_i \}_{i \in S_n}$ in Assumption \ref{as:dgp}(ii) is used to consistently estimate $\Pr[Z_i = z]$ for those $i \in S_n$ and to analyze the dependency structure in the AIE estimation.
Assumption \ref{as:dgp}(iii) is for simplicity.

Under Assumption \ref{as:dgp}, for all $i \in S_n$, we can write $p_{S_n}(z, t) \coloneqq \Pr[Z_i = z, T_i = t]$ and $p_{S_n}(z) \coloneqq \Pr[Z_i = z]$.
We estimate these by $\hat p_{S_n}(z, t) \coloneqq |S_n|^{-1} \sum_{i \in S_n} \bm{1}\{ Z_i = z, T_i = t \}$ and $\hat p_{S_n}(z) \coloneqq |S_n|^{-1} \sum_{i \in S_n} \bm{1}\{ Z_i = z \}$.\footnote{
	If one knows the experimental design completely, $p_{S_n}(z, t)$ and $p_{S_n}(z)$ can be exactly computed without estimation, as in \cite{leung2022causal}.
	In this case, we can achieve unbiased estimation of $\bar \mu_{S_n}^Y(z, t)$ and $\bar \mu_{S_n}^Y(z; \mathcal{E})$.
	In this study, for generality, we investigate the case where they need to be estimated.
}
Then, $\bar \mu_{S_n}^Y(z, t)$ and $\bar \mu_{S_n}^Y(z; \mathcal{E})$ can be estimated by
\begin{align*}
	\hat \mu_{S_n}^Y(z, t) 
	\coloneqq \frac{1}{|S_n|} \sum_{i \in S_n} \frac{Y_i \cdot \bm{1}\{ Z_i = z, T_i = t \}}{\hat p_{S_n}(z, t)},
	\qquad 
	\hat \mu_{S_n}^{Y}(z; \mathcal{E})
	\coloneqq \frac{1}{|S_n|} \sum_{i \in S_n} \sum_{j \in \mathcal{E}_i} \frac{ Y_j \cdot \bm{1}\{ Z_i = z \} }{ \hat p_{S_n}(z) }.
\end{align*}
With these estimators, we define $\hat{\mathrm{ADEY}}_{S_n}(t) \coloneqq \hat \mu_{S_n}^Y(1, t) - \hat \mu_{S_n}^Y(0, t)$ and $\hat{\mathrm{AIEY}}_{S_n} \coloneqq \hat \mu_{S_n}^Y(1; \mathcal{E}) - \hat \mu_{S_n}^Y(0; \mathcal{E})$.
Similarly, we can obtain the estimators for the treatment receipt, namely, $\hat \mu_{S_n}^D(z, t)$, $\hat \mu_{S_n}^D(z)$, $\hat \mu_{S_n}^{D}(z; \mathcal{E})$, $\hat{\mathrm{ADED}}_{S_n}(t)$, $\hat{\mathrm{ADED}}_{S_n}$, and $\hat{\mathrm{AIED}}_{S_n}$.
Finally, we have $\hat{\mathrm{LADE}}_{S_n}(t) \coloneqq \hat{\mathrm{ADEY}}_{S_n}(t) / \hat{\mathrm{ADED}}_{S_n}(t)$ and $\hat{\mathrm{LAIE}}_{S_n} \coloneqq \hat{\mathrm{AIEY}}_{S_n} / \hat{\mathrm{ADED}}_{S_n}$.

\begin{remark}[Choice of the subpopulation] \label{remark:subpopulation}
	To estimate the parameters conditioned on $(Z_i, T_i) = (z, t)$, we need to construct $S_n$ appropriately to satisfy Assumption \ref{as:dgp}(ii).
	When all units in $N_n$ have the same degree (e.g., a ring network where every unit connects only to the two adjacent units), we can set $S_n = N_n$ (if $\{ Z_i \}_{i \in N_n}$ are IID).
	For a more general network, the assumption requires us to select $S_n$ as a proper subset of $N_n$ to ensure the homogeneity of degrees over $S_n$.
\end{remark}

\subsection{Asymptotic properties} \label{subsec:asymptotics}

\subsubsection{Average direct effects} \label{subsec:direct2}

We impose the following conditions.

\begin{assumption} [Bounded outcome]\label{as:outcome}
	There exists a constant $\bar y$ such that $|y_i(\bm{z})| \le \bar y < \infty$ for all $i \in S_n$ and $\bm{z} \in \mathcal{Z}_n$.
\end{assumption}

\begin{assumption}[Overlap]\label{as:overlap}
	There exist constants $\underline{p}, \bar p \in (0, 1)$ such that $p_{S_n}(z, t) \in [\underline{p}, \bar{p}]$ and $p_{S_n}(z) \in [\underline{p}, \bar{p}]$ for all $z \in \{ 0, 1 \}$ and a given $t \in \mathcal{T}$.
\end{assumption}

Assumption \ref{as:overlap} depends on the specification of IEM $T$, choice of sub-population $S_n$, distribution of $\bm{Z}$, and structure of network $\bm{A}$.
For example, the assumption is fulfilled for each $t \in \{ 0, 1 \}$ if $T_i = \bm{1} \{ \sum_{j \neq i} A_{ij} Z_j > 0 \}$, $S_n = \{ i \in N_n: \sum_{j \neq i} A_{ij} = \delta \}$ for some $\delta \ge 1$ is non-empty, and $Z_i \stackrel{\mathrm{IID}}{\sim} \mathrm{Bernoulli}(q)$ for all $i \in N_n$ and some fixed $q \in (0, 1)$.

For a non-negative integer $s \ge 0$, let $N_{\bm{A}}(i, s) \coloneqq \{ j \in N_n: \ell_{\bm{A}}(i, j) \le s \}$ be the set of units within $s$ distance from unit $i$; namely, unit $i$'s $s$-neighborhood.
Note that $i \in N_{\bm{A}}(i, s)$ for all $s \ge 0$.
We write the sub-vector of $\bm{z} \in \mathcal{Z}_n$ restricted on $N_{\bm{A}}(i, s)$ as $\bm{z}_{N_{\bm{A}}(i, s)} \coloneqq (z_j)_{j \in N_{\bm{A}}(i, s)}$.
Similarly, let $\bm{A}_{N_{\bm{A}}(i, s)} = (A_{kl})_{k,l \in N_{\bm{A}}(i, s)}$ denote the sub-matrix of $\bm{A}$ restricted on $N_{\bm{A}}(i, s)$.

\begin{assumption}[IEM]\label{as:exposure}
	There exists a known positive integer $K \in \mathbb{N}$ such that, for all $i \in S_n$, $\bm{A}, \bm{A}' \in \mathcal{A}_n$, and $\bm{z}, \bm{z}' \in \mathcal{Z}_n$,
	\begin{align*}
		\text{$N_{\bm{A}}(i, K) = N_{\bm{A}'}(i, K)$, $\bm{A}_{N_{\bm{A}}(i, K)} = \bm{A}'_{N_{\bm{A}'}(i, K)}$, and $\bm{z}_{N_{\bm{A}}(i, K)} = \bm{z}'_{N_{\bm{A}'}(i, K)}$}
		\; \Longrightarrow \;
		T(i, \bm{z}, \bm{A}) = T(i, \bm{z}', \bm{A}').
	\end{align*}
\end{assumption}

The assumption states that the instrumental exposure of each unit depends only on the unit's own $K$-neighborhood.
This would be a mild requirement that most practical IEMs should satisfy.

We introduce the concept of ANI, which originates from \cite{leung2022causal}.
Let $N_{\bm{A}}^c(i,s) \coloneqq N_n \setminus N_{\bm{A}}(i,s)$ denote the set of units who are more than distance $s$ away from $i$.
Writing $\bm{Z}'$ as an independent copy of $\bm{Z}$, define $\bm{Z}_i^{(s)} \coloneqq (\bm{Z}_{N_{\bm{A}}(i,s)}, \bm{Z}'_{N_{\bm{A}}^c(i,s)})$ by combining the sub-vector of $\bm{Z}$ on $N_{\bm{A}}(i,s)$ and that of $\bm{Z}'$ on $N_{\bm{A}}^c(i,s)$.
Let $\theta_{n,s}^{\mathrm{ADE}} \coloneqq \max \{ \max_{i \in S_n} \bbE | y_i(\bm{Z}) - y_i(\bm{Z}_i^{(s)}) |, \; \max_{i \in S_n} \bbE | D_i(\bm{Z}) - D_i(\bm{Z}_i^{(s)}) | \}$.
This measures the intensity of interference with units that are at least $s$ distance away.
By Assumption \ref{as:outcome}, $\theta_{n,s}^{\mathrm{ADE}}$ is uniformly bounded in $n$ and $s$.

\begin{assumption}[ANI 1] \label{as:ANI1}
	$\sup_{n \in \mathbb{N}} \theta_{n,s}^{\mathrm{ADE}} \to 0$ as $s \to \infty$.
\end{assumption}

The ANI assumption says that spillover effects from units that are sufficiently far away should be sufficiently small.
In particular, those not connected with $i$ do not affect the outcome and the treatment response of $i$.
Thus, the ANI would be reasonable in practical situations where only nearby people have major impacts on one's behavior.
In other words, the assumption precludes situations such as herding behavior or social pressure, whereby one may be influenced by arbitrary unknown others, irrespective of actual connections.
Note that the ANI is much weaker than the commonly used clustered interference assumption that requires $\theta_{n,L}^{\mathrm{ADE}}=0$ for some $L$.

Let $S_{\bm{A}}^{\partial}(i, s) \coloneqq \{ j \in S_n: \ell_{\bm{A}}(i, j) = s \}$ be the subset of $S_n$ that are exactly at distance $s$ from unit $i \in S_n$.
We denote its $k$-th sample moment as $M_{S_n}^{\partial}(s; k) \coloneqq |S_n|^{-1} \sum_{i \in S_n} | S_{\bm{A}}^{\partial}(i, s)|^k$, which measures the denseness of $\bm{A}$ restricted on $S_n$.
When $k = 1$, we write $M_{S_n}^{\partial}(s) = M_{S_n}^{\partial}(s; 1)$.
Further, letting $\lfloor \cdot \rfloor$ indicate the floor function, define
\begin{align}\label{eq:tildetheta}
	\tilde \theta_{n,s}^{\mathrm{ADE}} \coloneqq
	\begin{cases*}
		\theta_{n, \lfloor s/2 \rfloor}^{\mathrm{ADE}} & \text{for $s > 2 \max\{ K, 1 \}$} \\
		1 & \text{otherwise} 
	\end{cases*}
	.
\end{align}

\begin{assumption}[Weak dependence 1]\label{as:weakLLN1}
	(i) $\max_{1 \le s \le 2K} M_{S_n}^{\partial}(s) = O(1)$, where $K$ is as given in Assumption \ref{as:exposure}.
	(ii) $|S_n|^{-1} \sum_{s = 1}^{n - 1} M_{S_n}^{\partial}(s) \tilde \theta_{n,s}^{\mathrm{ADE}} = o(1)$.
\end{assumption}

Assumption \ref{as:weakLLN1}(i) rules out that there are a non-negligible proportion of units whose $2K$ neighborhoods may grow to infinity as $n$ increases.
If one assumes that each individual can hold only a limited number of interacting partners, Assumption \ref{as:weakLLN1}(i) is satisfied with $M_{S_n}^{\partial}(s; k) < \infty$ for all $s,k < \infty$.
However, for example, it is violated if the network is a complete graph.
Assumption \ref{as:weakLLN1}(ii) is analogous to Assumption 5 of \cite{leung2022causal} and Assumption 3.2 of \cite{kojevnikov2021limit}.
This assumption restricts the rate of convergence of $\tilde \theta_{n,s}^{\mathrm{ADE}}$.
For example, in the case of a ring network, we can see that $M_{S_n}^{\partial}(s) \le 2$ for all $s$, and Assumption \ref{as:weakLLN1}(ii) is reduced to the condition $|S_n|^{-1} \sum_{s = 1}^{n - 1} \tilde \theta_{n, s}^{\mathrm{ADE}} = o(1)$.

The following theorem establishes the consistency of the ADE estimators.

\begin{theorem}\label{thm:consistency1}
	Suppose that Assumptions \ref{as:dgp} -- \ref{as:weakLLN1} hold.
	Then, we have (i) $\hat{\mathrm{ADEY}}_{S_n}(t) - \mathrm{ADEY}_{S_n}(t) \stackrel{p}{\to} 0$ and (ii) $\hat{\mathrm{ADED}}_{S_n}(t) - \mathrm{ADED}_{S_n}(t) \stackrel{p}{\to} 0$. 
	Additionally, if Assumptions \ref{as:relevance1} -- \ref{as:restrict1} hold, we have (iii) $\hat{\mathrm{LADE}}_{S_n}(t) - \mathrm{LADE}_{S_n}(t) \stackrel{p}{\to} 0$.
\end{theorem}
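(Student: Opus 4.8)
The plan is to reduce Theorem~\ref{thm:consistency1} to the consistency of the inverse-probability-weighted building blocks $\hat\mu_{S_n}^Y(z,t)$, $\hat\mu_{S_n}^D(z,t)$ and the plug-in probabilities $\hat p_{S_n}(z,t)$, and then to assemble parts (i)--(iii) by elementary manipulations. Concretely, Lemma~\ref{lem:consistency} in the supplementary appendix is precisely the statement that, under Assumptions~\ref{as:dgp}--\ref{as:weakLLN1}, $\hat\mu_{S_n}^Y(z,t)-\bar\mu_{S_n}^Y(z,t)\stackrel{p}{\to}0$ and $\hat\mu_{S_n}^D(z,t)-\bar\mu_{S_n}^D(z,t)\stackrel{p}{\to}0$ as $|S_n|\to\infty$, for each $z\in\{0,1\}$ and the given $t$. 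Granting this, parts (i) and (ii) follow immediately from the triangle inequality applied to $\hat{\mathrm{ADEY}}_{S_n}(t)=\hat\mu_{S_n}^Y(1,t)-\hat\mu_{S_n}^Y(0,t)$ and $\hat{\mathrm{ADED}}_{S_n}(t)=\hat\mu_{S_n}^D(1,t)-\hat\mu_{S_n}^D(0,t)$, since both subtracted terms converge; note that neither of these two parts invokes any structural restriction on the network or the IEM --- they concern only the consistent estimation of the design-based conditional means $\bar\mu_{S_n}^Y(z,t)$ and $\bar\mu_{S_n}^D(z,t)$, which are well defined whether or not $T$ is correctly specified.

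For the underlying Lemma~\ref{lem:consistency} I would run a weak law of large numbers for network-dependent triangular arrays. First, for $\hat p_{S_n}(z,t)$: its mean equals $p_{S_n}(z,t)$ by Assumption~\ref{as:dgp}(ii), and since $\bm 1\{Z_i=z,T_i=t\}$ depends on $\bm Z$ only through $N_{\bm A}(i,K)$ (Assumption~\ref{as:exposure}) and the $Z_i$ are independent (Assumption~\ref{as:independence}), the summands for two units more than $2K$ apart are independent; hence $\Var(\hat p_{S_n}(z,t))\le C|S_n|^{-1}\max_{1\le s\le 2K}M_{S_n}^\partial(s)=O(|S_n|^{-1})$ by Assumption~\ref{as:weakLLN1}(i), giving $\hat p_{S_n}(z,t)\stackrel{p}{\to}p_{S_n}(z,t)$. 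Next, the numerator $|S_n|^{-1}\sum_{i\in S_n}Y_i\bm 1\{Z_i=z,T_i=t\}$ has mean $p_{S_n}(z,t)\bar\mu_{S_n}^Y(z,t)$, and I would bound the covariance of the summands of two units at path distance $s$ by a constant times $\tilde\theta_{n,s}^{\mathrm{ADE}}$: replace the coordinates of $\bm Z$ lying outside the two $\lfloor s/2\rfloor$-neighborhoods by an independent copy, which alters each $y_i(\bm Z)$ by at most $\theta_{n,\lfloor s/2\rfloor}^{\mathrm{ADE}}$ in expectation (Assumption~\ref{as:ANI1}) and renders the two (bounded, by Assumption~\ref{as:outcome}) factors independent. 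Summing over $s$ yields variance $\le C|S_n|^{-1}\sum_{s=1}^{n-1}M_{S_n}^\partial(s)\tilde\theta_{n,s}^{\mathrm{ADE}}=o(1)$ by Assumption~\ref{as:weakLLN1}(ii). Dividing by $\hat p_{S_n}(z,t)$, which is bounded away from $0$ with probability approaching one by Assumption~\ref{as:overlap}, and applying Slutsky's theorem gives $\hat\mu_{S_n}^Y(z,t)\stackrel{p}{\to}\bar\mu_{S_n}^Y(z,t)$; the treatment-side argument is identical.

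For part (iii), Theorem~\ref{thm:LADE} gives $\mathrm{LADE}_{S_n}(t)=\mathrm{ADEY}_{S_n}(t)/\mathrm{ADED}_{S_n}(t)$ under Assumptions~\ref{as:relevance1}--\ref{as:restrict1}, and the key observation is that the denominator is bounded away from zero uniformly in $n$: by Monotonicity~1 (Assumption~\ref{as:monotone1}), Proposition~\ref{prop:ITT} yields $\mathrm{ADED}_{S_n}(t)=|S_n|^{-1}\sum_{i\in S_n}\bbE[\mathcal C_i\mid T_i=t]\ge c>0$ by Relevance~1 (Assumption~\ref{as:relevance1}). Writing
\begin{align*}
	\hat{\mathrm{LADE}}_{S_n}(t)-\mathrm{LADE}_{S_n}(t)
	=\frac{\mathrm{ADED}_{S_n}(t)\,\big(\hat{\mathrm{ADEY}}_{S_n}(t)-\mathrm{ADEY}_{S_n}(t)\big)+\mathrm{ADEY}_{S_n}(t)\,\big(\mathrm{ADED}_{S_n}(t)-\hat{\mathrm{ADED}}_{S_n}(t)\big)}{\hat{\mathrm{ADED}}_{S_n}(t)\,\mathrm{ADED}_{S_n}(t)},
\end{align*}
the numerator is $o_p(1)$ by parts (i)--(ii) together with $|\mathrm{ADEY}_{S_n}(t)|\le 2\bar y$ and $|\mathrm{ADED}_{S_n}(t)|\le 1$, while the denominator is bounded below by $c^2/2$ with probability approaching one, since $\mathrm{ADED}_{S_n}(t)\ge c$ and part (ii) force $\hat{\mathrm{ADED}}_{S_n}(t)\ge c/2$ eventually. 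Hence the ratio is $o_p(1)$.

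The main obstacle is the covariance bound inside Lemma~\ref{lem:consistency}: making the ANI coupling rigorous for the IPW summands and checking that the resulting decay rate, encoded in $\tilde\theta_{n,s}^{\mathrm{ADE}}$, is summable against the denseness moments $M_{S_n}^\partial(s)$ --- this is where Assumptions~\ref{as:ANI1} and \ref{as:weakLLN1} do the real work, and the split at $s=2\max\{K,1\}$ in \eqref{eq:tildetheta} reflects the two regimes (small $s$, where $T_i$ and $T_j$ may share neighbors and only the crude bound $\tilde\theta=1$, controlled by Assumption~\ref{as:weakLLN1}(i), is available, versus large $s$, where the coupling buys a genuine rate). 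Once that is in place, everything else is routine: the triangle-inequality assembly for (i)--(ii), and for (iii) the ratio algebra above, whose only subtlety is the uniform lower bound on $\mathrm{ADED}_{S_n}(t)$ obtained by combining Monotonicity~1, Proposition~\ref{prop:ITT}, and Relevance~1.
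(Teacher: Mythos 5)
Your proposal is correct and follows essentially the same route as the paper: the paper omits the proof of Theorem \ref{thm:consistency1} precisely because it reduces to Lemma \ref{lem:consistency} (together with Lemma \ref{lem:p1}), whose proof uses the same variance/covariance bound with the ANI coupling at $\lfloor s/2\rfloor$-neighborhoods against $M_{S_n}^{\partial}(s)\tilde\theta_{n,s}^{\mathrm{ADE}}$ that you sketch, and the assembly of (i)--(iii) via the triangle inequality and the ratio argument with $\mathrm{ADED}_{S_n}(t)\ge c$ from Theorem \ref{thm:LADE} and Assumption \ref{as:relevance1} is exactly the intended straightforward step.
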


\begin{remark}[Rate of convergence] \label{remark:rate}
	The convergence rates of the proposed estimators are determined by the convergence rate given in Assumption \ref{as:weakLLN1}(ii);
	see Lemma B.2.
	In particular, $\sqrt{|S_n|}$-consistency can be achieved if Assumption \ref{as:weakLLN1}(ii) is strengthened to $\sum_{s = 1}^{n - 1} M_{S_n}^{\partial}(s) \tilde \theta_{n,s}^{\mathrm{ADE}} = O(1)$.
\end{remark}

Next, we investigate the asymptotic distributions of the ADE estimators.
For each $i \in S_n$, let
\begin{align*}
	V_i^{\mathrm{ADEY}} 
	& \coloneqq W_i^Y - \frac{\bar \mu_{S_n}^Y(1, t)}{p_{S_n}(1, t)} W_i^Z + \frac{\bar \mu_{S_n}^Y(0, t)}{p_{S_n}(0, t)} W_i^{1-Z}, 
	\qquad 
	V_i^{\mathrm{ADED}}
	\coloneqq W_i^D - \frac{\bar \mu_{S_n}^D(1, t)}{p_{S_n}(1, t)} W_i^Z + \frac{\bar \mu_{S_n}^D(0, t)}{p_{S_n}(0, t)} W_i^{1-Z}, \\
	V_i^{\mathrm{LADE}}
	& \coloneqq \frac{1}{\mathrm{ADED}_{S_n}(t)} V_i^{\mathrm{ADEY}}  - \frac{\mathrm{ADEY}_{S_n}(t)}{[\mathrm{ADED}_{S_n}(t)]^2} V_i^{\mathrm{ADED}},
\end{align*}
where
\begin{align} \label{eq:W}
	\begin{split}
		W_i^Y 
		& \coloneqq Y_i \left[ \frac{\bm{1}\{ Z_i = 1, T_i = t\}}{p_{S_n}(1, t)} - \frac{\bm{1}\{ Z_i = 0, T_i = t\}}{p_{S_n}(0, t)} \right],
		\qquad 
		W_i^Z 
		\coloneqq \bm{1}\{ Z_i = 1, T_i = t \}, \\
		W_i^D 
		& \coloneqq D_i \left[ \frac{\bm{1}\{ Z_i = 1, T_i = t\}}{p_{S_n}(1, t)} - \frac{\bm{1}\{ Z_i = 0, T_i = t\}}{p_{S_n}(0, t)} \right],
		\qquad 
		W_i^{1-Z}
		\coloneqq \bm{1}\{ Z_i = 0, T_i = t \}.
	\end{split}
\end{align}
For notational simplicity, we suppress the dependence of $V$'s and $W$'s on the IEM value $t$; the same notation applies to other variables introduced below.
In the proof of the theorem presented below, we will show that the asymptotic distribution of $\hat{\mathrm{ADEY}}_{S_n}(t) - \mathrm{ADEY}_{S_n}(t)$ can be obtained by that of $|S_n|^{-1} \sum_{i \in S_n}(V_i^{\mathrm{ADEY}}  - \bbE [V_i^{\mathrm{ADEY}}])$.
Similar results hold for the other cases.
Let $(\sigma_{S_n}^\mathrm{ADEY})^2 \coloneqq \Var[ |S_n|^{-1/2} \sum_{i \in S_n} V_i^{\mathrm{ADEY}} ]$, and similarly we define $( \sigma_{S_n}^\mathrm{ADED} )^2$ and $( \sigma_{S_n}^\mathrm{LADE} )^2$.

To derive the asymptotic distributions, we employ the central limit theorem (CLT) for $\psi$-weakly dependent processes in \cite{kojevnikov2021limit} (see Definition B.1).
Under Assumptions \ref{as:dgp} -- \ref{as:ANI1}, for each $V_i = V_i^{\mathrm{ADEY}}$, $V_i^{\mathrm{ADED}}$, and $V_i^{\mathrm{LADE}}$, we show that $\{ V_i \}_{i \in S_n}$ is a $\psi$-weakly dependent process with the dependence coefficients $\{ \tilde \theta_{n,s}^{\mathrm{ADE}} \}_{s \ge 0}$.
Then, we can apply their CLT to our context with additional restrictions on the network structure. 
Let $S_{\bm{A}}(i, s) \coloneqq \{ j \in S_n : \ell_{\bm{A}}(i, j) \le s \}$ and $\Delta_{S_n}(s, m; k) \coloneqq |S_n|^{-1} \sum_{i \in S_n} \max_{j \in S^\partial_{\bm{A}}(i, s)} |S_{\bm{A}}(i, m) \setminus S_{\bm{A}}(j, s - 1)|^k$, where $S_{\bm{A}}(j, s - 1) = \varnothing$ if $s = 0$.
This is the $k$-th sample moment of the maximum number (over $j$'s at distance $s$ from $i$) of units who are within distance $m$ from $i$ but at least distance $s$ apart from $j$.
Note that $\Delta_{S_n}(s, m; k)$ increases as $m$ becomes larger, but at the same time decreases fast to zero as $s$ grows because $S_{\bm{A}}(j, s - 1)$ tends to become large quickly;
for example, if all units have approximately $L$ links, $|S_{\bm{A}}(j, s - 1)| = O(L^{s-1})$.
In addition, we define $c_{S_n}(s, m; k) \coloneqq \inf_{\alpha > 1} [\Delta_{S_n}(s, m; k \alpha)]^{\frac{1}{\alpha}} [ M_{S_n}^{\partial} (s; \alpha/(\alpha - 1) ) ]^{1 - \frac{1}{\alpha}}$.
This quantity measures the denseness of the network, which plays an important role in establishing the CLT. 

\begin{assumption}[Weak dependence 2] \label{as:weakCLT1}
	For each $\sigma_{S_n} = \sigma_{S_n}^\mathrm{ADEY}$, $\sigma_{S_n}^\mathrm{ADED}$, and $\sigma_{S_n}^\mathrm{LADE}$, there exist some positive sequence $m_n \to \infty$ and a constant $0 < \varepsilon < 1$ such that for each $k \in \{ 1, 2 \}$, (i) $|S_n|^{-k/2} \sigma_{S_n}^{-(2 + k)} \sum_{s=0}^{n-1} c_{S_n}(s, m_n; k) (\tilde \theta_{n,s}^{\mathrm{ADE}})^{1-\varepsilon} \to 0$ and (ii) $|S_n|^{k/2}\sigma_{S_n}^{-k} (\tilde \theta_{n,m_n}^{\mathrm{ADE}})^{1 - \varepsilon} \to 0$.
\end{assumption}

This corresponds to Assumption 3.4 of \cite{kojevnikov2021limit}.\footnote{
	Note that Assumption \ref{as:weakCLT1} is weaker than Assumption 3.4 of \cite{kojevnikov2021limit}. This comes from the following two facts.
	First, the $\psi$-weak dependent processes considered here are uniformly bounded by Assumptions \ref{as:outcome} and \ref{as:overlap}, while \cite{kojevnikov2021limit} only assume the existence of $4 + \varepsilon$ moments of them.
	Second, they consider a more general form of $\psi$-function than ours.
	See Assumption 2.1 of their paper and Lemma B.3.
}
Note that it restricts not only the network structure but also our choice of sub-population $S_n$.
In particular, when there exist constants $\underline{C}$ and $\bar C$ such that $0 < \underline{C} \le \sigma_{S_n} \le \bar C < \infty$ for all sufficiently large $n$, Assumption \ref{as:weakCLT1} can be reduced to (i) $|S_n|^{-k/2} \sum_{s=0}^{n-1} c_{S_n}(s, m_n; k) (\tilde \theta_{n, s}^{\mathrm{ADE}})^{1 - \varepsilon} \to 0$ and (ii) $|S_n|^{k/2} (\tilde \theta_{n, m_n}^{\mathrm{ADE}})^{1 - \varepsilon} \to 0$.

The following theorem shows that the ADE estimators are asymptotically normal.

\begin{theorem}\label{thm:normal1}
	Suppose that Assumptions \ref{as:dgp} -- \ref{as:weakCLT1} hold.
	Then, we have
	\begin{align*}
		\begin{array}{cl}
			\text{(i)}   & \displaystyle \frac{\sqrt{|S_n|} \left( \hat{\mathrm{ADEY}}_{S_n}(t) - \mathrm{ADEY}_{S_n}(t) \right) }{\sigma_{S_n}^{\mathrm{ADEY}}} \stackrel{d}{\to} \mathrm{Normal}(0, 1)\\
			\text{(ii)}  & \displaystyle \frac{\sqrt{|S_n|} \left( \hat{\mathrm{ADED}}_{S_n}(t) - \mathrm{ADED}_{S_n}(t) \right) }{\sigma_{S_n}^{\mathrm{ADED}}} \stackrel{d}{\to} \mathrm{Normal}(0, 1)
		\end{array}
	\end{align*}
	provided that $(\sigma_{S_n}^{\mathrm{ADEY}})^{-1} = O(1)$ and $(\sigma_{S_n}^{\mathrm{ADED}})^{-1} = O(1)$.
	Additionally, if Assumptions \ref{as:relevance1} -- \ref{as:restrict1} hold, we have
	\begin{align*}
		\begin{array}{cl}
			\text{(iii)}   & \displaystyle \frac{\sqrt{|S_n|} \left( \hat{\mathrm{LADE}}_{S_n}(t) - \mathrm{LADE}_{S_n}(t) \right) }{\sigma_{S_n}^{\mathrm{LADE}}} \stackrel{d}{\to} \mathrm{Normal}(0, 1),
		\end{array}
	\end{align*}
	provided that
	\begin{align} \label{eq:remainder}
		\frac{1}{\sigma_{S_n}^{\mathrm{LADE}}} = O(1),
		\qquad
		\frac{\sigma_{S_n}^{\mathrm{ADEY}} \sigma_{S_n}^{\mathrm{ADED}}}{\sqrt{|S_n|} \sigma_{S_n}^{\mathrm{LADE}}} = o(1),
		\qquad 
		\frac{(\sigma_{S_n}^{\mathrm{ADED}})^2}{\sqrt{|S_n|} \sigma_{S_n}^{\mathrm{LADE}}} = o(1).
	\end{align}
\end{theorem}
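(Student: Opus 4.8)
The plan is to prove (i) and (ii) by the standard route for inverse-probability-weighted estimators under weak dependence --- linearize, extract a $\psi$-weakly dependent leading term, and apply a central limit theorem for weakly dependent arrays --- and then deduce (iii) from (i)--(ii) by a delta-method argument calibrated to the rate conditions in \eqref{eq:remainder}. Concretely, for (i), write $\hat\mu_{S_n}^Y(z,t) = \hat A_n(z,t)/\hat p_{S_n}(z,t)$ with $\hat A_n(z,t) \coloneqq |S_n|^{-1}\sum_{i\in S_n} Y_i\,\bm{1}\{Z_i=z,T_i=t\}$, use the ratio identity $1/\hat p = 1/p - (\hat p - p)/(p\hat p)$, and note $\hat p_{S_n}(z,t) - p_{S_n}(z,t) = |S_n|^{-1}\sum_{i\in S_n}(\bm{1}\{Z_i=z,T_i=t\} - p_{S_n}(z,t))$. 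A short rearrangement gives
\begin{align*}
	\sqrt{|S_n|}\bigl(\hat{\mathrm{ADEY}}_{S_n}(t) - \mathrm{ADEY}_{S_n}(t)\bigr) = \frac{1}{\sqrt{|S_n|}}\sum_{i\in S_n} V_i^{\mathrm{ADEY}} + R_{1n},
\end{align*}
where $V_i^{\mathrm{ADEY}}$ is exactly the quantity defined above (built from the weights in \eqref{eq:W}) and $R_{1n}$ is a sum of second-order products of the form $\bigl(\hat\mu_{S_n}^Y(z,t) - \bar\mu_{S_n}^Y(z,t)\bigr)\bigl(\hat p_{S_n}(z,t) - p_{S_n}(z,t)\bigr)/p_{S_n}(z,t)$. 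By Theorem \ref{thm:consistency1} (through Lemma \ref{lem:consistency}) and Assumption \ref{as:overlap}, $\hat\mu_{S_n}^Y(z,t) - \bar\mu_{S_n}^Y(z,t) = o_p(1)$; moreover $\bm{1}\{Z_i=z,T_i=t\}$ depends on $\bm{Z}$ only through $\bm{Z}_{N_{\bm{A}}(i,K)}$, so by $\max_{1\le s\le 2K}M_{S_n}^{\partial}(s) = O(1)$ (Assumptions \ref{as:exposure} and \ref{as:weakLLN1}(i)) and Chebyshev's inequality, $\hat p_{S_n}(z,t) - p_{S_n}(z,t) = O_p(|S_n|^{-1/2})$. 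Hence $\sqrt{|S_n|}\,R_{1n} = o_p(1)$, which is $o_p(\sigma_{S_n}^{\mathrm{ADEY}})$ because $(\sigma_{S_n}^{\mathrm{ADEY}})^{-1} = O(1)$; a direct computation of conditional means gives $\sum_{i\in S_n}\bbE[V_i^{\mathrm{ADEY}}] = 0$, so the leading term is already centered. The identical expansion applies to $\hat{\mathrm{ADED}}_{S_n}(t)$ with $D_i$ in place of $Y_i$.

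Next, $\{V_i^{\mathrm{ADEY}}\}_{i\in S_n}$ is a $\psi$-weakly dependent process with dependence coefficients $\{\tilde\theta_{n,s}^{\mathrm{ADE}}\}_{s\ge0}$, as announced in the text preceding the theorem and established via Lemma \ref{lem:psi1}: this uses the uniform boundedness of $V_i^{\mathrm{ADEY}}$ (Assumptions \ref{as:outcome} and \ref{as:overlap}), the fact that $T_i$ is measurable with respect to $\bm{Z}_{N_{\bm{A}}(i,K)}$ (Assumption \ref{as:exposure}), and the ANI coupling bound on $|y_i(\bm{Z}) - y_i(\bm{Z}_i^{(s)})|$ (Assumption \ref{as:ANI1}), with the halved radius $\lfloor s/2\rfloor$ in \eqref{eq:tildetheta} appearing because the neighborhoods on which $V_i$ and $V_j$ essentially depend become disjoint once $\ell_{\bm{A}}(i,j) > 2\max\{K,1\}$. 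Under Assumption \ref{as:weakCLT1}, the central limit theorem for $\psi$-weakly dependent arrays of \cite{kojevnikov2021limit} then yields $(\sigma_{S_n}^{\mathrm{ADEY}})^{-1}|S_n|^{-1/2}\sum_{i\in S_n}V_i^{\mathrm{ADEY}} \stackrel{d}{\to} \mathrm{Normal}(0,1)$; combining this with the linearization and Slutsky's lemma proves (i), and (ii) follows identically.

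Finally, for (iii), Theorem \ref{thm:LADE} identifies the target as $\mathrm{ADEY}_{S_n}(t)/\mathrm{ADED}_{S_n}(t)$, and Assumptions \ref{as:relevance1} and \ref{as:monotone1} give $\mathrm{ADED}_{S_n}(t) = |S_n|^{-1}\sum_{i\in S_n}\bbE[\mathcal{C}_i\mid T_i=t]\ge c>0$. Abbreviating $A = \mathrm{ADEY}_{S_n}(t)$, $B = \mathrm{ADED}_{S_n}(t)$, $\hat A = \hat{\mathrm{ADEY}}_{S_n}(t)$, $\hat B = \hat{\mathrm{ADED}}_{S_n}(t)$ and expanding $\hat A/\hat B$ about $A/B$ gives
\begin{align*}
	\sqrt{|S_n|}\bigl(\hat{\mathrm{LADE}}_{S_n}(t) - \mathrm{LADE}_{S_n}(t)\bigr) = \frac{1}{\sqrt{|S_n|}}\sum_{i\in S_n} V_i^{\mathrm{LADE}} + R_{2n},
\end{align*}
where $V_i^{\mathrm{LADE}}$ is the displayed linear combination of $V_i^{\mathrm{ADEY}}$ and $V_i^{\mathrm{ADED}}$ and $R_{2n}$ comprises $-\sqrt{|S_n|}(\hat A - A)(\hat B - B)/(B\hat B)$ and $\sqrt{|S_n|}\,A(\hat B - B)^2/(B^2\hat B)$. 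By (i)--(ii), $\hat A - A = O_p(\sigma_{S_n}^{\mathrm{ADEY}}/\sqrt{|S_n|})$ and $\hat B - B = O_p(\sigma_{S_n}^{\mathrm{ADED}}/\sqrt{|S_n|})$, while $\hat B \stackrel{p}{\to} B\ge c$ so $\hat B^{-1} = O_p(1)$; hence $R_{2n} = O_p\bigl(\sigma_{S_n}^{\mathrm{ADEY}}\sigma_{S_n}^{\mathrm{ADED}}/\sqrt{|S_n|}\bigr) + O_p\bigl((\sigma_{S_n}^{\mathrm{ADED}})^2/\sqrt{|S_n|}\bigr)$, and $R_{2n}/\sigma_{S_n}^{\mathrm{LADE}} = o_p(1)$ by the last two conditions in \eqref{eq:remainder}. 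Since $B\ge c$ and $|A|\le 2\bar y$ (Assumption \ref{as:outcome}), $V_i^{\mathrm{LADE}}$ is a bounded, deterministic linear combination of $V_i^{\mathrm{ADEY}}$ and $V_i^{\mathrm{ADED}}$, so $\{V_i^{\mathrm{LADE}}\}_{i\in S_n}$ inherits $\psi$-weak dependence with coefficients $\{\tilde\theta_{n,s}^{\mathrm{ADE}}\}$; then Assumption \ref{as:weakCLT1} for $\sigma_{S_n}^{\mathrm{LADE}}$ together with $(\sigma_{S_n}^{\mathrm{LADE}})^{-1} = O(1)$ gives the central limit theorem for the leading term, and Slutsky's lemma completes (iii). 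The step I expect to be the real obstacle is the $\psi$-weak dependence claim: one must combine the exposure locality of Assumption \ref{as:exposure} with the ANI coupling to bound the relevant covariance functionals uniformly in $n$ while correctly tracking the halving of the effective dependence distance, and then verify that the quantitative requirements of the \cite{kojevnikov2021limit} CLT are genuinely implied by the network-denseness quantities $c_{S_n}(s,m;k)$ in Assumption \ref{as:weakCLT1}; a secondary subtlety is that $\sigma_{S_n}^{\mathrm{ADEY}}$ and $\sigma_{S_n}^{\mathrm{ADED}}$ need not be bounded above, so in (iii) the remainder must be controlled relative to $\sigma_{S_n}^{\mathrm{LADE}}$ rather than in absolute order --- precisely what the rate conditions in \eqref{eq:remainder} accomplish.
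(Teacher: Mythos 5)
Your proposal is correct and follows essentially the same route as the paper: the same linearization of the inverse-probability-weighted means via the ratio identity (controlling the second-order cross terms with Lemmas \ref{lem:p1} and \ref{lem:consistency} and Assumption \ref{as:overlap}), the same reduction to the centered sum of the influence terms $V_i^{\mathrm{ADEY}}$, $V_i^{\mathrm{ADED}}$, the same appeal to the $\psi$-weak dependence established in Lemmas \ref{lem:psi1}--\ref{lem:linearpsi1} together with the \cite{kojevnikov2021limit} CLT under Assumption \ref{as:weakCLT1}, and the same delta-method expansion for (iii) with the remainder controlled relative to $\sigma_{S_n}^{\mathrm{LADE}}$ via \eqref{eq:remainder}. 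No substantive differences from the paper's argument.
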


The conditions in \eqref{eq:remainder} are fairly mild, which are satisfied especially with $\sqrt{|S_n|}$-consistency.

In Appendix C, we consider inference methods based on network HAC estimation and a wild bootstrap approach.
It is shown that the HAC estimators have asymptotic biases due to the fact that we cannot estimate heterogeneous means in the asymptotic variances in Theorem \ref{thm:normal1}.
This is a well-known issue in the design-based uncertainty framework (cf. \citealp{imbens2015causal}).

\subsubsection{Average indirect effects} \label{subsec:indirect2}

Next, we focus on the estimators of $\mathrm{AIEY}_{S_n}$, $\mathrm{ADED}_{S_n}$, and $\mathrm{LAIE}_{S_n}$;
the discussion of $\mathrm{AIED}_{S_n}$ is analogous.
The asymptotic properties of these estimators can be derived in the same way as in the previous subsection, but we impose an additional condition on the denseness of the network and an ANI condition slightly different from Assumption \ref{as:ANI1}.
Let $\theta_{n,s}^{\mathrm{AIE}} \coloneqq \max \{ \max_{i \in S_n} \max_{j \in \mathcal{E}_i} \bbE | y_j(\bm{Z}) - y_j(\bm{Z}_i^{(s)}) |, \;\; \max_{i \in S_n} \bbE | D_i(\bm{Z}) - D_i(\bm{Z}_i^{(s)}) | \}$.
Here, $\bbE | y_j(\bm{Z}) - y_j(\bm{Z}_i^{(s)}) |$ measures to what extent the outcome of $j \in \mathcal{E}_i$ is affected by the IVs of units that are apart from $i$ more than $s$ distance.
Define $\tilde \theta_{n,s}^{\mathrm{AIE}}$ in the same way as in \eqref{eq:tildetheta}.

\begin{assumption}[ANI 2] \label{as:ANI2}
	$\sup_{n \in \mathbb{N}} \theta_{n,s}^{\mathrm{AIE}} \to 0$ as $s \to \infty$.
\end{assumption}

\begin{assumption}[Weak dependence 3] \label{as:weakLLN2}
	Assumption \ref{as:weakLLN1}(i) -- (ii) hold when $\tilde \theta_{n,s}^{\mathrm{ADE}}$ is replaced by $\tilde \theta_{n,s}^{\mathrm{AIE}}$.
	Additionally, (iii) $\max_{i \in S_n} |\mathcal{E}_i| = O(1)$.
\end{assumption}

Similar to Assumption \ref{as:weakLLN1}, Assumption \ref{as:weakLLN2} restricts the ``sparseness'' of the network in several ways. 
For example, Assumption \ref{as:weakLLN2}(i) and (iii) are fulfilled if the degree in the network $\bm{A}$ is uniformly bounded in $i \in N_n$ and $n \in \mathbb{N}$.
Moreover, Assumption \ref{as:weakLLN2}(ii) is also satisfied if $|S_n|^{-1} \sum_{s = 1}^{n-1} \tilde \theta_{n,s}^{\mathrm{AIE}} = o(1)$ additionally holds.
However, Assumption \ref{as:weakLLN2} rules out, for example, a small-world network, where any pair of units are connected within a short distance.

Given these assumptions, we can prove the following consistency results.

\begin{theorem}\label{thm:consistency2}
	Suppose that Assumptions \ref{as:dgp} -- \ref{as:exposure} and \ref{as:ANI2} -- \ref{as:weakLLN2} hold.
	Then, we have (i) $\hat{\mathrm{AIEY}}_{S_n} - \mathrm{AIEY}_{S_n} \stackrel{p}{\to} 0$ and (ii) $\hat{\mathrm{ADED}}_{S_n} - \mathrm{ADED}_{S_n} \stackrel{p}{\to} 0$.
	Additionally, if Assumptions \ref{as:relevance2} -- \ref{as:restrict2} hold, we have (iii) $\hat{\mathrm{LAIE}}_{S_n} - \mathrm{LAIE}_{S_n} \stackrel{p}{\to} 0$.
\end{theorem}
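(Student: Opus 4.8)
The plan is to mirror the argument behind Theorem~\ref{thm:consistency1} (equivalently, Lemma~\ref{lem:consistency} in the supplement): each estimator here is an inverse-probability-weighted sample average divided by an estimated assignment probability, and $\mathrm{LAIE}_{S_n}$ is a Wald ratio by Theorem~\ref{thm:LAIE1}, so consistency reduces to a weak law of large numbers for the relevant numerators together with control of the denominators. Part~(ii) is essentially immediate: $\hat{\mathrm{ADED}}_{S_n}$ and $\mathrm{ADED}_{S_n}$ are exactly the ADED estimator and estimand associated with a \emph{constant} IEM (cf.\ Remark~\ref{remark:constantIEM1}), and the $D$-component of $\theta_{n,s}^{\mathrm{AIE}}$ is literally the same as that of $\theta_{n,s}^{\mathrm{ADE}}$; hence Assumptions~\ref{as:ANI2} and~\ref{as:weakLLN2} provide precisely what Assumptions~\ref{as:ANI1} and~\ref{as:weakLLN1} provided for the ADED part of Theorem~\ref{thm:consistency1}, and that part of the proof transfers verbatim.

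For part~(i), fix $z\in\{0,1\}$ and write $\hat\mu_{S_n}^Y(z;\mathcal{E})=\bar\xi_{S_n}(z)/\hat p_{S_n}(z)$, where $\bar\xi_{S_n}(z)\coloneqq|S_n|^{-1}\sum_{i\in S_n}\xi_i(z)$ and $\xi_i(z)\coloneqq\sum_{j\in\mathcal{E}_i}Y_j\,\bm{1}\{Z_i=z\}$. Since $\{\bm{1}\{Z_i=z\}\}_{i\in S_n}$ are IID (Assumptions~\ref{as:independence} and~\ref{as:dgp}(ii)) with common mean $p_{S_n}(z)\in[\underline{p},\bar{p}]$ (Assumption~\ref{as:overlap}), Chebyshev's inequality gives $\hat p_{S_n}(z)\stackrel{p}{\to}p_{S_n}(z)$ and $\hat p_{S_n}(z)^{-1}=O_p(1)$. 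By Assumption~\ref{as:dgp}(ii), $\Pr[Z_i=z]=p_{S_n}(z)$, so $\bbE[\xi_i(z)]=p_{S_n}(z)\sum_{j\in\mathcal{E}_i}\mu_{ji}^Y(z)$ and hence $\bbE[\bar\xi_{S_n}(z)]=p_{S_n}(z)\,\bar\mu_{S_n}^Y(z;\mathcal{E})$. It therefore suffices to establish the weak LLN $\bar\xi_{S_n}(z)-\bbE[\bar\xi_{S_n}(z)]\stackrel{p}{\to}0$: the standard ratio decomposition then yields $\hat\mu_{S_n}^Y(z;\mathcal{E})-\bar\mu_{S_n}^Y(z;\mathcal{E})\stackrel{p}{\to}0$, and differencing over $z$ gives~(i).

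To prove the weak LLN, I would first show, as in the supplement, that $\{\xi_i(z)\}_{i\in S_n}$ is a uniformly bounded $\psi$-weakly dependent triangular array (Definition~\ref{def:psi}) with dependence coefficients of order $\{\tilde\theta_{n,s}^{\mathrm{AIE}}\}_{s\ge0}$: by Assumption~\ref{as:exposure} together with the ANI bound $\max_{j\in\mathcal{E}_i}\bbE|y_j(\bm{Z})-y_j(\bm{Z}_i^{(s)})|\le\theta_{n,s}^{\mathrm{AIE}}$ embedded in Assumption~\ref{as:ANI2}, the variable $\xi_i(z)$ is, up to an $L^1$-error of order $\theta_{n,\lfloor s/2\rfloor}^{\mathrm{AIE}}$, a function of $\bm{Z}$ restricted to $N_{\bm{A}}(i,s)$ --- which is exactly why $\tilde\theta^{\mathrm{AIE}}$ carries a floor at $s/2$ in~\eqref{eq:tildetheta}. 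Next, I would bound $\Var[\bar\xi_{S_n}(z)]\le|S_n|^{-2}\sum_{i,i'\in S_n}|\Cov(\xi_i(z),\xi_{i'}(z))|$ by splitting the double sum according to $s=\ell_{\bm{A}}(i,i')$: for $s\le 2\max\{K,1\}$ the number of such pairs is $O(|S_n|)$ by Assumption~\ref{as:weakLLN2}(i) and each covariance is bounded through moments of $|\mathcal{E}_i|$, giving an $O(|S_n|^{-1})$ contribution; for larger $s$, $\psi$-weak dependence bounds each covariance by (a moment of $|\mathcal{E}_i|$ times) $\tilde\theta_{n,s}^{\mathrm{AIE}}\,M_{S_n}^{\partial}(s)$, and $|S_n|^{-1}\sum_{s}M_{S_n}^{\partial}(s)\tilde\theta_{n,s}^{\mathrm{AIE}}=o(1)$ by Assumption~\ref{as:weakLLN2}(ii). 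Thus $\Var[\bar\xi_{S_n}(z)]\to 0$ and Chebyshev's inequality finishes the weak LLN. I expect this to be the main obstacle --- not the $\psi$-dependence bookkeeping itself, which copies the ADE case, but the fact that the inner sum now runs over $\mathcal{E}_i\subseteq N_{\bm{A}}(i,K)$, a neighborhood taken over the whole population $N_n$ rather than over $S_n$; the covariance bounds, and the boundedness of $\bar\mu_{S_n}^Y(z;\mathcal{E})$ used above, therefore require denseness conditions on the full network around $S_n$ (the full-network analogues of Assumption~\ref{as:weakLLN2}(i)) rather than only the $S_n$-restricted moments $M_{S_n}^{\partial}$.

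For part~(iii), note that under Assumptions~\ref{as:independence} and~\ref{as:monotone2} one has $\mu_i^D(1)-\mu_i^D(0)=\sum_{\bm{z}_{-i}}\mathcal{C}_i(\bm{z}_{-i})\pi_i(\bm{z}_{-i})=\bbE[\mathcal{C}_i]$ for every $i\in S_n$, so $\mathrm{ADED}_{S_n}=|S_n|^{-1}\sum_{i\in S_n}\bbE[\mathcal{C}_i]\ge c>0$ by Assumption~\ref{as:relevance2}, while $|\mathrm{ADED}_{S_n}|\le 1$; hence the denominator is bounded away from $0$ and from $\infty$ uniformly in $n$. Theorem~\ref{thm:LAIE1} gives $\mathrm{LAIE}_{S_n}=\mathrm{AIEY}_{S_n}/\mathrm{ADED}_{S_n}$ and $\hat{\mathrm{LAIE}}_{S_n}=\hat{\mathrm{AIEY}}_{S_n}/\hat{\mathrm{ADED}}_{S_n}$; applying parts~(i)--(ii) and a ratio expansion (again invoking $|\mathrm{AIEY}_{S_n}|=O(1)$ from the same moment control on $|\mathcal{E}_i|$) yields $\hat{\mathrm{LAIE}}_{S_n}-\mathrm{LAIE}_{S_n}\stackrel{p}{\to}0$.
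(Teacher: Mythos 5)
Your proposal is correct and takes essentially the same route as the paper: the paper proves Theorem \ref{thm:consistency2} through Lemmas \ref{lem:p2} and \ref{lem:AIE_consistency} (the AIE analogues of Lemmas \ref{lem:p1} and \ref{lem:consistency}), i.e.\ a Chebyshev/variance bound for the inverse-probability-weighted sums under $\psi$-weak dependence with coefficients $\{\tilde\theta_{n,s}^{\mathrm{AIE}}\}$, followed by the same ratio arguments for $\hat p_{S_n}(z)$ and for the Wald ratio, with $\mathrm{ADED}_{S_n}\ge c$ from Assumptions \ref{as:relevance2} and \ref{as:monotone2}. The only caveat you flag---boundedness of $\max_{i\in S_n}|\mathcal{E}_i|$ and hence of the covariances and of $\bar\mu_{S_n}^Y(z;\mathcal{E})$---is handled in the paper without extra conditions: the footnote to Lemma \ref{lem:psi2} argues that Assumptions \ref{as:exposure} and \ref{as:weakLLN2} already force $\max_{i\in S_n}|\mathcal{E}_i|=O(1)$, so no separate full-network denseness assumption is introduced.
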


\begin{remark}[Denseness] \label{remark:dense}
	Assumption \ref{as:weakLLN2}(iii) requires that the size of the interference set is uniformly bounded in $i \in S_n$ and $n \in \mathbb{N}$, which plays an essential role in the proof of Theorem \ref{thm:consistency2}.
	If the network at hand is denser than that considered in Assumption \ref{as:weakLLN2}, $\mathcal{E}_i$ may grow with the sample size and our AIE estimators may not achieve the consistency; 
	see Proposition 5 of \cite{li2022random} for a related result.
	Hence, we should be cautious about the denseness of the network especially when estimating the AIE parameters.
\end{remark}

To discuss the asymptotic normality results, for each $i \in S_n$, define
\begin{align*}
	V_{\mathcal{E}_i}^{\mathrm{AIEY}}
	& \coloneqq W_{{\mathcal{E}_i}}^Y - \frac{ \bar \mu_{S_n}^Y(1; \mathcal{E}) }{ p_{S_n}(1) } W_{\mathcal{E}_i}^Z + \frac{ \bar \mu_{S_n}^Y(0; \mathcal{E}) }{ p_{S_n}(0) } W_{\mathcal{E}_i}^{1-Z},
	\qquad 
	V_{\mathcal{E}_i}^{\mathrm{ADED}}
	\coloneqq W_{{\mathcal{E}_i}}^D - \frac{ \bar \mu_{S_n}^D(1) }{ p_{S_n}(1) } W_{\mathcal{E}_i}^Z + \frac{ \bar \mu_{S_n}^D(0) }{ p_{S_n}(0) } W_{\mathcal{E}_i}^{1-Z}, \\
	V_{\mathcal{E}_i}^{\mathrm{LAIE}}
	& \coloneqq \frac{ 1 }{ \mathrm{ADED}_{S_n} } V_{\mathcal{E}_i}^{\mathrm{AIEY}} - \frac{ \mathrm{AIEY}_{S_n} }{ \mathrm{ADED}_{S_n}^2 } V_{\mathcal{E}_i}^{\mathrm{ADED}},
\end{align*}
where
\begin{align} \label{eq:AIE_W}
	\begin{split}
		W_{\mathcal{E}_i}^Y 
		& \coloneqq \sum_{j \in \mathcal{E}_i} Y_j \left[ \frac{ \bm{1}\{ Z_i = 1 \} }{ p_{S_n}(1) } - \frac{ \bm{1}\{ Z_i = 0 \} }{ p_{S_n}(0) } \right],
		\qquad 
		W_{\mathcal{E}_i}^Z
		\coloneqq \bm{1}\{ Z_i = 1 \}, \\
		W_{\mathcal{E}_i}^D 
		& \coloneqq D_i \left[ \frac{ \bm{1}\{ Z_i = 1 \} }{ p_{S_n}(1) } - \frac{ \bm{1}\{ Z_i = 0 \} }{ p_{S_n}(0) } \right], 
		\qquad 
		W_{\mathcal{E}_i}^{1-Z} 
		\coloneqq \bm{1}\{ Z_i = 0 \}.
	\end{split}
\end{align}
Let $(\sigma_{S_n}^\mathrm{AIEY})^2 \coloneqq \Var[ |S_n|^{-1/2} \sum_{i \in S_n} V_{\mathcal{E}_i}^{\mathrm{AIEY}} ]$, and define $( \sigma_{S_n}^\mathrm{LAIE} )^2$ analogously.
With an abuse of notation, let us denote $( \sigma_{S_n}^\mathrm{ADED} )^2 \coloneqq \Var[ |S_n|^{-1/2} \sum_{i \in S_n} V_{\mathcal{E}_i}^{\mathrm{ADED}} ]$.

\begin{assumption}[Weak dependence 4] \label{as:weakCLT2}
	For each $\sigma_{S_n} = \sigma_{S_n}^\mathrm{AIEY}$, $\sigma_{S_n}^\mathrm{ADED}$, and $\sigma_{S_n}^\mathrm{LAIE}$, Assumption \ref{as:weakCLT1} holds when $\tilde \theta_{n,s}^{\mathrm{ADE}}$ is replaced by $\tilde \theta_{n,s}^{\mathrm{AIE}}$.
\end{assumption}

The following theorem presents the asymptotic normality results.

\begin{theorem}\label{thm:normal2}
	Suppose that Assumptions \ref{as:dgp} -- \ref{as:exposure} and \ref{as:ANI2} -- \ref{as:weakCLT2} hold.
	Then, we have
	\begin{align*}
		\begin{array}{cl}
			\text{(i)} & \displaystyle \frac{\sqrt{|S_n|} \left( \hat{\mathrm{AIEY}}_{S_n} - \mathrm{AIEY}_{S_n} \right) }{\sigma_{S_n}^{\mathrm{AIEY}}} \stackrel{d}{\to} \mathrm{Normal}(0, 1)\\
			\text{(ii)} & \displaystyle \frac{\sqrt{|S_n|} \left( \hat{\mathrm{ADED}}_{S_n} - \mathrm{ADED}_{S_n} \right) }{\sigma_{S_n}^{\mathrm{ADED}}} \stackrel{d}{\to} \mathrm{Normal}(0, 1)
		\end{array}
	\end{align*}
	provided that $(\sigma_{S_n}^{\mathrm{AIEY}})^{-1} = O(1)$ and $(\sigma_{S_n}^{\mathrm{ADED}})^{-1} = O(1)$.
	Additionally, if Assumptions \ref{as:relevance2} -- \ref{as:restrict2} hold, we have
	\begin{align*}
		\begin{array}{cl}
			\text{(iii)} & \displaystyle \frac{\sqrt{|S_n|} \left( \hat{\mathrm{LAIE}}_{S_n} - \mathrm{LAIE}_{S_n} \right) }{\sigma_{S_n}^{\mathrm{LAIE}}} \stackrel{d}{\to} \mathrm{Normal}(0, 1),
		\end{array}
	\end{align*}
	provided that the conditions in \eqref{eq:remainder} hold when $\sigma_{S_n}^{\mathrm{ADEY}}$ and $\sigma_{S_n}^{\mathrm{LADE}}$ are replaced by $\sigma_{S_n}^{\mathrm{AIEY}}$ and $\sigma_{S_n}^{\mathrm{LAIE}}$, respectively.
\end{theorem}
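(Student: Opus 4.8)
The plan is to follow the proof of Theorem \ref{thm:normal1} almost verbatim, the only substantive change being that each summand now aggregates outcomes over the interference set $\mathcal{E}_i$ and the governing dependence coefficients are $\tilde\theta_{n,s}^{\mathrm{AIE}}$ rather than $\tilde\theta_{n,s}^{\mathrm{ADE}}$; I would treat (i)--(iii) in turn. For (i), I would first linearize: expanding $\hat\mu_{S_n}^Y(z;\mathcal{E}) - \bar\mu_{S_n}^Y(z;\mathcal{E})$ using $\hat p_{S_n}(z) = p_{S_n}(z) + o_p(1)$ (which holds by Lemma \ref{lem:consistency} under Assumption \ref{as:dgp}) together with a first-order Taylor expansion of $1/\hat p_{S_n}(z)$ about $1/p_{S_n}(z)$, then collecting terms and using $\bbE[W_{\mathcal{E}_i}^Z] = p_{S_n}(1)$ and $\bbE[W_{\mathcal{E}_i}^{1-Z}] = p_{S_n}(0)$, gives
\[
\frac{\sqrt{|S_n|}\bigl(\hat{\mathrm{AIEY}}_{S_n} - \mathrm{AIEY}_{S_n}\bigr)}{\sigma_{S_n}^{\mathrm{AIEY}}} = \frac{1}{\sigma_{S_n}^{\mathrm{AIEY}}\sqrt{|S_n|}}\sum_{i\in S_n}\bigl(V_{\mathcal{E}_i}^{\mathrm{AIEY}} - \bbE[V_{\mathcal{E}_i}^{\mathrm{AIEY}}]\bigr) + o_p(1),
\]
where the $o_p(1)$ swallows the quadratic-in-$(\hat p - p)$ terms since $(\sigma_{S_n}^{\mathrm{AIEY}})^{-1} = O(1)$ and $\bbE W_{\mathcal{E}_i}^Y$ is bounded by Assumptions \ref{as:outcome}--\ref{as:overlap}. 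The same computation delivers the analogous decomposition for $\hat{\mathrm{ADED}}_{S_n}$ in (ii).

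The crux -- and the step I expect to be the main obstacle -- is to verify that $\{V_{\mathcal{E}_i}^{\mathrm{AIEY}}\}_{i\in S_n}$ and $\{V_{\mathcal{E}_i}^{\mathrm{ADED}}\}_{i\in S_n}$ are $\psi$-weakly dependent with coefficients $\{\tilde\theta_{n,s}^{\mathrm{AIE}}\}_{s\ge 0}$ in the sense of Definition \ref{def:psi}, which I would do via Lemma \ref{lem:psi1}. Each $V_{\mathcal{E}_i}^{\mathrm{AIEY}}$ is a bounded function of $Z_i$ and of $\{Y_j:j\in\mathcal{E}_i\}$ with $\mathcal{E}_i\subseteq N_{\bm{A}}(i,K)$, so its ``footprint'' in the IV vector is inflated by $K$ relative to the ADE case: by Assumption \ref{as:exposure} and $\bbE|y_j(\bm{Z})-y_j(\bm{Z}_i^{(s)})|\le\theta_{n,s}^{\mathrm{AIE}}$ from Assumption \ref{as:ANI2}, $V_{\mathcal{E}_i}^{\mathrm{AIEY}}$ depends on IVs outside $N_{\bm{A}}(i,K+r)$ only up to an $O(\theta_{n,r}^{\mathrm{AIE}})$ error. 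When $\ell_{\bm{A}}(i,i') = s > 2\max\{K,1\}$, the neighborhoods $N_{\bm{A}}(i,\lfloor s/2\rfloor)$ and $N_{\bm{A}}(i',\lfloor s/2\rfloor)$ are disjoint (a unit within $\lfloor s/2\rfloor$ of $i'$ is at distance at least $\lceil s/2\rceil$ from $i$ by the triangle inequality), so coupling the IVs outside these two neighborhoods with an independent copy perturbs each of $V_{\mathcal{E}_i}^{\mathrm{AIEY}}$ and $V_{\mathcal{E}_{i'}}^{\mathrm{AIEY}}$ by at most a constant multiple of $\theta_{n,\lfloor s/2\rfloor - K}^{\mathrm{AIE}}$; the fixed offset $K$ is immaterial for the summability bookkeeping and is absorbed exactly as in the definition \eqref{eq:tildetheta} of $\tilde\theta^{\mathrm{AIE}}$, so the resulting coefficient is $\tilde\theta_{n,s}^{\mathrm{AIE}}$. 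For $s\le 2\max\{K,1\}$ the trivial bound $\tilde\theta_{n,s}^{\mathrm{AIE}}=1$ together with Assumption \ref{as:weakLLN1}(i) (invoked through Assumption \ref{as:weakLLN2}) suffices.

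With $\psi$-weak dependence in hand, I would invoke the CLT for $\psi$-weakly dependent arrays of \cite{kojevnikov2021limit}: Assumption \ref{as:weakCLT2} supplies precisely its rate conditions (with $\tilde\theta^{\mathrm{AIE}}$ in place of $\tilde\theta^{\mathrm{ADE}}$), uniform boundedness of the summands from Assumptions \ref{as:outcome}--\ref{as:overlap} replaces their moment requirements, and $(\sigma_{S_n}^{\mathrm{AIEY}})^{-1} = (\sigma_{S_n}^{\mathrm{ADED}})^{-1} = O(1)$ keeps the normalization non-degenerate; this proves (i) and, identically, (ii). Finally, for (iii) I would write $\hat{\mathrm{LAIE}}_{S_n} = \hat{\mathrm{AIEY}}_{S_n}/\hat{\mathrm{ADED}}_{S_n}$ and Taylor-expand about $(\mathrm{AIEY}_{S_n},\mathrm{ADED}_{S_n})$, yielding
\[
\sqrt{|S_n|}\bigl(\hat{\mathrm{LAIE}}_{S_n} - \mathrm{LAIE}_{S_n}\bigr) = \frac{1}{\sqrt{|S_n|}}\sum_{i\in S_n}\bigl(V_{\mathcal{E}_i}^{\mathrm{LAIE}} - \bbE[V_{\mathcal{E}_i}^{\mathrm{LAIE}}]\bigr) + R_n,
\]
with $R_n/\sigma_{S_n}^{\mathrm{LAIE}} = o_p(1)$ by the three conditions in \eqref{eq:remainder} (read with $\sigma^{\mathrm{AIEY}},\sigma^{\mathrm{LAIE}}$ replacing $\sigma^{\mathrm{ADEY}},\sigma^{\mathrm{LADE}}$), which bound the second-order product and square terms. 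Since $V_{\mathcal{E}_i}^{\mathrm{LAIE}}$ is a fixed linear combination of $V_{\mathcal{E}_i}^{\mathrm{AIEY}}$ and $V_{\mathcal{E}_i}^{\mathrm{ADED}}$, it inherits $\psi$-weak dependence with the same coefficients, and one last application of the CLT, using $(\sigma_{S_n}^{\mathrm{LAIE}})^{-1} = O(1)$, gives (iii).
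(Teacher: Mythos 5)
Your proposal is correct and takes essentially the same route as the paper: linearize the inverse-probability-weighted estimators around $p_{S_n}(z)$, show that $\{V_{\mathcal{E}_i}^{\mathrm{AIEY}}\}$, $\{V_{\mathcal{E}_i}^{\mathrm{ADED}}\}$, and $\{V_{\mathcal{E}_i}^{\mathrm{LAIE}}\}$ are $\psi$-weakly dependent with coefficients $\{\tilde\theta_{n,s}^{\mathrm{AIE}}\}$ (the paper does this via Lemmas \ref{lem:psi2} and \ref{lem:linearpsi2}, the AIE analogues of the Lemma \ref{lem:psi1} you invoke, with $\hat p_{S_n}(z)$ handled by Lemma \ref{lem:p2}), apply the CLT of \cite{kojevnikov2021limit}, and finish (iii) by the delta-method bound under \eqref{eq:remainder}. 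The only detail worth making explicit is that uniform boundedness of $W_{\mathcal{E}_i}^Y$ and of the $\psi$-function constants relies on $\max_{i\in S_n}|\mathcal{E}_i|=O(1)$, which the paper deduces from Assumptions \ref{as:exposure} and \ref{as:weakLLN2}; also note that in the paper's coupling the coefficient $\theta_{n,\lfloor s/2\rfloor}^{\mathrm{AIE}}$ appears with no $-K$ offset, since $\theta_{n,s}^{\mathrm{AIE}}$ is already defined through resampling outside $i$'s (not $j$'s) $s$-neighborhood.
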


\section{Numerical Illustrations} \label{sec:numerical}

\subsection{Monte Carlo simulation} \label{subsec:simulation}

We investigate the finite sample properties of our methods using a set of Monte Carlo experiments.
We conduct these experiments based on an artificial ring-shaped network and on a real students' friendship network separately.
To save space, the detailed experimental setups and simulation results are summarized in Appendix G.

The main findings are as follows:
First, regardless of whether the IEM is correctly- or mis-specified, our estimators work satisfactorily well with sufficiently small biases.
Second, the RMSE values for the LADE can be significantly large in some cases.
This is because in some situations, the estimated ADED is nearly zero, resulting in extremely large LADE estimates.
This phenomenon is not persistent in other DGPs where the ADED is sufficiently away from zero.
Third, overall, we can observe that the empirical coverage rates are reasonably close to the nominal 95\% level for both HAC and bootstrap estimators in the experiments based on the artificial network.
For the experiments based on the real network, the coverage rates tend to be slightly below the nominal level.
Considering that these confidence intervals contain non-negligible asymptotic biases, the above results should be attributable to this bias to some extent.

\subsection{Empirical illustration} \label{subsec:application}

We apply the proposed methods to the data from \citeauthor{paluck2016changing}'s (\citeyear{paluck2016changing}) field experiment on anti-conflict intervention programs at American middle schools.
During the 2012-2013 school year, the research team organized intervention meetings to help students identify common conflict behaviors in their schools and instruct them on behavioral strategies to mitigate conflicts.

The data include $n = 24,471$ students in 56 public middle schools in the state of New Jersey.
A group of students (called {\it seed-eligible students}) were non-randomly selected by the research team, and half of these students (called {\it seed students} or {\it treatment-eligible students}) were randomly invited to join the program.

The students' social networks were measured by asking them to nominate up to 10 students in their school with whom they had spent time in person or online in the past few weeks.
We construct a symmetric adjacency matrix $\bm{A}$ by treating the pair of students as friends if either student nominated the other, as in \cite{aronow2017estimating}.

In our analysis, $Z_i \in \{ 0, 1 \}$ indicates whether student $i$ received an invitation to the program (i.e., whether student $i$ was a seed student), and $D_i \in \{ 0, 1 \}$ represents the participation in the intervention program (i.e., whether student $i$ attended at least one meeting).
Let $Y_i \in \{ 0, 1 \}$ be an indicator for the wearing of a program wristband given by the treated students as a reward to students for engaging in conflict-mitigating behaviors.
This is regarded as a proxy variable of student's willingness to endorse anti-conflict norms and behaviors, and the same outcome variable is used in \cite{aronow2017estimating} and \cite{leung2022causal}.
We consider the following two IEMs:  $T_{1i} = \bm{1}\{ \sum_{j \neq i} A_{ij} Z_j > 0 \}$ and $T_{2i} = \bm{1}\{ \sum_{j \neq i} A_{ij} D_j > 0 \}$, respectively labeled as ``IEM1'' and ``IEM2''.
For the estimation of ADEs conditional on $T_i = t$, in view of Assumption \ref{as:dgp}(ii), we focus on the sub-populations $S_n(\delta) = \{ i \in N_n: i$ is seed-eligible and has $\delta$ seed-eligible friend(s) $\}$ for $\delta \in \{ 1, 2, 3 \}$.
Meanwhile, when estimating the AIE parameters, we consider the sub-population $S_n^{\ge 1} = \{ i \in N_n: i$ is seed-eligible and has at least one friend $\}$.

Panels (a) and (b) of Table \ref{table:empirical1} present the ITT estimates for the ADE with the standard errors based on the network HAC estimation using bandwidth $b_n = 2$.\footnote{
	Almost the same HAC estimates were obtained for other bandwidths $b_n \in \{1, 3\}$. 
	In addition, the wild bootstrap produced similar standard errors to those reported here.
	To save space, we omit those results.
}
Receiving an invitation has a statistically significant positive effect on the probability of wearing a wristband, which is consistent with previous findings (e.g., \citealp{aronow2017estimating}; \citealp{leung2022causal}).
For example, the estimate of $\mathrm{ADEY}_{S_n(1)}(0)$ for IEM1 indicates that receiving an invitation leads to about a nine percentage point increase in the probability of wearing a wristband for the seed-eligible students whose seed-eligible friend is not treatment-eligible.
Similarly, the ADED estimates indicate positive effects of receiving an invitation on the probability of participation, which supports the IV relevance condition in Assumption \ref{as:relevance1}.

The LADE estimates and their standard errors based on the HAC estimation are also reported in panels (a) and (b) of Table \ref{table:empirical1}.
For example, the estimate of $\mathrm{LADE}_{S_n(1)}(1)$ based on IEM1 indicates an eighteen percentage point increase in the probability of wearing a wristband for the seed-eligible students who have a treatment-eligible friend.
Importantly, the LADE estimates tend to be larger than the corresponding ITT estimates, implying that the ITT analysis might underestimate the effect of the anti-conflict intervention program.

Nonetheless, we should be cautious in interpreting the LADE estimates because the interpretation of LADE essentially depends on which sufficient condition we consider for Assumption \ref{as:restrict1}.
As discussed in Section \ref{subsec:LADE}, the sufficient condition in \eqref{eq:sufficient3} would be met here, and the LADE aggregates the direct effect from participating in the intervention program and the spillover effect from the student's own treatment eligibility.
However, since one's treatment eligibility seems to have little impact on the others' treatment choice as observed below, the LADE should mainly account for the direct effect of the intervention program.

Panel (c) of Table \ref{table:empirical1} presents the AIE estimates when we set $K = 1$ in line with IEM1 and IEM2.
The estimates of $\mathrm{AIEY}_{S_n}$ and $\mathrm{LAIE}_{S_n}$ are positive and statistically significant, indicating substantial spillover effects of one's treatment eligibility and treatment take-up on others' wristband wearing.
By contrast, the estimated $\mathrm{AIED}_{S_n}$ provides no strong evidence on such spillovers between treatment eligibility and treatment decision.

\begin{table}[t]
	\caption{Empirical results} \label{table:empirical1}
	\begin{footnotesize}
		\begin{subtable}[h]{\textwidth}
			\centering
			\caption{Average direct effects conditional on $T_{1i} = 0$ or $T_{2i} = 0$}
			\begin{tabular}{rrrrcrrcrrcrr}
				\hline
				&\multicolumn{3}{c}{\bfseries }&\multicolumn{1}{c}{\bfseries }&\multicolumn{2}{c}{\bfseries $\mathrm{ADEY}_{S_n}(0)$}&\multicolumn{1}{c}{\bfseries }&\multicolumn{2}{c}{\bfseries $\mathrm{ADED}_{S_n}(0)$}&\multicolumn{1}{c}{\bfseries }&\multicolumn{2}{c}{\bfseries $\mathrm{LADE}_{S_n}(0)$}\tabularnewline
				\cline{6-7} \cline{9-10} \cline{12-13}
				&\multicolumn{1}{c}{$S_n$}&\multicolumn{1}{c}{$|S_n|$}&\multicolumn{1}{c}{$b_n$}&\multicolumn{1}{c}{}&\multicolumn{1}{c}{Estimate}&\multicolumn{1}{c}{SE}&\multicolumn{1}{c}{}&\multicolumn{1}{c}{Estimate}&\multicolumn{1}{c}{SE}&\multicolumn{1}{c}{}&\multicolumn{1}{c}{Estimate}&\multicolumn{1}{c}{SE}\tabularnewline
				\hline
				\multicolumn{13}{l}{\underline{(i) IEM1}} \tabularnewline
				&$S_n(1)$&$1023$&$2$&&$0.089$&$0.025$&&$0.473$&$0.038$&&$0.187$&$0.050$\tabularnewline
				&$S_n(2)$&$ 702$&$2$&&$0.104$&$0.041$&&$0.380$&$0.056$&&$0.273$&$0.100$\tabularnewline
				&$S_n(3)$&$ 315$&$2$&&$0.286$&$0.089$&&$0.571$&$0.115$&&$0.500$&$0.118$\tabularnewline
				\hline
				\multicolumn{13}{l}{\underline{(ii) IEM2}} \tabularnewline
				&$S_n(1)$&$1023$&$2$&&$0.058$&$0.017$&&$0.324$&$0.031$&&$0.179$&$0.047$\tabularnewline
				&$S_n(2)$&$ 702$&$2$&&$0.046$&$0.017$&&$0.158$&$0.028$&&$0.289$&$0.097$\tabularnewline
				&$S_n(3)$&$ 315$&$2$&&$0.057$&$0.027$&&$0.125$&$0.038$&&$0.453$&$0.144$\tabularnewline
				\hline
			\end{tabular}
		\end{subtable}
		\begin{subtable}[h]{\textwidth}
			\centering
			\caption{Average direct effects conditional on $T_{1i} = 1$ or $T_{2i} = 1$}
			\begin{tabular}{rrrrcrrcrrcrr}
				\hline
				&\multicolumn{3}{c}{\bfseries }&\multicolumn{1}{c}{\bfseries }&\multicolumn{2}{c}{\bfseries $\mathrm{ADEY}_{S_n}(1)$}&\multicolumn{1}{c}{\bfseries }&\multicolumn{2}{c}{\bfseries $\mathrm{ADED}_{S_n}(1)$}&\multicolumn{1}{c}{\bfseries }&\multicolumn{2}{c}{\bfseries $\mathrm{LADE}_{S_n}(1)$}\tabularnewline
				\cline{6-7} \cline{9-10} \cline{12-13}
				&\multicolumn{1}{c}{$S_n$}&\multicolumn{1}{c}{$|S_n|$}&\multicolumn{1}{c}{$b_n$}&\multicolumn{1}{c}{}&\multicolumn{1}{c}{Estimate}&\multicolumn{1}{c}{SE}&\multicolumn{1}{c}{}&\multicolumn{1}{c}{Estimate}&\multicolumn{1}{c}{SE}&\multicolumn{1}{c}{}&\multicolumn{1}{c}{Estimate}&\multicolumn{1}{c}{SE}\tabularnewline
				\hline
				\multicolumn{13}{l}{\underline{(i) IEM1}} \tabularnewline
				&$S_n(1)$&$1023$&$2$&&$0.076$&$0.029$&&$0.415$&$0.041$&&$0.182$&$0.065$\tabularnewline
				&$S_n(2)$&$ 702$&$2$&&$0.060$&$0.024$&&$0.431$&$0.042$&&$0.140$&$0.051$\tabularnewline
				&$S_n(3)$&$ 315$&$2$&&$0.063$&$0.035$&&$0.366$&$0.057$&&$0.171$&$0.090$\tabularnewline
				\hline
				\multicolumn{13}{l}{\underline{(ii) IEM2}} \tabularnewline
				&$S_n(1)$&$1023$&$2$&&$0.187$&$0.061$&&$0.917$&$0.029$&&$0.204$&$0.066$\tabularnewline
				&$S_n(2)$&$ 702$&$2$&&$0.118$&$0.049$&&$0.908$&$0.031$&&$0.130$&$0.053$\tabularnewline
				&$S_n(3)$&$ 315$&$2$&&$0.193$&$0.075$&&$0.941$&$0.033$&&$0.205$&$0.082$\tabularnewline
				\hline
			\end{tabular}
		\end{subtable}
		\begin{subtable}[h]{\textwidth}
			\centering
			\caption{Average indirect effects with $K = 1$}
			\begin{tabular}{rrrcrrcrrcrrcrr}
				\hline
				\multicolumn{3}{c}{\bfseries }&\multicolumn{1}{c}{\bfseries }&\multicolumn{2}{c}{\bfseries $\mathrm{AIEY}_{S_n}$}&\multicolumn{1}{c}{\bfseries }&\multicolumn{2}{c}{\bfseries $\mathrm{AIED}_{S_n}$}&\multicolumn{1}{c}{\bfseries }&\multicolumn{2}{c}{\bfseries $\mathrm{ADED}_{S_n}$}&\multicolumn{1}{c}{\bfseries }&\multicolumn{2}{c}{\bfseries $\mathrm{LAIE}_{S_n}$}\tabularnewline
				\cline{5-6} \cline{8-9} \cline{11-12} \cline{14-15}
				\multicolumn{1}{c}{$S_n$}&\multicolumn{1}{c}{$|S_n|$}&\multicolumn{1}{c}{$b_n$}&\multicolumn{1}{c}{}&\multicolumn{1}{c}{Estimate}&\multicolumn{1}{c}{SE}&\multicolumn{1}{c}{}&\multicolumn{1}{c}{Estimate}&\multicolumn{1}{c}{SE}&\multicolumn{1}{c}{}&\multicolumn{1}{c}{Estimate}&\multicolumn{1}{c}{SE}&\multicolumn{1}{c}{}&\multicolumn{1}{c}{Estimate}&\multicolumn{1}{c}{SE}\tabularnewline
				\hline
				$S_n^{\ge 1}$&$2244$&$2$&&$0.113$&$0.045$&&$-0.050$&$0.028$&&$0.416$&$0.030$&&$0.270$&$0.104$\tabularnewline
				\hline
			\end{tabular}
		\end{subtable}
	\end{footnotesize}
\end{table}

\if1\blind
{
	\bigskip
	\begin{center}
		{\large\bf ACKNOWLEDGMENTS}
	\end{center}
	
	The authors are grateful to the co-editor, the associate editors, three anonymous referees, and Ryo Okui for their beneficial comments and suggestions.
	This work was supported by JSPS KAKENHI Grant Numbers 19H01473 and 20K01597.
	The data set used in this study is available through the Inter-university Consortium for Political and Social Research (\citealp{paluck2020changing}).
} \fi

\if0\blind
{
} \fi 

\bigskip
\begin{center}
	{\large\bf DISCLOSURE STATEMENT}
\end{center}

The authors report there are no competing interests to declare.


\bigskip
\begin{center}
	{\large\bf SUPPLEMENTARY MATERIAL}
\end{center}

\begin{description}
	
	\item[Supplement:] The proofs of all technical results and the other supplementary results (PDF file)
	
	\item[Replication:] The ACC form and the codes to reproduce the computational results (ZIP file)
	
\end{description}

\small 
\bibliographystyle{Chicago}
\bibliography{network.bib}

\clearpage 

\renewcommand{\thepage}{S\arabic{page}}
\renewcommand{\thetable}{S\arabic{table}}
\setcounter{page}{1}
\setcounter{table}{0}

\begin{center}
	{\Large Supplementary Appendix for ``Causal Inference with Noncompliance and Unknown Interference''}
	
	\textbf{(Not for publication)}
	\bigskip
	
	{\large Tadao Hoshino$^*$ and Takahide Yanagi$^\dagger$}
	\bigskip
	
	$^*$ School of Political Science and Economics, Waseda University.
	
	$^\dagger$ Graduate School of Economics, Kyoto University.
\end{center}

\bigskip
\begin{abstract}
	Appendices \ref{app:proofs} and \ref{sec:lemma} provide the proofs of all technical results in the main text.
	In Appendix \ref{sec:inference}, we develop statistical inference methods based on network HAC estimation and a wild bootstrap approach.
	We consider the identification and estimation of average overall effects and average spillover effects in Appendices \ref{sec:overall} and \ref{sec:spillover}, respectively.
	Appendix \ref{sec:identification2} contains the additional discussion of the identification analysis developed in Section \ref{sec:identification}.
	In Appendix \ref{sec:MC}, we report the results of Monte Carlo experiments.
	Appendix \ref{sec:application2} presents the additional empirical results.
\end{abstract}

\noindent%
\vfill

\newpage
\spacingset{1.9} 


\appendix

\section{Proofs}\label{app:proofs}

\subsection{Proof of Proposition \ref{prop:ITT}}

We prove only the result for the outcome variable, and that for the treatment receipt can be shown in the same manner.
We first note that the observed outcome can be written as 
\begin{align*}
	Y_i = \sum_{z_i=0}^1 \sum_{\bm{z}_{-i} \in \{0, 1\}^{n-1}} \bm{1} \{ Z_i = z_i, \bm{Z}_{-i} = \bm{z}_{-i} \} y_i(z_i, \bm{z}_{-i}).
\end{align*}
We then observe that
\begin{align} \label{eq:muY}
	\begin{split}
		\mu_i^Y(z, t) 
		& = \sum_{\bm{z}_{-i} \in \{0, 1\}^{n-1}} y_i(z, \bm{z}_{-i}) \Pr[\bm{Z}_{-i} = \bm{z}_{-i} \mid Z_i = z, T_i = t ] \\
		& = \sum_{\bm{z}_{-i} \in \{0, 1\}^{n-1}} y_i(z, \bm{z}_{-i}) \pi_i(\bm{z}_{-i}, t),
	\end{split}
\end{align}
where the second line follows from Assumption \ref{as:independence} and the definition of $T_i = T(i, \bm{Z}_{-i}, \bm{A})$.
This equality implies the result for $\mathrm{ADEY}_{S_n}(t)$.
\qed

\subsection{Proof of Proposition \ref{prop:AIE_ITT}}

Observe that $Y_j = \sum_{z_i=0}^1 \sum_{\bm{z}_{-i} \in \{0, 1\}^{n-1}} \bm{1} \{ Z_i = z_i, \bm{Z}_{-i} = \bm{z}_{-i} \} y_j(Z_i = z_i, \bm{Z}_{-i} = \bm{z}_{-i})$.
Then, we have
\begin{align*}
	\mu_{ji}^Y(z)
	& = \sum_{\bm{z}_{-i} \in \{ 0, 1 \}^{n-1}} y_j(Z_i = z, \bm{Z}_{-i} = \bm{z}_{-i}) \Pr[\bm{Z}_{-i} = \bm{z}_{-i} \mid Z_i = z] \\
	& = \sum_{\bm{z}_{-i} \in \{ 0, 1 \}^{n-1}} y_j(Z_i = z, \bm{Z}_{-i} = \bm{z}_{-i}) \pi_i(\bm{z}_{-i}),
\end{align*}
where the second line follow from Assumption \ref{as:independence}.
The completes the proof for $\mathrm{AIEY}_{S_n}$.
The result for $\mathrm{AIED}_{S_n}$ can be shown in the same manner.
\qed 

\subsection{Proof of Theorem \ref{thm:LADE}} \label{subsec:proofLADE}

Observe that $D_i = \sum_{z_i=0}^1 \sum_{\bm{z}_{-i} \in \{0,1\}^{n-1}} \bm{1}\{Z_i = z_i, \bm{Z}_{-i} = \bm{z}_{-i} \} D_i(z_i, \bm{z}_{-i})$.
By Assumption \ref{as:independence}, it holds that   
\begin{align*}
	\mu_i^D(z, t) 
	& = \sum_{\bm{z}_{-i} \in \{0,1\}^{n-1}} D_i(z, \bm{z}_{-i}) \pi_i(\bm{z}_{-i}, t) \\
	& = \sum_{\bm{z}_{-i} \in \{0,1\}^{n-1}} \bm{1} \{ D_i(1, \bm{z}_{-i}) \neq D_i(0, \bm{z}_{-i}) \} D_i(z, \bm{z}_{-i}) \pi_i(\bm{z}_{-i}, t) \\
	& \quad + \sum_{\bm{z}_{-i} \in \{0,1\}^{n-1}} \bm{1} \{ D_i(1, \bm{z}_{-i}) = D_i(0, \bm{z}_{-i}) \} D_i(z, \bm{z}_{-i}) \pi_i(\bm{z}_{-i}, t).
\end{align*}
Thus, Assumption \ref{as:monotone1} implies that
\begin{align*}
	\mathrm{ADED}_{S_n}(t) 
	& = \frac{1}{|S_n|} \sum_{i \in S_n} [ \mu_i^D(1, t)  - \mu_i^D(0, t) ] \\
	& = \frac{1}{|S_n|} \sum_{i \in S_n} \sum_{\bm{z}_{-i} \in \{0,1\}^{n-1}} \bm{1} \{ D_i(1, \bm{z}_{-i}) \neq D_i(0, \bm{z}_{-i}) \} \{ D_i(1, \bm{z}_{-i}) - D_i(0, \bm{z}_{-i}) \} \pi_i(\bm{z}_{-i}, t) \\
	& \quad + \frac{1}{|S_n|} \sum_{i \in S_n} \sum_{\bm{z}_{-i} \in \{0,1\}^{n-1}} \bm{1} \{ D_i(1, \bm{z}_{-i}) = D_i(0, \bm{z}_{-i}) \} \{ D_i(1, \bm{z}_{-i}) - D_i(0, \bm{z}_{-i}) \} \pi_i(\bm{z}_{-i}, t) \\
	& = \frac{1}{|S_n|} \sum_{i \in S_n} \sum_{\bm{z}_{-i} \in \{0,1\}^{n-1}} \mathcal{C}_i(\bm{z}_{-i}) \pi_i(\bm{z}_{-i}, t) \\
	& = \frac{1}{|S_n|} \sum_{i \in S_n} \bbE[ \mathcal{C}_i \mid T_i = t ].
\end{align*}
In the same manner, we can show that 
\begin{align*}
	\mathrm{ADEY}_{S_n}(t) 
	& = \frac{1}{|S_n|} \sum_{i \in S_n} [\mu_i^Y(1, t)  - \mu_i^Y(0, t) ] \\
	& = \frac{1}{|S_n|} \sum_{i \in S_n} \sum_{\bm{z}_{-i} \in \{0,1\}^{n-1}} \{ y_i(1, \bm{z}_{-i}) - y_i(0, \bm{z}_{-i}) \} \bm{1} \{ D_i(1, \bm{z}_{-i}) \neq D_i(0, \bm{z}_{-i}) \} \pi_i(\bm{z}_{-i}, t) \\
	& \quad + \frac{1}{|S_n|} \sum_{i \in S_n} \sum_{\bm{z}_{-i} \in \{0,1\}^{n-1}} \{ y_i(1, \bm{z}_{-i}) - y_i(0, \bm{z}_{-i}) \} \bm{1} \{ D_i(1, \bm{z}_{-i}) = D_i(0, \bm{z}_{-i}) \} \pi_i(\bm{z}_{-i}, t) \\
	& = \frac{1}{|S_n|} \sum_{i \in S_n} \sum_{\bm{z}_{-i} \in \{0,1\}^{n-1}} \{ y_i(1, \bm{z}_{-i}) - y_i(0, \bm{z}_{-i}) \} \mathcal{C}_i(\bm{z}_{-i}) \pi_i(\bm{z}_{-i}, t),
\end{align*}
where the last line follows from Assumptions \ref{as:monotone1} and \ref{as:restrict1}.
Combining these equalities with Assumption \ref{as:relevance1}, we obtain the desired result.
\qed

\begin{remark}[Interpretation of the Wald-type estimand when Assumption \ref{as:restrict1} fails]
	In the absence of Assumption \ref{as:restrict1} (restricted interference), under Assumptions \ref{as:independence} -- \ref{as:monotone1}, it is straightforward to see from the proof of Theorem \ref{thm:LADE} that
	\begin{align*}
		\frac{ \mathrm{ADEY}_{S_n}(t) }{ \mathrm{ADED}_{S_n}(t) }
		= \sum_{i \in S_n} \sum_{\bm{z}_{-i} \in \{0,1\}^{n-1}} \{ y_i(1, \bm{z}_{-i}) - y_i(0, \bm{z}_{-i}) \} \frac{ \pi_i(\bm{z}_{-i}, t) }{ \sum_{i \in S_n} \sum_{\bm{z}_{-i} \in \{0,1\}^{n-1}} \mathcal{C}_i(\bm{z}_{-i}) \pi_i(\bm{z}_{-i}, t) }.
	\end{align*}
	The right-hand side would be more difficult to interpret than the LADE parameter because the weight function here does not generally sum to one, that is,
	\begin{align*}
		\sum_{i \in S_n} \sum_{\bm{z}_{-i} \in \{0,1\}^{n-1}} \frac{ \pi_i(\bm{z}_{-i}, t) }{ \sum_{i \in S_n} \sum_{\bm{z}_{-i} \in \{0,1\}^{n-1}} \mathcal{C}_i(\bm{z}_{-i}) \pi_i(\bm{z}_{-i}, t) }
		\neq 1 
		\qquad \text{in general.}
	\end{align*}
\end{remark}

\subsection{Proof of Theorem \ref{thm:LAIE1}}

Using Assumptions \ref{as:independence} and \ref{as:monotone2}, the same arguments as in the proof of Theorem \ref{thm:LADE} can show that
\begin{align*}
	\mathrm{ADED}_{S_n}
	= \frac{1}{|S_n|} \sum_{i \in S_n} \sum_{\bm{z}_{-i} \in \{ 0, 1\}^{n-1}} \mathcal{C}_i(\bm{z}_{-i}) \pi_i(\bm{z}_{-i})
	= \frac{1}{|S_n|} \sum_{i \in S_n} \bbE[ \mathcal{C}_i ].
\end{align*}
Furthermore, we can see that
\begin{align*}
	& \mathrm{AIEY}_{S_n} \\
	& = \frac{1}{|S_n|} \sum_{i \in S_n} \sum_{j \in \mathcal{E}_i} \bbE[ y_j(Z_i = 1, \bm{Z}_{-i}) - y_j(Z_i = 0, \bm{Z}_{-i}) ] \\
	& = \frac{1}{|S_n|} \sum_{i \in S_n} \sum_{j \in \mathcal{E}_i} \bbE[ \{ y_j(Z_i = 1, \bm{Z}_{-i}) - y_j(Z_i = 0, \bm{Z}_{-i}) \} \cdot \bm{1}\{ D_i(1, \bm{Z}_{-i}) \neq D_i(0, \bm{Z}_{-i}) \} ] \\
	& \quad + \frac{1}{|S_n|} \sum_{i \in S_n} \sum_{j \in \mathcal{E}_i} \bbE[ \{ y_j(Z_i = 1, \bm{Z}_{-i}) - y_j(Z_i = 0, \bm{Z}_{-i}) \} \cdot \bm{1}\{ D_i(1, \bm{Z}_{-i}) = D_i(0, \bm{Z}_{-i}) \} ] \\
	& = \frac{1}{|S_n|} \sum_{i \in S_n} \sum_{j \in \mathcal{E}_i} \bbE[ \{ y_j(Z_i = 1, \bm{Z}_{-i}) - y_j(Z_i = 0, \bm{Z}_{-i}) \} \mathcal{C}_i(\bm{Z}_{-i}) ] \\
	& = \frac{1}{|S_n|} \sum_{i \in S_n} \sum_{\bm{z}_{-i} \in \{ 0, 1 \}^{n-1}} \sum_{j \in \mathcal{E}_i} \{ y_j(Z_i = 1, \bm{Z}_{-i} = \bm{z}_{-i}) - y_j(Z_i = 0, \bm{Z}_{-i} = \bm{z}_{-i}) \} \mathcal{C}_i(\bm{z}_{-i}) \pi_i(\bm{z}_{-i}),
\end{align*}
where the first equality follows from Assumption \ref{as:independence} and the third equality follows from Assumptions \ref{as:monotone2} and \ref{as:restrict2}.
In conjunction with Assumption \ref{as:relevance2}, we obtain the desired result.
\qed

\subsection{Proofs of Theorems \ref{thm:consistency1} and \ref{thm:consistency2}}

The proofs are straightforward from Lemmas \ref{lem:consistency} and \ref{lem:AIE_consistency}.
\qed

\subsection{Proof of Theorem \ref{thm:normal1}}

\paragraph{Proof of result (i).}
Let 
\begin{align*}
	\check \mu_{S_n}^Y(z, t) 
	\coloneqq \frac{1}{|S_n|} \sum_{i \in S_n} \frac{Y_i \cdot \bm{1}\{ Z_i = z, T_i = t \}}{p_{S_n}(z, t)}.
\end{align*}
Observe that
\begin{align*}
	\hat \mu_{S_n}^Y(z, t) 
	& =   \frac{\check \mu_{S_n}^Y(z, t)}{\hat p_{S_n}(z, t)} p_{S_n}(z, t)\\
	& = \check \mu_{S_n}^Y(z,t) - \frac{\check \mu_{S_n}^Y(z, t) }{\hat p_{S_n}(z, t)} [\hat p_{S_n}(z, t) - p_{S_n}(z, t)] \\
	& = \check \mu_{S_n}^Y(z,t) - \frac{\bar \mu_{S_n}^Y(z, t)}{p_{S_n}(z, t)} [ \hat p_{S_n}(z, t) - p_{S_n}(z, t)] - \frac{\check \mu_{S_n}^Y(z,t) - \bar \mu_{S_n}^Y(z, t)}{\hat p_{S_n}(z, t)} [\hat p_{S_n}(z, t) - p_{S_n}(z, t)] \\
	& \quad + \bar \mu_{S_n}^Y(z, t) \left( \frac{1}{p_{S_n}(z, t)} -  \frac{1}{\hat p_{S_n}(z, t)} \right) [ \hat p_{S_n}(z, t) - p_{S_n}(z, t)] \\
	& = \check \mu_{S_n}^Y(z,t) - \frac{\bar \mu_{S_n}^Y(z, t)}{p_{S_n}(z, t)} [ \hat p_{S_n}(z, t) - p_{S_n}(z, t)] + o_P\left( \frac{1}{\sqrt{|S_n|}} \right),
\end{align*}
where the last equality holds from Lemmas \ref{lem:p1} and \ref{lem:consistency} and Assumption \ref{as:overlap}.
Using this, we have
\begin{align}\label{eq:ADEYdecomp}
	\begin{split}
		& \hat{\mathrm{ADEY}}_{S_n}(t) - \mathrm{ADEY}_{S_n}(t) \\
		& = \hat \mu_{S_n}^Y(1, t) - \hat \mu_{S_n}^Y(0, t) - \bar \mu_{S_n}^Y(1, t) + \bar \mu_{S_n}^Y(0, t) \\ 
		& = \check \mu_{S_n}^Y(1, t) - \check \mu_{S_n}^Y(0, t) - \frac{\bar \mu_{S_n}^Y(1, t)}{p_{S_n}(1, t)} [ \hat p_{S_n}(1, t) - p_{S_n}(1, t)] + \frac{\bar \mu_{S_n}^Y(0, t)}{p_{S_n}(0, t)} [ \hat p_{S_n}(0, t) - p_{S_n}(0, t)]\\ 
		& \quad - \bar \mu_{S_n}^Y(1, t) + \bar \mu_{S_n}^Y(0, t) + o_P\left( \frac{1}{\sqrt{|S_n|}} \right) \\
		& = \frac{1}{|S_n|} \sum_{i \in S_n} \left( [W_i^Y - \bbE W_i^Y] - \frac{\bar \mu_{S_n}^Y(1, t)}{p_{S_n}(1, t)}[W_i^Z - \bbE W_i^Z] + \frac{\bar \mu_{S_n}^Y(0, t)}{p_{S_n}(0, t)} [W_i^{1-Z} - \bbE W_i^{1-Z}] \right) + o_P\left( \frac{1}{\sqrt{|S_n|}} \right) \\
		& = \frac{1}{|S_n|} \sum_{i \in S_n} \left( V_i^{\mathrm{ADEY}} - \bbE [V_i^{\mathrm{ADEY}}] \right) + o_P\left( \frac{1}{\sqrt{|S_n|}} \right).
	\end{split}
\end{align}
By Lemma \ref{lem:linearpsi1}, $\{ V_i^{\mathrm{ADEY}} \}_{i \in S_n}$ is $\psi$-weakly dependent with the dependence coefficients $\{ \tilde \theta_{n, s}^{\mathrm{ADE}} \}_{s \ge 0}$.
Then, letting $\tilde G_{S_n}^{\mathrm{ADEY}} \coloneqq |S_n|^{-1/2} \sum_{i \in S_n} (V_i^{\mathrm{ADEY}} - \bbE [V_i^{\mathrm{ADEY}}]) / \sigma_{S_n}^{\mathrm{ADEY}}$, the same arguments as in the proofs of Lemmas A.2 and A.3 of \cite{kojevnikov2021limit} show that there exists a positive constant $C > 0$ such that 
\begin{align*}
	& \sup_{a \in \mathbb{R}} \left| \Pr\left( \tilde G_{S_n}^{\mathrm{ADEY}} \le a \right) - \Phi( a ) \right| \\
	& \le C \sum_{k=1}^2 \left( \sqrt{ \frac{1}{|S_n|^{k/2} (\sigma_{S_n}^{\mathrm{ADEY}})^{2+k}} \sum_{s=0}^{n-1} c_{S_n}(s, m_n; k) (\tilde \theta_{n,s}^{\mathrm{ADE}})^{1-\varepsilon}} + \frac{|S_n|^{k/2}}{(\sigma_{S_n}^{\mathrm{ADEY}})^k} (\tilde \theta_{n,s}^{\mathrm{ADE}})^{1 - \varepsilon} \right),
\end{align*}
where $\Phi$ denotes the cumulative distribution function of $\mathrm{Normal}(0, 1)$, and $m_n$ and $\varepsilon$ are as given in Assumption \ref{as:weakCLT1}.
The right-hand side converges to zero by Assumption \ref{as:weakCLT1}, implying that $\tilde G_{S_n}^{\mathrm{ADEY}} \stackrel{d}{\to} \mathrm{Normal}(0, 1)$.
Thus, we have
\begin{align*}
	\frac{\sqrt{|S_n|}\left( \hat{\mathrm{ADEY}}_{S_n}(t) - \mathrm{ADEY}_{S_n}(t) \right)}{\sigma_{S_n}^{\mathrm{ADEY}}}
	= \tilde G_{S_n}^{\mathrm{ADEY}} + o_P\left( \frac{1}{\sigma_{S_n}^{\mathrm{ADEY}}} \right)
	\stackrel{d}{\to} \mathrm{Normal}(0, 1),
\end{align*}
under the condition $(\sigma_{S_n}^{\mathrm{ADEY}})^{-1} = O(1)$.

\paragraph{Proof of result (ii).}

Result (ii) can be shown in the same manner as in result (i).

\paragraph{Proof of result (iii).}

We can observe that
\begin{align*}
	& \hat{\mathrm{LADE}}_{S_n}(t) - \mathrm{LADE}_{S_n}(t) \\
	& = \frac{1}{\mathrm{ADED}_{S_n}(t)} [\hat{\mathrm{ADEY}}_{S_n}(t) - \mathrm{ADEY}_{S_n}(t)] - \frac{\hat{\mathrm{ADEY}}_{S_n}(t)}{\hat{\mathrm{ADED}}_{S_n}(t) \mathrm{ADED}_{S_n}(t)} [\hat{\mathrm{ADED}}_{S_n}(t) - \mathrm{ADED}_{S_n}(t)] \\
	& = \frac{1}{\mathrm{ADED}_{S_n}(t)} [\hat{\mathrm{ADEY}}_{S_n}(t) - \mathrm{ADEY}_{S_n}(t)] - \frac{\mathrm{ADEY}_{S_n}(t)}{[\mathrm{ADED}_{S_n}(t)]^2} [\hat{\mathrm{ADED}}_{S_n}(t) - \mathrm{ADED}_{S_n}(t)] \\
	& \quad - \left( \frac{\hat{\mathrm{ADEY}}_{S_n}(t)}{\hat{\mathrm{ADED}}_{S_n}(t) \mathrm{ADED}_{S_n}(t)} - \frac{\mathrm{ADEY}_{S_n}(t)}{[\mathrm{ADED}_{S_n}(t)]^2}\right) [\hat{\mathrm{ADED}}_{S_n}(t) - \mathrm{ADED}_{S_n}(t)] \\
	& = \frac{1}{|S_n|}\sum_{i \in S_n}\left( \frac{1}{\mathrm{ADED}_{S_n}(t)} [V_i^\mathrm{ADEY} - \bbE V_i^\mathrm{ADEY}] - \frac{\mathrm{ADEY}_{S_n}(t)}{[\mathrm{ADED}_{S_n}(t)]^2} [V_i^{\mathrm{ADED}} - \bbE V_i^{\mathrm{ADED}}]\right) \\
	& \quad + O_P \left( \frac{\sigma_{S_n}^{\mathrm{ADEY}}\sigma_{S_n}^{\mathrm{ADED}}}{|S_n|} \right) + O_P \left( \frac{(\sigma_{S_n}^{\mathrm{ADED}})^2}{|S_n|} \right) + o_P\left( \frac{1}{\sqrt{|S_n|}}\right) \\
	& = \frac{1}{|S_n|} \sum_{i \in S_n} \left( V_i^{\mathrm{LADE}} - \bbE[V_i^{\mathrm{LADE}}] \right) + O_P \left( \frac{\sigma_{S_n}^{\mathrm{ADEY}}\sigma_{S_n}^{\mathrm{ADED}}}{|S_n|} \right) + O_P \left( \frac{(\sigma_{S_n}^{\mathrm{ADED}})^2}{|S_n|} \right) + o_P\left( \frac{1}{\sqrt{|S_n|}}\right),
\end{align*}
where the third line follows from Assumption \ref{as:relevance1} and results (i) -- (ii).
Here, $V_i^{\mathrm{LADE}}$ is uniformly bounded by Assumptions \ref{as:relevance1}, \ref{as:outcome}, and \ref{as:overlap}, and Lemma \ref{lem:linearpsi1} implies that $\{ V_i^{\mathrm{LADE}} \}_{i \in S_n}$ is $\psi$-weakly dependent with the dependence coefficients $\{ \tilde \theta_{n, s}^{\mathrm{ADE}} \}_{s \ge 0}$.
Then, letting  $\tilde G_{S_n}^{\mathrm{LADE}} \coloneqq |S_n|^{-1/2} \sum_{i \in S_n} (V_i^{\mathrm{LADE}} - \bbE[ V_i^{\mathrm{LADE}} ]) / \sigma_{S_n}^{\mathrm{LADE}}$, the same arguments as in the proof of result (i) show that $\tilde G_{S_n}^{\mathrm{LADE}} \stackrel{d}{\to} \mathrm{Normal}(0, 1)$.
Thus, in conjunction with \eqref{eq:remainder}, we obtain
\begin{align*}
	& \frac{\sqrt{|S_n|} \left( \hat{\mathrm{LADE}}_{S_n}(t) - \mathrm{LADE}_{S_n}(t) \right)}{\sigma_{S_n}^{\mathrm{LADE}}} \\
	& = \tilde G_{S_n}^{\mathrm{LADE}} + O_P \left( \frac{\sigma_{S_n}^{\mathrm{ADEY}}\sigma_{S_n}^{\mathrm{ADED}}}{\sqrt{|S_n|}\sigma_{S_n}^{\mathrm{LADE}}} \right) + O_P \left( \frac{(\sigma_{S_n}^{\mathrm{ADED}})^2}{\sqrt{|S_n|}\sigma_{S_n}^{\mathrm{LADE}}} \right) +  o_P\left( \frac{1}{\sigma_{S_n}^{\mathrm{LADE}}} \right)\\
	& \stackrel{d}{\to} \mathrm{Normal}(0, 1).
\end{align*}
\qed

\subsection{Proof of Theorem \ref{thm:normal2}}

\paragraph{Proof of result (i).}

Let
\begin{align*}
	\check \mu_{S_n}^Y(z; \mathcal{E}) 
	\coloneqq \frac{1}{|S_n|} \sum_{i \in S_n} \sum_{j \in \mathcal{E}_i} \frac{ Y_j \cdot \bm{1}\{ Z_i = z \} }{ p_{S_n}(z) }.
\end{align*}
Using Lemmas \ref{lem:p2} and \ref{lem:AIE_consistency} and Assumptions \ref{as:overlap} and \ref{as:weakLLN2}, the same arguments as in the proof of Theorem \ref{thm:normal1}(i) can show that 
\begin{align*}
	\hat \mu_{S_n}^Y(z; \mathcal{E}) 
	& = \check \mu_{S_n}^Y(z; \mathcal{E}) - \frac{ \bar \mu_{S_n}^Y(z; \mathcal{E}) }{ p_{S_n}(z) } [ \hat p_{S_n}(z) - p_{S_n}(z) ] + o_P\left( \frac{1}{\sqrt{|S_n|}} \right).
\end{align*}
Then, in the same manner as in the proof of Theorem \ref{thm:normal1}(i), we can obtain
\begin{align*}
	\hat{\mathrm{AIEY}}_{S_n} - \mathrm{AIEY}_{S_n}
	& = \frac{1}{|S_n|} \sum_{i \in S_n} \left( V_{\mathcal{E}_i}^{\mathrm{AIEY}} - \bbE [ V_{\mathcal{E}_i}^{\mathrm{AIEY}} ] \right) + o_P\left( \frac{1}{\sqrt{|S_n|}} \right).
\end{align*}
By Lemma \ref{lem:linearpsi2}, $\{ V_{\mathcal{E}_i}^{\mathrm{AIEY}} \}_{i \in S_n}$ is $\psi$-weakly dependent with the dependence coefficients $\{ \tilde \theta_{n,s}^{\mathrm{AIE}} \}_{s \ge 0}$.
Hence, the same arguments as in the proof of Theorem \ref{thm:normal1}(i) can lead to the desired result.

\paragraph{Proofs of results (ii) and (iii).}

The proofs are similar to those of Theorem \ref{thm:normal1}(ii) and (iii) and thus omitted.
\qed

\section{Lemmas} \label{sec:lemma}

\subsection{Lemmas for Theorem \ref{thm:consistency1}}

\begin{lemma} \label{lem:p1}
	Suppose that Assumptions \ref{as:dgp}, \ref{as:exposure}, and \ref{as:weakLLN1}(i) hold.
	Then, we have
	\begin{align*}
		\hat p_{S_n}(z, t) - p_{S_n}(z, t) = O_P\left( \frac{1}{\sqrt{|S_n|}} \right)
	\end{align*}
	for all $z \in \{ 0, 1 \}$ and $t \in \mathcal{T}$.
\end{lemma}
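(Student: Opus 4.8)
The plan is to show that $\hat p_{S_n}(z,t)$ is unbiased for $p_{S_n}(z,t)$ with variance of order $1/|S_n|$, so that the claimed $O_P$-rate follows from Chebyshev's inequality. Write $\xi_i \coloneqq \bm{1}\{Z_i = z, T_i = t\}$, so that $\hat p_{S_n}(z,t) = |S_n|^{-1}\sum_{i \in S_n}\xi_i$. By Assumption \ref{as:dgp}(ii) the pairs $\{(Z_i,T_i)\}_{i \in S_n}$ are identically distributed, hence $\bbE[\xi_i] = p_{S_n}(z,t)$ for every $i \in S_n$ and therefore $\bbE[\hat p_{S_n}(z,t)] = p_{S_n}(z,t)$; the remaining work is to bound the variance.

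The first step is to record the local dependence structure of $\{\xi_i\}_{i\in S_n}$. Since $T_i = T(i,\bm{Z}_{-i},\bm{A})$ and, by Assumption \ref{as:exposure}, $T(i,\cdot,\bm{A})$ depends on $\bm{Z}$ only through the sub-vector $\bm{Z}_{N_{\bm{A}}(i,K)}$ (holding $\bm{A}$ fixed), each $\xi_i$ is a measurable function of $\bm{Z}_{N_{\bm{A}}(i,K)}$ (using that $i \in N_{\bm{A}}(i,K)$). If $\ell_{\bm{A}}(i,j) > 2K$, then $N_{\bm{A}}(i,K)\cap N_{\bm{A}}(j,K) = \varnothing$, since any common node $k$ would give $\ell_{\bm{A}}(i,j)\le\ell_{\bm{A}}(i,k)+\ell_{\bm{A}}(k,j)\le 2K$. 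Together with the mutual independence of $\{Z_i\}_{i\in N_n}$ in Assumption \ref{as:independence} (restated in Assumption \ref{as:dgp}(i)), this gives $\Cov(\xi_i,\xi_j) = 0$ whenever $\ell_{\bm{A}}(i,j) > 2K$.

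The second step is the variance bound. Because $|\xi_i|\le 1$, we have $|\Cov(\xi_i,\xi_j)|\le 1$, so
\begin{align*}
	\Var\!\left[\sum_{i\in S_n}\xi_i\right]
	= \sum_{i\in S_n}\ \sum_{\substack{j\in S_n\\ \ell_{\bm{A}}(i,j)\le 2K}}\Cov(\xi_i,\xi_j)
	\le \sum_{i\in S_n}\sum_{s=0}^{2K}\bigl|S_{\bm{A}}^{\partial}(i,s)\bigr|
	= |S_n|\sum_{s=0}^{2K} M_{S_n}^{\partial}(s).
\end{align*}
By Assumption \ref{as:weakLLN1}(i), $M_{S_n}^{\partial}(s) = O(1)$ for each $s\in\{1,\dots,2K\}$, and $M_{S_n}^{\partial}(0)=1$ trivially; since $K$ is a fixed integer, $\sum_{s=0}^{2K}M_{S_n}^{\partial}(s) = O(1)$. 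Hence $\Var[\hat p_{S_n}(z,t)] = |S_n|^{-2}\Var[\sum_{i\in S_n}\xi_i] = O(|S_n|^{-1})$. Combining this with the unbiasedness and applying Chebyshev's inequality yields $\hat p_{S_n}(z,t)-p_{S_n}(z,t) = O_P(|S_n|^{-1/2})$, and the same reasoning applies for every $z\in\{0,1\}$ and $t\in\mathcal{T}$.

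I do not expect a serious obstacle: this is a routine variance computation for a graph-locally dependent array. The one point to handle with care is the bookkeeping that converts the $2K$-locality into the variance bound — namely, that the number of $j\in S_n$ within graph distance $2K$ of $i$ equals $\sum_{s=0}^{2K}|S_{\bm{A}}^{\partial}(i,s)|$, whose $S_n$-average is exactly what Assumption \ref{as:weakLLN1}(i) controls — together with noting that Assumption \ref{as:dgp}(ii) is precisely what makes $p_{S_n}(z,t)$ a well-defined common mean, so that $\hat p_{S_n}(z,t)$ is genuinely unbiased rather than merely concentrating around an average of heterogeneous probabilities.
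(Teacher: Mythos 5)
Your proposal is correct and follows essentially the same argument as the paper: unbiasedness from Assumption \ref{as:dgp}(ii), covariances vanishing beyond graph distance $2K$ via Assumptions \ref{as:dgp}(i) and \ref{as:exposure}, a variance bound of order $|S_n|^{-1}\sum_{s\le 2K}M_{S_n}^{\partial}(s)$ controlled by Assumption \ref{as:weakLLN1}(i), and Chebyshev's inequality. The only cosmetic difference is that you bound the nonzero covariances by $1$ directly, while the paper invokes Cauchy--Schwarz; both yield the same order.
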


\begin{proof}
	By Assumption \ref{as:dgp}(ii), $\bbE[\hat p_{S_n}(z, t)] = p_{S_n}(z, t)$, and thus it suffices to show that $\Var[\hat p_{S_n}(z, t)] = O(|S_n|^{-1})$.
	Observe that
	\begin{align*}
		& \Var[\hat p_{S_n}(z, t)] \\
		& = \frac{1}{|S_n|^2} \sum_{i \in S_n} \Var[\bm{1}\{ Z_i = z, T_i = t \}] + \frac{1}{|S_n|^2} \sum_{i \in S_n} \sum_{j \in S_n \setminus \{ i \}} \Cov[\bm{1}\{ Z_i = z, T_i = t \}, \bm{1}\{ Z_j = z, T_j = t \}] \\
		& = O\left( \frac{1}{|S_n|} \right) + \frac{1}{|S_n|^2} \sum_{i \in S_n} \sum_{j \in S_n } \sum_{s \ge 1} \bm{1}\{\ell_{\bm{A}}(i, j) = s\} \Cov[\bm{1}\{ Z_i = z, T_i = t \}, \bm{1}\{ Z_j = z, T_j = t \}] \\
		& = O\left( \frac{1}{|S_n|} \right) + \frac{1}{|S_n|^2} \sum_{i \in S_n} \sum_{j \in S_n } \sum_{s = 1}^{2K} \bm{1}\{\ell_{\bm{A}}(i, j) = s\} \Cov[\bm{1}\{ Z_i = z, T_i = t \}, \bm{1}\{ Z_j = z, T_j = t \}],
	\end{align*}
	where the last equality follows from the fact that, for any $i, j \in S_n$ such that $\ell_{\bm{A}}(i, j) > 2K$, $(Z_i, T_i)$ is independent of $(Z_j, T_j)$ by Assumptions \ref{as:dgp}(i) and \ref{as:exposure}.
	By the Cauchy--Schwarz inequality, the second term of the last line is bounded above by $|S_n|^{-1} \sum_{s=1}^{2K} M_{S_n}^{\partial}(s)$ which is $O(|S_n|^{-1})$ by Assumption \ref{as:weakLLN1}(i).
	This completes the proof.
\end{proof}

\begin{lemma} \label{lem:consistency}
	Suppose that Assumptions \ref{as:dgp} -- \ref{as:weakLLN1} hold.
	Then, we have
	\begin{align*}
		\renewcommand{\arraystretch}{2}
		\begin{array}{cl}
			\text{(i)}   & \hat \mu_{S_n}^Y(z, t) - \bar \mu_{S_n}^Y(z, t) = o_P(1), \\
			\text{(ii)}  & \hat \mu_{S_n}^D(z, t) - \bar \mu_{S_n}^D(z, t) = o_P(1), \\
		\end{array}	
	\end{align*}
	for all $z \in \{ 0, 1 \}$ and a given $t \in \mathcal{T}$.
	Further, $\sqrt{|S_n|}$-consistency is achieved if Assumption \ref{as:weakLLN1}(ii) is strengthened to the condition in Remark \ref{remark:rate}.
\end{lemma}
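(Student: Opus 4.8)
The plan is to replace $\hat\mu_{S_n}^Y(z,t)$ by the infeasible estimator $\check\mu_{S_n}^Y(z,t) \coloneqq |S_n|^{-1}\sum_{i\in S_n} Y_i\,\bm{1}\{Z_i=z,T_i=t\}/p_{S_n}(z,t)$ that uses the true propensity, to establish an $L^2$ bound for $\check\mu_{S_n}^Y$, and to control the error from estimating $p_{S_n}(z,t)$ by $\hat p_{S_n}(z,t)$. First I would write
\[
\hat\mu_{S_n}^Y(z,t)-\check\mu_{S_n}^Y(z,t) = -\frac{\check\mu_{S_n}^Y(z,t)}{\hat p_{S_n}(z,t)}\bigl(\hat p_{S_n}(z,t)-p_{S_n}(z,t)\bigr).
\]
By Lemma \ref{lem:p1} the last factor is $O_P(|S_n|^{-1/2})$, by Assumption \ref{as:overlap} (together with $\hat p_{S_n}(z,t)\stackrel{p}{\to}p_{S_n}(z,t)\ge\underline p$) we have $\hat p_{S_n}(z,t)^{-1}=O_P(1)$, and once $\check\mu_{S_n}^Y(z,t)=O_P(1)$ is shown the difference is $O_P(|S_n|^{-1/2})=o_P(1)$; so it suffices to analyze $\check\mu_{S_n}^Y(z,t)$.

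For the infeasible estimator I would first check unbiasedness: since $(Z_i,T_i)$ is identically distributed over $i\in S_n$ (Assumption \ref{as:dgp}(ii)), $\bbE[Y_i\bm{1}\{Z_i=z,T_i=t\}]=\mu_i^Y(z,t)\,p_{S_n}(z,t)$, hence $\bbE[\check\mu_{S_n}^Y(z,t)]=\bar\mu_{S_n}^Y(z,t)$, which is bounded by $\bar y$ under Assumption \ref{as:outcome}. It then remains to bound the variance. Writing $g_i\coloneqq Y_i\bm{1}\{Z_i=z,T_i=t\}/p_{S_n}(z,t)$, which is uniformly bounded by $\bar y/\underline p$ by Assumptions \ref{as:outcome}--\ref{as:overlap}, I would decompose
\begin{align*}
\Var[\check\mu_{S_n}^Y(z,t)] = \frac{1}{|S_n|^2}\sum_{i\in S_n}\Var(g_i) + \frac{1}{|S_n|^2}\sum_{s=1}^{n-1}\sum_{i\in S_n}\sum_{j\in S_{\bm{A}}^\partial(i,s)}\Cov(g_i,g_j),
\end{align*}
where the diagonal term is $O(|S_n|^{-1})$. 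For the off-diagonal part I claim a pairwise bound $|\Cov(g_i,g_j)|\le C\,\tilde\theta_{n,s}^{\mathrm{ADE}}$ whenever $\ell_{\bm{A}}(i,j)=s$, with $C$ depending only on $\bar y,\underline p$; substituting gives
\[
\Bigl|\frac{1}{|S_n|^2}\sum_{s=1}^{n-1}\sum_{i\in S_n}\sum_{j\in S_{\bm{A}}^\partial(i,s)}\Cov(g_i,g_j)\Bigr| \le \frac{C}{|S_n|}\sum_{s=1}^{n-1}M_{S_n}^\partial(s)\,\tilde\theta_{n,s}^{\mathrm{ADE}},
\]
which is $o(1)$ by Assumption \ref{as:weakLLN1}(ii). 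Chebyshev's inequality then gives $\check\mu_{S_n}^Y(z,t)-\bar\mu_{S_n}^Y(z,t)=o_P(1)$, which together with the first step proves (i); statement (ii) follows by the same argument with $D_i$ in place of $Y_i$ and $\bar y$ replaced by $1$, using the $D$-component of $\theta_{n,s}^{\mathrm{ADE}}$.

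The main obstacle is the covariance bound $|\Cov(g_i,g_j)|\le C\,\tilde\theta_{n,s}^{\mathrm{ADE}}$, i.e.\ converting the ANI condition into pairwise weak dependence. For $s\le 2\max\{K,1\}$ this is immediate from $|\Cov(g_i,g_j)|\le\bar y^2/\underline p^{\,2}=C\cdot 1=C\tilde\theta_{n,s}^{\mathrm{ADE}}$. For larger $s$, I would choose $r\approx s/2$ so that $N_{\bm{A}}(i,r)$ and $N_{\bm{A}}(j,r)$ are disjoint, set $g_i^{(r)}\coloneqq y_i(\bm{Z}_i^{(r)})\bm{1}\{Z_i=z,T_i=t\}/p_{S_n}(z,t)$ --- the indicator being unaffected by the replacement because $T_i$ depends only on $\bm{Z}_{N_{\bm{A}}(i,K)}$ with $N_{\bm{A}}(i,K)\subseteq N_{\bm{A}}(i,r)$ (Assumption \ref{as:exposure}) --- and then pass to $\bar g_i^{(r)}\coloneqq\bbE[g_i^{(r)}\mid\bm{Z}]$, which is a function of $(Z_k)_{k\in N_{\bm{A}}(i,r)}$ only. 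By the definition of $\theta_{n,s}^{\mathrm{ADE}}$ and Jensen's inequality, $\bbE|g_i-\bar g_i^{(r)}|\le 3\theta_{n,r}^{\mathrm{ADE}}/\underline p$, and by Assumption \ref{as:independence} the two approximants are functions of disjoint blocks of independent coordinates, hence uncorrelated; combining with boundedness of $g_i,\bar g_i^{(r)}$ via the elementary inequality $|\Cov(X,Y)|\le 2\|Y\|_\infty\bbE|X|$ yields $|\Cov(g_i,g_j)|=|\Cov(g_i,g_j)-\Cov(\bar g_i^{(r)},\bar g_j^{(r)})|\le C\,\theta_{n,r}^{\mathrm{ADE}}=C\,\tilde\theta_{n,s}^{\mathrm{ADE}}$. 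This is precisely the $\psi$-weak dependence property underlying the auxiliary lemmas for Theorem \ref{thm:normal1}, specialized to the unweighted summand $Y_i\bm{1}\{Z_i=z,T_i=t\}$; ANI (Assumption \ref{as:ANI1}) guarantees $\theta_{n,r}^{\mathrm{ADE}}\to0$ so that the bound is genuinely small. Finally, for the $\sqrt{|S_n|}$-consistency claim, the strengthened condition of Remark \ref{remark:rate} upgrades the covariance-sum bound to $O(|S_n|^{-1})$, so $\Var[\check\mu_{S_n}^Y(z,t)]=O(|S_n|^{-1})$, $\check\mu_{S_n}^Y(z,t)-\bar\mu_{S_n}^Y(z,t)=O_P(|S_n|^{-1/2})$, and retracing the first step gives $\hat\mu_{S_n}^Y(z,t)-\bar\mu_{S_n}^Y(z,t)=O_P(|S_n|^{-1/2})$.
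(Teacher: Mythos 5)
Your proposal is correct and takes essentially the same route as the paper's proof: reduce $\hat \mu_{S_n}^Y(z,t)$ to the infeasible $\check \mu_{S_n}^Y(z,t)$ via Lemma \ref{lem:p1} and Assumptions \ref{as:outcome}--\ref{as:overlap}, verify $\bbE[\check \mu_{S_n}^Y(z,t)] = \bar \mu_{S_n}^Y(z,t)$, and bound the variance by $C|S_n|^{-1}\sum_{s} M_{S_n}^{\partial}(s)\,\tilde \theta_{n,s}^{\mathrm{ADE}}$ through the ANI coupling (which the paper delegates to arguments as in \cite{leung2022causal}), concluding with Chebyshev and the Remark \ref{remark:rate} strengthening for the $\sqrt{|S_n|}$ rate. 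Your device of conditioning the coupled variables on $\bm{Z}$ to obtain uncorrelated approximants is only a cosmetic variant of the paper's use of two independent copies $\bm{Z}'$ and $\bm{Z}''$.
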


\begin{proof}
	We prove only the first result since the second one can be shown in the same way.
	Observe that
	\begin{align*}
		\hat \mu_{S_n}^Y(z, t)
		& = \check \mu_{S_n}^Y(z, t) - \frac{\check \mu_{S_n}^Y(z, t)}{\hat p_{S_n}(z, t)} [\hat p_{S_n}(z, t) - p_{S_n}(z, t)] \\
		& = \check \mu_{S_n}^Y(z, t) + O_P \left( \frac{1}{\sqrt{|S_n|}} \right),
	\end{align*}
	by Lemma \ref{lem:p1} and Assumptions \ref{as:outcome} and \ref{as:overlap}.
	Here, it is easy to see that $\bbE[ \check \mu_{S_n}^Y(z, t) ] = \bar \mu_{S_n}^Y(z, t)$.
	Further, letting $Q_i^Y \coloneqq Y_i \cdot \bm{1}\{ Z_i = z, T_i = t \} / p_{S_n}(z, t)$, we can see that
	\begin{align*}
		\Var\left[ \check \mu_{S_n}^Y(z, t) \right]
		& = \frac{1}{|S_n|^2} \sum_{i \in S_n} \Var[Q_i^Y] + \frac{1}{|S_n|^2} \sum_{i \in S_n} \sum_{j \in S_n \setminus \{ i \}} \Cov[Q_i^Y, Q_j^Y] \\
		& = O\left( \frac{1}{|S_n|} \right) + \frac{1}{|S_n|^2} \sum_{s = 1}^{n - 1}\sum_{i \in S_n} \sum_{j \in S_n }  \bm{1}\{ \ell_{\bm{A}}(i, j) = s \} \Cov[Q_i^Y, Q_j^Y].
	\end{align*}
	Using Assumptions \ref{as:dgp} -- \ref{as:ANI1}, similar arguments to the proof of Theorem 2 of \cite{leung2022causal} can show that the second term in the last line is bounded above by $C |S_n|^{-1} \sum_{s = 1}^{n - 1} M_{S_n}^{\partial}(s) \tilde \theta_{n,s}^{\mathrm{ADE}}$ for some positive constant $C$.
	Thus, we obtain the desired result by Assumption \ref{as:weakLLN1}(ii) or the condition in Remark \ref{remark:rate} and Chebyshev's inequality.
\end{proof}

\subsection{Lemmas for Theorem \ref{thm:normal1}}

For completeness, we define $\psi$-dependence in line with Definition 2.2 of \cite{kojevnikov2021limit}.
For $d \in \mathbb{N}$, let $\mathcal{L}_d$ be the set of real-valued bounded Lipschitz functions on $\mathbb{R}^d$:
\begin{align*}
	\mathcal{L}_d 
	\coloneqq \{ f:\mathbb{R}^d \to \mathbb{R}: \; \| f \|_{\infty} < \infty, \; \Lip(f) < \infty \},    
\end{align*}
where $\| f \|_{\infty} \coloneqq \sup_{x \in \mathbb{R}^d} |f(x)|$ and $\Lip(f)$ indicates the Lipschitz constant of $f$ (with respect to the Euclidean norm).
We write the distance between subsets $H, H' \subset S_n$ by $\ell_{\bm{A}}(H, H') \coloneqq \min \{ \ell_{\bm{A}}(i,j): i \in H, j \in H' \}$.
For $h, h' \in \mathbb{N}$, denote the collection of pairs $(H, H')$ whose sizes are $h$ and $h'$, respectively, with distance at least $s$ as 
\begin{align*}
	\mathcal{P}_{S_n}(h, h', s) \coloneqq \{ (H, H'): H, H' \subset S_n, |H| = h, |H'| = h', \ell_{\bm{A}}(H, H') \ge s \}.
\end{align*}
For a generic random vector $\bm{W}_{n,i} \in \mathbb{R}^v$, let $\bm{W}_{n, H} = (\bm{W}_{n,i})_{i \in H}$ and $\bm{W}_{n, H'} = (\bm{W}_{n,i})_{i \in H'}$.

\begin{definition}[$\psi$-dependence]\label{def:psi}
	A triangular array $\{ \bm{W}_{n,i} \}_{i \in S_n}$ is called {\it $\psi$-dependent}, if for each $n \in \mathbb{N}$, there exist a sequence of uniformly bounded constants $\{ \tilde \theta_{n,s} \}_{s \ge 0}$ with $\tilde \theta_{n,0} = 1$ and a collection of nonrandom functions $\{ \psi_{h,h'}\}_{h, h' \in \mathbb{N}}$, where $\psi_{h, h'}: \mathcal{L}_{hv} \times \mathcal{L}_{h'v} \to [0, \infty)$, such that for all $s > 0$, $(H, H') \in \mathcal{P}_{S_n}(h, h', s)$, $f \in \mathcal{L}_{hv}$, and $f' \in \mathcal{L}_{h'v}$,
	\begin{align*}
		| \Cov[ f(\bm{W}_{n,H}), f'(\bm{W}_{n,H'}) ] | \le \psi_{h,h'}(f, f') \tilde \theta_{n,s}.
	\end{align*}
	The sequence $\{ \tilde \theta_{n,s} \}_{s \ge 0}$ is called the {\it dependence coefficients} of $\{ \bm{W}_{n,i} \}_{i \in S_n}$.
	Further, if $\sup_{n \in \mathbb{N}} \tilde \theta_{n,s} \to 0$ as $s \to \infty$, we say that $\{ \bm{W}_{n,i} \}_{i \in S_n}$ is {\it $\psi$-weakly dependent}.
\end{definition}

Denote $\bm{W}_i \coloneqq (W_i^Y, W_i^D, W_i^Z, W_i^{1-Z})$, whose elements are as defined in \eqref{eq:W}.
For a subset $H \subset S_n$ with $|H| = h$, we write $\bm{W}_H = (\bm{W}_i)_{i \in H}$.

\begin{lemma}\label{lem:psi1}
	Under Assumptions \ref{as:dgp} -- \ref{as:ANI1}, the triangular array $\{ \bm{W}_i \}_{i \in S_n}$ is $\psi$-weakly dependent with the dependence coefficients $\{ \tilde \theta_{n,s}^{\mathrm{ADE}} \}_{s \ge 0}$ defined by \eqref{eq:tildetheta} and 
	\begin{align*}
		\psi_{h, h'}(f, f') = C [ \| f \|_{\infty} \| f' \|_{\infty} + h \| f' \|_{\infty} \Lip(f) + h' \| f \|_{\infty} \Lip(f') ],
		\qquad 
		\forall h, h' \in \mathbb{N}, f \in \mathcal{L}_{4h}, f' \in \mathcal{L}_{4h'},
	\end{align*}
	with some positive constant $C$.
\end{lemma}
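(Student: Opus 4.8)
The plan is to verify Definition~\ref{def:psi} directly. Fix $n$ and abbreviate $\theta_s := \theta_{n,s}^{\mathrm{ADE}}$ and $\tilde\theta_s := \tilde\theta_{n,s}^{\mathrm{ADE}}$; recall $\tilde\theta_s = \theta_{\lfloor s/2\rfloor}$ for $s > 2\max\{K,1\}$ and $\tilde\theta_s = 1$ otherwise. The first ingredient is uniform boundedness: by Assumptions~\ref{as:outcome} and \ref{as:overlap}, $|Y_i|\le\bar y$, $|D_i|\le 1$, and every inverse-probability weight lies in $[1,1/\underline p]$, so each coordinate of $\bm W_i$ is bounded by a constant $\bar W=\bar W(\bar y,\underline p)$ uniformly in $i$ and in the realization. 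Hence, for $f\in\mathcal L_{4h}$, $f(\bm W_H)$ satisfies $\|f(\bm W_H)\|_\infty\le\|f\|_\infty$ and has a finite second moment; and for $s\le 2\max\{K,1\}$, where $\tilde\theta_s=1$, the crude bound $|\Cov[f(\bm W_H),f'(\bm W_{H'})]|\le 2\|f\|_\infty\|f'\|_\infty$ already suffices provided $C\ge 2$.

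The substantive case is $s > 2\max\{K,1\}$, which I would treat by the coupling/resampling argument of \cite{leung2022causal} and \cite{kojevnikov2021limit}. Set $r:=\lfloor s/2\rfloor\ge K$, fix $(H,H')\in\mathcal P_{S_n}(h,h',s)$, and let $B:=\bigcup_{i\in H}N_{\bm A}(i,r)$. Introduce copies $\bm Z',\bm Z''$ of $\bm Z$, mutually independent and independent of $\bm Z$. Define $\bm W_i^{(1)}$ ($i\in H$) as $\bm W_i$ computed with $\bm Z_{B^c}$ replaced by $\bm Z'_{B^c}$, and $\bm W_j^{(2)}$ ($j\in H'$) as $\bm W_j$ computed with $\bm Z_B$ replaced by $\bm Z''_B$. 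Two facts make this work. First, because $T(i,\cdot,\bm A)$ and $T(j,\cdot,\bm A)$ depend on $\bm Z$ only through the respective $K$-neighborhoods (Assumption~\ref{as:exposure}), and $r\ge K$ while $\ell_{\bm A}(j,H)\ge s > r+K$, the indicator factors $\bm{1}\{Z_i=\cdot,T_i=t\}$ and $\bm{1}\{Z_j=\cdot,T_j=t\}$ are unchanged by these substitutions; thus $\bm W_i^{(1)}$ (resp.\ $\bm W_j^{(2)}$) differs from $\bm W_i$ (resp.\ $\bm W_j$) only through $y_i,D_i$ (resp.\ $y_j,D_j$), and the components $W^Z,W^{1-Z}$ are unaffected. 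Second, $(\bm W_i^{(1)})_{i\in H}$ is a measurable function of $(\bm Z_B,\bm Z')$ while $(\bm W_j^{(2)})_{j\in H'}$ is a measurable function of $(\bm Z_{B^c},\bm Z'')$, and since $\bm Z_B,\bm Z_{B^c},\bm Z',\bm Z''$ are mutually independent, $\Cov[f(\bm W_H^{(1)}),f'(\bm W_{H'}^{(2)})]=0$.

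It remains to bound the two remainder covariances in $\Cov[f(\bm W_H),f'(\bm W_{H'})]=\Cov[f(\bm W_H)-f(\bm W_H^{(1)}),f'(\bm W_{H'})]+\Cov[f(\bm W_H^{(1)}),f'(\bm W_{H'})-f'(\bm W_{H'}^{(2)})]$. Using $\|f'(\bm W_{H'})\|_\infty\le\|f'\|_\infty$, $\|f(\bm W_H^{(1)})\|_\infty\le\|f\|_\infty$, and the Lipschitz property, these are at most $2\|f'\|_\infty\Lip(f)\,\bbE\|\bm W_H-\bm W_H^{(1)}\|$ and $2\|f\|_\infty\Lip(f')\,\bbE\|\bm W_{H'}-\bm W_{H'}^{(2)}\|$. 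For $i\in H$, the previous paragraph and boundedness give $\|\bm W_i-\bm W_i^{(1)}\|\le C\big(|y_i(\bm Z)-y_i(\bm Z_B,\bm Z'_{B^c})|+|D_i(\bm Z)-D_i(\bm Z_B,\bm Z'_{B^c})|\big)$; since $B\supseteq N_{\bm A}(i,r)$, the monotonicity property of the ANI coefficient (resampling everything outside a set that contains the $r$-ball around $i$ is controlled by $\theta_r$) yields $\bbE\|\bm W_i-\bm W_i^{(1)}\|\le C\theta_r=C\tilde\theta_s$, and summing over $i\in H$ produces the factor $h$; $\bbE\|\bm W_{H'}-\bm W_{H'}^{(2)}\|\le C h'\tilde\theta_s$ follows the same way, noting $B^c$ contains the ball of radius $s-r-1\approx r$ around each $j\in H'$. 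Combining with the easy case, $|\Cov[f(\bm W_H),f'(\bm W_{H'})]|\le C\big(\|f\|_\infty\|f'\|_\infty+h\|f'\|_\infty\Lip(f)+h'\|f\|_\infty\Lip(f')\big)\tilde\theta_s$, which is the asserted $\psi_{h,h'}$; and $\sup_n\tilde\theta_s=\sup_n\theta_{\lfloor s/2\rfloor}\to 0$ as $s\to\infty$ by Assumption~\ref{as:ANI1}, so $\{\bm W_i\}_{i\in S_n}$ is $\psi$-weakly dependent.

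The main obstacle is the last step: making the ``monotonicity of the ANI coefficient'' precise, i.e.\ showing that the $L^1$-discrepancy produced by substituting an independent copy for all coordinates outside a set $B\supseteq N_{\bm A}(i,r)$ is still bounded by $\theta_r$ (not merely by $\theta$ at a smaller radius), together with the minor parity loss on the $H'$-side where only the $(s-r-1)$-ball around $j$ is preserved. This is exactly the bookkeeping carried out in \cite{leung2022causal} and \cite{kojevnikov2021limit}; once it is in place, everything else reduces to the boundedness/Lipschitz/independence decomposition sketched above.
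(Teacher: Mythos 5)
Your overall strategy is the same as the paper's: handle $s\le 2\max\{K,1\}$ by the crude bound $2\|f\|_\infty\|f'\|_\infty$, and for larger $s$ use a resampling coupling, the telescoping decomposition $\Cov(\xi,\zeta)=\Cov(\xi-\xi^{\ast},\zeta)+\Cov(\xi^{\ast},\zeta-\zeta^{\ast})+\Cov(\xi^{\ast},\zeta^{\ast})$ with the last term vanishing by independence, boundedness from Assumptions \ref{as:outcome} and \ref{as:overlap}, invariance of the indicator/exposure factors from Assumption \ref{as:exposure} (so that $W^Z_i,W^{1-Z}_i$ are unchanged and only $y_i,D_i$ contribute), and the ANI coefficient to control the $L^1$ discrepancies. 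The one substantive difference is the coupling itself: you cut the index set once, at $B=\bigcup_{i\in H}N_{\bm{A}}(i,r)$ with $r=\lfloor s/2\rfloor$, resampling $\bm{Z}_{B^c}$ on the $H$ side and $\bm{Z}_B$ on the $H'$ side, whereas the paper gives \emph{each unit its own} coupled vector $\bm{Z}_i^{(s',\xi)}=(\bm{Z}_{N_{\bm{A}}(i,s')},\bm{Z}'_{N_{\bm{A}}^c(i,s')})$ (and $\bm{Z}_j^{(s',\zeta)}$ with $\bm{Z}''$), $s'=\lfloor s/2\rfloor$ on both sides. With the paper's per-unit coupling, each discrepancy $\bbE|y_i(\bm{Z})-y_i(\bm{Z}_i^{(s',\xi)})|$ is \emph{exactly} the quantity appearing in the definition of $\theta_{n,s'}^{\mathrm{ADE}}$, so no further argument is needed there.

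This is precisely where your proposal has its gap, and you name it yourself: the claim that resampling outside a superset $B\supseteq N_{\bm{A}}(i,r)$ is controlled by $\theta_{n,r}^{\mathrm{ADE}}$ does not follow from the definition of the ANI coefficient, and it is not something you can simply cite from \cite{leung2022causal} or \cite{kojevnikov2021limit} in the form you need — the paper's own proof never requires it because of the per-unit coupling. The step is fillable: writing $U=\bm{Z}_{N_{\bm{A}}(i,r)}$, $V=\bm{Z}_{B\setminus N_{\bm{A}}(i,r)}$, $W=\bm{Z}_{B^c}$ and introducing a further independent copy $(\tilde V,\tilde W)$, the triangle inequality gives $\bbE|y_i(U,V,W)-y_i(U,V,W')|\le \bbE|y_i(U,V,W)-y_i(U,\tilde V,\tilde W)|+\bbE|y_i(U,\tilde V,\tilde W)-y_i(U,V,W')|\le 3\,\theta_{n,r}^{\mathrm{ADE}}$, since both comparisons in the second term are couplings of the form in the ANI definition; the factor $3$ is absorbed into $C$. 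A second, smaller wrinkle is the asymmetry of your cut: on the $H'$ side the preserved ball has radius $s-r-1$, which for even $s$ is $\lfloor s/2\rfloor-1$, so without further care you obtain dependence coefficients $\theta_{n,\lfloor s/2\rfloor-1}^{\mathrm{ADE}}$ rather than the $\tilde\theta_{n,s}^{\mathrm{ADE}}$ of \eqref{eq:tildetheta}; this is harmless for every downstream use (the assumptions only sum or take suprema over $s$), but it means your argument proves the lemma with a slightly shifted coefficient sequence rather than the stated one, whereas the paper's symmetric per-unit coupling delivers $\theta_{n,\lfloor s/2\rfloor}^{\mathrm{ADE}}$ on both sides. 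With the extra-copy monotonicity argument spelled out and the parity bookkeeping acknowledged (or the coupling switched to the paper's per-unit version), your proof goes through.
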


\begin{proof}
	Consider arbitrary $n, h, h' \in \mathbb{N}$, $s > 0$, $(H, H') \in \mathcal{P}_{S_n}(h, h'; s)$, $f \in \mathcal{L}_{4h}$, and $f' \in \mathcal{L}_{4h'}$.
	Let $\xi \coloneqq f(\bm{W}_H)$ and $\zeta \coloneqq f'(\bm{W}_{H'})$.
	Consider two independent copies of $\bm{Z}$, say $\bm{Z}'$ and $\bm{Z}''$.
	For $i \in H$ and $j \in H'$, define $\bm{Z}_i^{(s, \xi)} \coloneqq (\bm{Z}_{N_{\bm{A}}(i,s)}, \bm{Z}'_{N_{\bm{A}}^c(i,s)})$, $\bm{Z}_j^{(s, \zeta)} \coloneqq (\bm{Z}_{N_{\bm{A}}(j,s)}, \bm{Z}''_{N_{\bm{A}}^c(j,s)})$, and
	\begin{align*}
		W_i^{Y, (s, \xi)} &\coloneqq y_i(\bm{Z}_i^{(s, \xi)}) \left[ \frac{\bm{1}\{ Z_i = 1, T(i, \bm{Z}_i^{(s, \xi)}, \bm{A}) = t \}}{p_{S_n}(1, t)} - \frac{\bm{1}\{ Z_i = 0, T(i, \bm{Z}_i^{(s, \xi)}, \bm{A}) = t \}}{p_{S_n}(0, t)} \right], \\
		W_j^{Y, (s, \zeta)} &\coloneqq y_j(\bm{Z}_j^{(s, \zeta)}) \left[ \frac{\bm{1}\{ Z_j = 1, T(j, \bm{Z}_j^{(s, \zeta)}, \bm{A}) = t \}}{p_{S_n}(1, t)} - \frac{\bm{1}\{ Z_j = 0, T(j, \bm{Z}_j^{(s, \zeta)}, \bm{A}) = t \}}{p_{S_n}(0, t)} \right].
	\end{align*}
	We similarly define $W_i^{D, (s, \xi)}$, $W_j^{D, (s, \zeta)}$, $W_i^{Z, (s, \xi)}$, $W_j^{Z, (s, \zeta)}$, $W_i^{1-Z, (s, \xi)}$, and $W_j^{1-Z, (s, \zeta)}$,
	and let
	\begin{alignat*}{3}
		& \bm{W}_i^{(s, \xi)} \coloneqq (W_i^{Y, (s, \xi)}, W_i^{D, (s, \xi)}, W_i^{Z, (s, \xi)}, W_i^{1-Z, (s, \xi)}), & \qquad
		& \bm{W}_{H}^{(s, \xi)} \coloneqq (\bm{W}_i^{(s, \xi)})_{i \in H}, & \qquad 
		& \xi^{(s)} \coloneqq f(\bm{W}_H^{(s, \xi)}), \\
		& \bm{W}_j^{(s, \zeta)} \coloneqq (W_j^{Y, (s, \zeta)}, W_j^{D, (s, \zeta)}, W_j^{Z, (s, \zeta)}, W_j^{1-Z, (s, \zeta)}), & \qquad
		& \bm{W}_{H'}^{(s, \zeta)} \coloneqq (\bm{W}_j^{(s, \zeta)})_{j \in H'}, & \qquad 
		& \zeta^{(s)} \coloneqq f'(\bm{W}_{H'}^{(s, \zeta)}).
	\end{alignat*}
	
	Since $f$ and $f'$ are bounded functions,
	\begin{align*}
		|\Cov(\xi, \zeta)|
		& = |\Cov(\xi, \zeta)| \cdot \bm{1}\{ s \le 2 \max\{ K, 1\} \}  + |\Cov(\xi, \zeta)| \cdot \bm{1}\{ s > 2 \max\{ K, 1\} \} \\
		& \le 2 \| f \|_{\infty} \| f' \|_{\infty} \cdot \bm{1}\{ s \le 2 \max\{ K, 1\} \}  + |\Cov(\xi, \zeta)| \cdot \bm{1}\{ s > 2 \max\{ K, 1\} \}.
	\end{align*}
	For the second term, recall that $\ell_{\bm{A}}(H, H') > 2 \max\{ K, 1 \}$ when $s > 2 \max\{ K, 1 \}$.
	Then, denoting $s' = \lfloor s/2 \rfloor$, Assumptions \ref{as:dgp}(i) and \ref{as:exposure} imply that $\bm{W}_H^{(s', \xi)}$ is independent of $\bm{W}_{H'}^{(s', \zeta)}$.
	From this, we have
	\begin{align*}
		|\Cov(\xi, \zeta)| 
		& \le |\Cov(\xi - \xi^{(s')}, \zeta)| + |\Cov(\xi^{(s')}, \zeta - \zeta^{(s')})| + |\Cov(\xi^{(s')}, \zeta^{(s')})| \\
		& = |\Cov(\xi - \xi^{(s')}, \zeta)| + |\Cov(\xi^{(s')}, \zeta - \zeta^{(s')})| \\
		& \le 2 \| f' \|_{\infty} \bbE|\xi - \xi^{(s')}| + 2 \| f \|_{\infty} \bbE|\zeta - \zeta^{(s')}| \\
		& \le 2 \| f' \|_{\infty} \Lip(f) \bbE \| \bm{W}_H - \bm{W}_H^{(s', \xi)} \| + 2 \| f \|_{\infty} \Lip(f') \bbE \| \bm{W}_{H'} - \bm{W}_{H'}^{(s', \zeta)} \|,
	\end{align*}
	where $\| \cdot \|$ denotes the Euclidean norm.
	Here, by Assumption \ref{as:exposure},
	\begin{align} \label{eq:WYs}
		W_i^Y - W_i^{Y, (s', \xi)}
		& = [y_i(\bm{Z}) - y_i(\bm{Z}_i^{(s', \xi)})] \left( \frac{\bm{1}\{ Z_i = 1, T(i, \bm{Z}, \bm{A}) = t \}}{p_{S_n}(1, t)} - \frac{\bm{1}\{ Z_i = 0, T(i, \bm{Z}, \bm{A}) = t \}}{p_{S_n}(0, t)} \right)
	\end{align}
	and
	\begin{align} \label{eq:WDs}
		W_i^D - W_i^{D, (s', \xi)}
		= [D_i(\bm{Z}) - D_i(\bm{Z}_i^{(s', \xi)})] \left( \frac{\bm{1}\{ Z_i = 1, T(i, \bm{Z}, \bm{A}) = t \}}{p_{S_n}(1, t)} - \frac{\bm{1}\{ Z_i = 0, T(i, \bm{Z}, \bm{A}) = t \}}{p_{S_n}(0, t)} \right).
	\end{align}
	Further, it is easy to see that $W_i^Z - W_i^{Z, (s', \xi)} = 0$ and $W_i^{1-Z} - W_i^{1-Z, (s', \xi)} = 0$ by Assumption \ref{as:exposure}.
	Thus, by Assumptions \ref{as:outcome} and \ref{as:overlap}, $\bbE \| \bm{W}_H - \bm{W}_H^{(s', \xi)} \| \le C h \theta_{n,s'}^{\mathrm{ADE}}$ for some positive constant $C$.
	In the same way, we can see that $\bbE \| \bm{W}_{H'} - \bm{W}_{H'}^{(s', \zeta)} \| \le C h' \theta_{n,s'}^{\mathrm{ADE}}$.
	In conjunction with Assumption \ref{as:ANI1}, this completes the proof.
\end{proof}

The next lemma is immediate from Lemma \ref{lem:psi1} (cf. Lemma 2.1 of \citealp{kojevnikov2021limit}).
Let $\{ \bm{c}_{n,i} \}_{i \in S_n}$ be a sequence of uniformly bounded nonrandom vectors in $\mathbb{R}^4$.

\begin{lemma}\label{lem:linearpsi1}
	Under Assumptions \ref{as:dgp} -- \ref{as:ANI1}, the triangular array $\{ \bm{c}_{n,i}^\top \bm{W}_i \}_{i \in S_n}$ is $\psi$-weakly dependent with the dependence coefficients $\{ \tilde \theta_{n,s}^{\mathrm{ADE}} \}_{s \ge 0}$ defined by \eqref{eq:tildetheta} and
	\begin{align*}
		\psi_{h, h'}(f, f') = C [ \| f \|_{\infty} \| f' \|_{\infty} + h \| f' \|_{\infty} \Lip(f) + h' \| f \|_{\infty} \Lip(f') ],
		\qquad 
		\forall h, h' \in \mathbb{N}, f \in \mathcal{L}_{h}, f' \in \mathcal{L}_{h'},
	\end{align*}
	with some positive constant $C$.
\end{lemma}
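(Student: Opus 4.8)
The plan is to reduce the claim to Lemma \ref{lem:psi1} by a linear change of coordinates, so the argument is short. Fix $n, h, h' \in \mathbb{N}$, $s > 0$, $(H, H') \in \mathcal{P}_{S_n}(h, h', s)$, $f \in \mathcal{L}_{h}$, and $f' \in \mathcal{L}_{h'}$. Since $\{\bm{c}_{n,i}\}_{i \in S_n}$ is uniformly bounded, there is a constant $\bar c < \infty$ with $\|\bm{c}_{n,i}\| \le \bar c$ for all $n$ and $i$. I would introduce the block-diagonal linear maps $\Lambda_H \colon \mathbb{R}^{4h} \to \mathbb{R}^h$ and $\Lambda_{H'} \colon \mathbb{R}^{4h'} \to \mathbb{R}^{h'}$ defined by $\Lambda_H((\bm{w}_i)_{i \in H}) = (\bm{c}_{n,i}^\top \bm{w}_i)_{i \in H}$, so that $(\bm{c}_{n,i}^\top \bm{W}_i)_{i \in H} = \Lambda_H(\bm{W}_H)$, and similarly for $H'$.

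First I would set $\tilde f \coloneqq f \circ \Lambda_H$ and $\tilde f' \coloneqq f' \circ \Lambda_{H'}$ and verify that $\tilde f \in \mathcal{L}_{4h}$ and $\tilde f' \in \mathcal{L}_{4h'}$. Indeed $\|\tilde f\|_{\infty} = \|f\|_{\infty}$, and because $\Lambda_H$ is block diagonal with blocks $\bm{c}_{n,i}^\top$, its Euclidean operator norm equals $\max_{i \in H} \|\bm{c}_{n,i}\| \le \bar c$, whence $\Lip(\tilde f) \le \bar c\, \Lip(f)$; the analogous bounds hold for $\tilde f'$.

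Next I would write $\Cov[f((\bm{c}_{n,i}^\top \bm{W}_i)_{i \in H}), f'((\bm{c}_{n,j}^\top \bm{W}_j)_{j \in H'})] = \Cov[\tilde f(\bm{W}_H), \tilde f'(\bm{W}_{H'})]$ and apply Lemma \ref{lem:psi1}, which bounds this by $C[\|\tilde f\|_{\infty}\|\tilde f'\|_{\infty} + h\|\tilde f'\|_{\infty}\Lip(\tilde f) + h'\|\tilde f\|_{\infty}\Lip(\tilde f')]\,\tilde\theta_{n,s}^{\mathrm{ADE}}$. Substituting $\|\tilde f\|_{\infty} = \|f\|_{\infty}$, $\|\tilde f'\|_{\infty} = \|f'\|_{\infty}$, $\Lip(\tilde f) \le \bar c\,\Lip(f)$, and $\Lip(\tilde f') \le \bar c\,\Lip(f')$, and absorbing $\bar c$ into a new constant, yields $\psi_{h,h'}$ in the stated form. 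Finally, the dependence coefficients are unchanged: $\tilde\theta_{n,0}^{\mathrm{ADE}} = 1$ by \eqref{eq:tildetheta}, and $\sup_{n} \tilde\theta_{n,s}^{\mathrm{ADE}} \to 0$ as $s \to \infty$ by Assumption \ref{as:ANI1}, so $\{\bm{c}_{n,i}^\top \bm{W}_i\}_{i \in S_n}$ is $\psi$-weakly dependent with coefficients $\{\tilde\theta_{n,s}^{\mathrm{ADE}}\}_{s \ge 0}$.

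The argument is essentially bookkeeping; the only point needing a little care is checking that uniform boundedness of $\{\bm{c}_{n,i}\}$ gives a bound on the operator norm of $\Lambda_H$ (hence on $\Lip(\tilde f)$) that does not grow with $h$, which is precisely what keeps the resulting $\psi$-function of the same shape as in Lemma \ref{lem:psi1}. I do not expect any genuinely new obstacle.
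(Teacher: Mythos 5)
Your argument is correct and matches the paper's route: the paper treats this lemma as immediate from Lemma \ref{lem:psi1} via exactly the composition-with-a-bounded-linear-map device (cf.\ Lemma 2.1 of \cite{kojevnikov2021limit}) that you spell out, with the key point being that the operator norm of $\Lambda_H$ is $\max_{i \in H}\|\bm{c}_{n,i}\| \le \bar c$ independent of $h$. The only nitpick is that $\|f \circ \Lambda_H\|_{\infty} \le \|f\|_{\infty}$ rather than equality when some $\bm{c}_{n,i}$ vanishes, but since only the upper bound is used this is immaterial.
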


\subsection{Lemmas for Theorems \ref{thm:consistency2} and \ref{thm:normal2}}

\begin{lemma} \label{lem:p2}
	Under Assumption \ref{as:dgp}, $\hat p_{S_n}(z) - p_{S_n}(z) = O_P(|S_n|^{-1/2})$ for all $z \in \{ 0, 1 \}$.
\end{lemma}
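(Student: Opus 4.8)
The final statement to prove is Lemma~\ref{lem:p2}, which asserts $\hat p_{S_n}(z) - p_{S_n}(z) = O_P(|S_n|^{-1/2})$.

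Let me think about this. This is essentially the "$\mathcal{E}$-version" (or AIE-version) analogue of Lemma~\ref{lem:p1}. In Lemma~\ref{lem:p1}, we had $\hat p_{S_n}(z,t) = |S_n|^{-1}\sum_i \mathbf{1}\{Z_i = z, T_i = t\}$ and we showed it's $O_P(|S_n|^{-1/2})$ around its mean using the fact that $(Z_i, T_i)$ and $(Z_j, T_j)$ are independent once $\ell_{\bm{A}}(i,j) > 2K$ (since $T_i$ depends only on the $K$-neighborhood of $i$ by Assumption~\ref{as:exposure}, and $Z$'s are independent by Assumption~\ref{as:dgp}(i)).

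Here it's actually easier: $\hat p_{S_n}(z) = |S_n|^{-1}\sum_{i \in S_n} \mathbf{1}\{Z_i = z\}$, which depends only on the individual $Z_i$'s, not on any exposure mapping. So by Assumption~\ref{as:dgp}(ii), each $\mathbf{1}\{Z_i = z\}$ has mean $p_{S_n}(z)$, hence $\bbE[\hat p_{S_n}(z)] = p_{S_n}(z)$ exactly. And by Assumption~\ref{as:dgp}(i) (independence of the $Z_i$'s), the summands $\mathbf{1}\{Z_i = z\}$, $i \in S_n$, are mutually independent. So the variance is $|S_n|^{-2}\sum_{i\in S_n}\Var[\mathbf{1}\{Z_i=z\}] \le |S_n|^{-2}\cdot|S_n|\cdot\frac14 = \frac{1}{4|S_n|} = O(|S_n|^{-1})$, since a Bernoulli variance is at most $1/4$. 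Then Chebyshev's inequality gives $\hat p_{S_n}(z) - p_{S_n}(z) = O_P(|S_n|^{-1/2})$. No need for the neighborhood/weak-dependence machinery at all — unlike Lemma~\ref{lem:p1}, there are no cross-covariance terms here.

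So the proof plan: (1) observe unbiasedness from Assumption~\ref{as:dgp}(ii); (2) compute the variance, using independence from Assumption~\ref{as:dgp}(i) to kill cross terms, and bound each Bernoulli variance by $1/4$; (3) apply Chebyshev. The "main obstacle" is essentially nil — it's the simplest lemma in the set, and the paper itself says it "follow[s] from the same arguments as in the proofs of Lemmas~\ref{lem:p1} and \ref{lem:consistency}," just with the simplification that no exposure mapping is involved so the summands are genuinely i.i.d. rather than merely weakly dependent.

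\begin{proof}
By Assumption \ref{as:dgp}(ii), $\{ Z_i \}_{i \in S_n}$ are identically distributed, so $\bbE[\mathbf{1}\{ Z_i = z \}] = p_{S_n}(z)$ for all $i \in S_n$, and hence $\bbE[\hat p_{S_n}(z)] = p_{S_n}(z)$. It therefore suffices to show that $\Var[\hat p_{S_n}(z)] = O(|S_n|^{-1})$. Since $\hat p_{S_n}(z)$ depends only on $\{ Z_i \}_{i \in S_n}$ and these are mutually independent by Assumption \ref{as:dgp}(i), we have
\begin{align*}
	\Var[\hat p_{S_n}(z)]
	= \frac{1}{|S_n|^2} \sum_{i \in S_n} \Var[\mathbf{1}\{ Z_i = z \}]
	\le \frac{1}{|S_n|^2} \cdot \frac{|S_n|}{4}
	= \frac{1}{4 |S_n|},
\end{align*}
where the inequality uses that the variance of a Bernoulli random variable is bounded by $1/4$. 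The desired result now follows from Chebyshev's inequality.
\end{proof}
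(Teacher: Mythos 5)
Your proof is correct and follows essentially the same route as the paper, which simply invokes the argument of Lemma \ref{lem:p1}: unbiasedness from Assumption \ref{as:dgp}(ii), a variance bound of order $|S_n|^{-1}$, and Chebyshev. Your observation that the cross-covariance terms vanish entirely here (since $\hat p_{S_n}(z)$ involves only the independent $Z_i$'s and no exposure mapping) is exactly the simplification implicit in the paper's treatment.
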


\begin{proof}
	The desired result follows immediately from the law of large numbers.
\end{proof}

\begin{lemma} \label{lem:AIE_consistency}
	Suppose that Assumptions \ref{as:dgp} -- \ref{as:exposure} and \ref{as:ANI2} -- \ref{as:weakLLN2} hold.
	Then, we have
	\begin{align*}
		\renewcommand{\arraystretch}{2}
		\begin{array}{cl}
			\text{(i)}   & \hat \mu_{S_n}^Y(z; \mathcal{E}) - \bar \mu_{S_n}^Y(z; \mathcal{E}) = o_P(1), \\
			\text{(ii)}  & \hat \mu_{S_n}^D(z) - \bar \mu_{S_n}^D(z) = o_P(1), \\
		\end{array}	
	\end{align*}
	for all $z \in \{ 0, 1 \}$.
	Further, $\sqrt{|S_n|}$-consistency is achieved if the condition in Remark \ref{remark:rate} holds when $\tilde \theta_{n,s}^{\mathrm{ADE}}$ is replaced by $\tilde \theta_{n,s}^{\mathrm{AIE}}$.
\end{lemma}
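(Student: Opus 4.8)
The plan is to mirror the proof of Lemma~\ref{lem:consistency}, replacing Lemma~\ref{lem:p1} by Lemma~\ref{lem:p2} and the dependence coefficients $\tilde\theta^{\mathrm{ADE}}_{n,s}$ by $\tilde\theta^{\mathrm{AIE}}_{n,s}$ throughout. For part~(i), I first introduce the infeasible statistic $\check\mu_{S_n}^Y(z;\mathcal{E})$ (defined with the true $p_{S_n}(z)$ in place of $\hat p_{S_n}(z)$) and write
\[
\hat\mu_{S_n}^Y(z;\mathcal{E})
= \check\mu_{S_n}^Y(z;\mathcal{E}) - \frac{\check\mu_{S_n}^Y(z;\mathcal{E})}{\hat p_{S_n}(z)}\bigl[\hat p_{S_n}(z) - p_{S_n}(z)\bigr].
\]
Using Assumptions~\ref{as:outcome} and~\ref{as:overlap} together with the denseness restriction in Assumption~\ref{as:weakLLN2}(i), which controls the average size of $\mathcal{E}_i$, one gets $\check\mu_{S_n}^Y(z;\mathcal{E})/\hat p_{S_n}(z) = O_P(1)$; combined with Lemma~\ref{lem:p2} the second term is $O_P(|S_n|^{-1/2})$. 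It therefore suffices to show $\check\mu_{S_n}^Y(z;\mathcal{E}) - \bar\mu_{S_n}^Y(z;\mathcal{E}) = o_P(1)$.

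Next I would establish unbiasedness: by Assumption~\ref{as:dgp}(ii), $\bbE[Y_j\,\bm{1}\{Z_i=z\}] = \mu_{ji}^Y(z)\,p_{S_n}(z)$ for $i\in S_n$, hence $\bbE[\check\mu_{S_n}^Y(z;\mathcal{E})] = \bar\mu_{S_n}^Y(z;\mathcal{E})$, and the claim follows from Chebyshev's inequality once $\Var[\check\mu_{S_n}^Y(z;\mathcal{E})] = o(1)$. Writing $Q_i \coloneqq \sum_{j\in\mathcal{E}_i} Y_j\,\bm{1}\{Z_i=z\}/p_{S_n}(z)$, I decompose the variance into the diagonal part $|S_n|^{-2}\sum_{i\in S_n}\Var[Q_i]$, which is $O(|S_n|^{-1})$ by boundedness of the outcome, overlap, and the bound on $|\mathcal{E}_i|$, and the off-diagonal part $|S_n|^{-2}\sum_{i\neq i'}\Cov[Q_i,Q_{i'}]$, which I stratify by $\ell_{\bm{A}}(i,i')=s$. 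For $s\le 2\max\{K,1\}$ the covariance is bounded by a constant $(=C\,\tilde\theta^{\mathrm{AIE}}_{n,s})$; for $s>2\max\{K,1\}$ I apply the decoupling argument behind the ANI framework (as in the proof of Lemma~\ref{lem:psi1} and Theorem~2 of \citealp{leung2022causal}): replace $\bm{Z}$ by $\bm{Z}_i^{(\lfloor s/2\rfloor)}$ around the hub $i$ and by an independent-copy analogue around $i'$, so that the resulting variables are independent, while the substitution error in $Q_i$ is controlled — using Assumption~\ref{as:exposure} and the fact that $\bm{1}\{Z_i=z\}$ is unchanged — by $\max_{j\in\mathcal{E}_i}\bbE|y_j(\bm{Z})-y_j(\bm{Z}_i^{(\lfloor s/2\rfloor)})| \le \theta^{\mathrm{AIE}}_{n,\lfloor s/2\rfloor}$. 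This yields $|\Cov[Q_i,Q_{i'}]|\le C\,\tilde\theta^{\mathrm{AIE}}_{n,s}$, so the off-diagonal part is at most $C|S_n|^{-1}\sum_{s\ge1}M_{S_n}^{\partial}(s)\,\tilde\theta^{\mathrm{AIE}}_{n,s} = o(1)$ by Assumption~\ref{as:weakLLN2}(ii). The $\sqrt{|S_n|}$-consistency claim follows identically: under the strengthened condition $\sum_{s\ge1}M_{S_n}^{\partial}(s)\,\tilde\theta^{\mathrm{AIE}}_{n,s}=O(1)$ the variance is $O(|S_n|^{-1})$, hence $\check\mu_{S_n}^Y(z;\mathcal{E})-\bar\mu_{S_n}^Y(z;\mathcal{E})=O_P(|S_n|^{-1/2})$, while the reduction step above is already of that order.

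Part~(ii) is the special case of Lemma~\ref{lem:consistency}(ii) with the IEM taken constant, so the conditioning on $T_i=t$ disappears and $p_{S_n}(z,t)$, $\bm{1}\{Z_i=z,T_i=t\}$ become $p_{S_n}(z)$, $\bm{1}\{Z_i=z\}$: the same reduction via Lemma~\ref{lem:p2} and the same variance bound apply verbatim, now using the $D$-component of $\theta^{\mathrm{AIE}}_{n,s}$ (which coincides with the $D$-component of $\theta^{\mathrm{ADE}}_{n,s}$). The main obstacle I anticipate is the uniform control of the off-diagonal covariances $\Cov[Q_i,Q_{i'}]$: one must confirm that the decoupling is performed at the focal unit $i$ rather than at the individual $j\in\mathcal{E}_i$, so that a single coupling simultaneously absorbs the entire $\mathcal{E}_i$-sum through the max-over-$j$ in the definition of $\theta^{\mathrm{AIE}}_{n,s}$, and that the distance-stratified double sum telescopes against $M_{S_n}^{\partial}(s)\,\tilde\theta^{\mathrm{AIE}}_{n,s}$ despite possible overlap of $\mathcal{E}_i$ and $\mathcal{E}_{i'}$ — points that are handled exactly as in Lemma~\ref{lem:consistency}.
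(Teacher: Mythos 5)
Your proposal is correct and follows essentially the same route as the paper, which simply states that Lemma \ref{lem:AIE_consistency} follows from the same arguments as Lemmas \ref{lem:p1} and \ref{lem:consistency}; your write-up fills in exactly the intended substitutions (Lemma \ref{lem:p2} for Lemma \ref{lem:p1}, $\tilde \theta_{n,s}^{\mathrm{AIE}}$ for $\tilde \theta_{n,s}^{\mathrm{ADE}}$, decoupling at the focal unit $i$ so the $\max_{j \in \mathcal{E}_i}$ in $\theta_{n,s}^{\mathrm{AIE}}$ absorbs the whole $\mathcal{E}_i$-sum, and $\max_{i \in S_n}|\mathcal{E}_i| = O(1)$ under Assumptions \ref{as:exposure} and \ref{as:weakLLN2} as in Lemma \ref{lem:psi2}).
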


\begin{proof}
	We focus on proving the first result, since the proof of the second is almost the same. 
	Using Lemma \ref{lem:p2} and Assumptions \ref{as:outcome}, \ref{as:overlap}, and \ref{as:weakLLN2}, we have
	\begin{align*}
		\hat \mu_{S_n}^Y(z; \mathcal{E})
		& = \check \mu_{S_n}^Y(z; \mathcal{E}) - \frac{\check \mu_{S_n}^Y(z; \mathcal{E})}{\hat p_{S_n}(z)} [\hat p_{S_n}(z) - p_{S_n}(z)] \\
		& = \check \mu_{S_n}^Y(z; \mathcal{E}) + O_P \left( \frac{1}{\sqrt{|S_n|}} \right).
	\end{align*}
	Here, we can easily show that $\bbE[ \check \mu_{S_n}^Y(z; \mathcal{E}) ] = \bar \mu_{S_n}^Y(z; \mathcal{E})$.
	Further, denoting $Q_{\mathcal{E}_i}^Y \coloneqq \sum_{j \in \mathcal{E}_i} Y_j \cdot \bm{1}\{ Z_i = z \} / p_{S_n}(z)$, we have
	\begin{align*}
		\Var\left[ \check \mu_{S_n}^Y(z; \mathcal{E}) \right]
		& = \frac{1}{|S_n|^2} \sum_{i \in S_n} \Var[Q_{\mathcal{E}_i}^Y] + \frac{1}{|S_n|^2} \sum_{i_1 \in S_n} \sum_{i_2 \in S_n \setminus \{ i_1 \}} \Cov[Q_{\mathcal{E}_{i_1}}^Y, Q_{\mathcal{E}_{i_2}}^Y].
	\end{align*}
	The first term is $O(|S_n|^{-1})$, since $Q_{\mathcal{E}_i}^Y$ is uniformly bounded in $i \in S_n$ and $n \in \mathbb{N}$ by Assumptions \ref{as:outcome}, \ref{as:overlap}, and \ref{as:weakLLN2}.
	The second term can be rewritten as
	\begin{align*}
		\frac{1}{|S_n|^2} \sum_{s = 1}^{n - 1} \sum_{i_1 \in S_n} \sum_{i_2 \in S_n}  \bm{1}\{ \ell_{\bm{A}}(i_1, i_2) = s \} \Cov[Q_{\mathcal{E}_{i_1}}^Y, Q_{\mathcal{E}_{i_2}}^Y].
	\end{align*}
	Then, using Assumptions \ref{as:dgp} -- \ref{as:exposure} and \ref{as:ANI2} -- \ref{as:weakLLN2}, similar arguments to the proof of Theorem 2 of \cite{leung2022causal} can show that it is bounded above by $C |S_n|^{-1} \sum_{s = 1}^{n - 1} M_{S_n}^{\partial}(s) \tilde \theta_{n,s}^{\mathrm{AIE}}$ for some positive constant $C$.
	Thus, we obtain the desired result by Assumption \ref{as:weakLLN2} or the condition in Remark \ref{remark:rate} with replacing $\tilde \theta_{n,s}^{\mathrm{ADE}}$ by $\tilde \theta_{n,s}^{\mathrm{AIE}}$.
\end{proof}

To state the next lemma, recall the definitions of the quantities discussed in Definition \ref{def:psi}.
Let $\bm{W}_{\mathcal{E}_i} \coloneqq (W_{\mathcal{E}_i}^Y, W_{\mathcal{E}_i}^D, W_{\mathcal{E}_i}^Z, W_{\mathcal{E}_i}^{1-Z})$.
For a subset $H \subset S_n$ with $|H| = h$, we write $\bm{W}_{\mathcal{E}_H} = (\bm{W}_{\mathcal{E}_i})_{i \in H}$.

\begin{lemma} \label{lem:psi2}
	Under Assumptions \ref{as:dgp} -- \ref{as:exposure} and \ref{as:ANI2}, the triangular array $\{ \bm{W}_{\mathcal{E}_i} \}_{i \in S_n}$ is $\psi$-weakly dependent with the dependence coefficients $\{ \tilde \theta_{n,s}^{\mathrm{AIE}} \}_{s \ge 0}$ and 
	\begin{align*}
		& \psi_{h, h'}(f, f') = C \left[ \| f \|_{\infty} \| f' \|_{\infty} + h \left( \max_{i \in H} |\mathcal{E}_i| \right) \| f' \|_{\infty} \Lip(f) + h' \left( \max_{i \in H'} |\mathcal{E}_i| \right) \| f \|_{\infty} \Lip(f') \right], \\
		& \qquad \forall h, h' \in \mathbb{N}, f \in \mathcal{L}_{4h}, f' \in \mathcal{L}_{4h'},
	\end{align*}
	with some positive constant $C$.
	Additionally, if Assumption \ref{as:weakLLN2} holds, $\max_{i \in H} |\mathcal{E}_i| = O(1)$ and $\max_{i \in H'} |\mathcal{E}_i| = O(1)$.
\end{lemma}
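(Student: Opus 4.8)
The plan is to follow the proof of Lemma \ref{lem:psi1} essentially verbatim, the only genuinely new ingredient being that $W_{\mathcal{E}_i}^Y$ now carries a sum over the interference set $\mathcal{E}_i$, which produces the extra factor $\max_{i\in H}|\mathcal{E}_i|$ in $\psi_{h,h'}$. Fix $n,h,h'\in\mathbb{N}$, $s>0$, a pair $(H,H')\in\mathcal{P}_{S_n}(h,h',s)$, and bounded Lipschitz $f\in\mathcal{L}_{4h}$, $f'\in\mathcal{L}_{4h'}$, and set $\xi\coloneqq f(\bm{W}_{\mathcal{E}_H})$, $\zeta\coloneqq f'(\bm{W}_{\mathcal{E}_{H'}})$. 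For $s\le 2\max\{K,1\}$ I would simply use $|\Cov(\xi,\zeta)|\le 2\|f\|_\infty\|f'\|_\infty=2\|f\|_\infty\|f'\|_\infty\,\tilde\theta_{n,s}^{\mathrm{AIE}}$, since $\tilde\theta_{n,s}^{\mathrm{AIE}}=1$ there.

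For $s>2\max\{K,1\}$, put $s'\coloneqq\lfloor s/2\rfloor$, take two independent copies $\bm{Z}',\bm{Z}''$ of $\bm{Z}$, and build the coupled vectors $\bm{Z}_i^{(s',\xi)}$, $\bm{Z}_j^{(s',\zeta)}$ and the coupled variables $W_{\mathcal{E}_i}^{Y,(s',\xi)}\coloneqq\sum_{j\in\mathcal{E}_i}y_j(\bm{Z}_i^{(s',\xi)})\big[\mathbf{1}\{Z_i=1\}/p_{S_n}(1)-\mathbf{1}\{Z_i=0\}/p_{S_n}(0)\big]$, together with $W_{\mathcal{E}_i}^{D,(s',\xi)}$, $W_{\mathcal{E}_i}^{Z,(s',\xi)}$, $W_{\mathcal{E}_i}^{1-Z,(s',\xi)}$, the stacked $\bm{W}_{\mathcal{E}_i}^{(s',\xi)}$, $\bm{W}_{\mathcal{E}_H}^{(s',\xi)}$, and $\xi^{(s')}\coloneqq f(\bm{W}_{\mathcal{E}_H}^{(s',\xi)})$ (and likewise on $H'$), exactly as in the proof of Lemma \ref{lem:psi1}. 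Since $\mathcal{E}_i\subseteq N_{\bm{A}}(i,K)$ and $s'\ge\max\{K,1\}$, Assumptions \ref{as:dgp}(i) and \ref{as:exposure} give $\bm{W}_{\mathcal{E}_H}^{(s',\xi)}\perp\bm{W}_{\mathcal{E}_{H'}}^{(s',\zeta)}$. I would then write
\[
|\Cov(\xi,\zeta)|\le|\Cov(\xi-\xi^{(s')},\zeta)|+|\Cov(\xi^{(s')},\zeta-\zeta^{(s')})|+|\Cov(\xi^{(s')},\zeta^{(s')})|,
\]
discard the last term by independence, bound the first two by $2\|f'\|_\infty\Lip(f)\,\bbE\|\bm{W}_{\mathcal{E}_H}-\bm{W}_{\mathcal{E}_H}^{(s',\xi)}\|+2\|f\|_\infty\Lip(f')\,\bbE\|\bm{W}_{\mathcal{E}_{H'}}-\bm{W}_{\mathcal{E}_{H'}}^{(s',\zeta)}\|$, and estimate the coupling errors. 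The $Z$- and $(1-Z)$-components depend only on $Z_i$ and hence are unchanged by the coupling; the $Y$-component difference equals $\big(\sum_{j\in\mathcal{E}_i}[y_j(\bm{Z})-y_j(\bm{Z}_i^{(s',\xi)})]\big)$ times the bounded factor $\mathbf{1}\{Z_i=1\}/p_{S_n}(1)-\mathbf{1}\{Z_i=0\}/p_{S_n}(0)$, and the $D$-component difference is $[D_i(\bm{Z})-D_i(\bm{Z}_i^{(s',\xi)})]$ times the same factor. Using Assumptions \ref{as:outcome}–\ref{as:overlap} for the factor and then the definition of $\theta_{n,s'}^{\mathrm{AIE}}$ for each $\bbE|y_j(\bm{Z})-y_j(\bm{Z}_i^{(s',\xi)})|$ and for $\bbE|D_i(\bm{Z})-D_i(\bm{Z}_i^{(s',\xi)})|$ yields $\bbE\|\bm{W}_{\mathcal{E}_i}-\bm{W}_{\mathcal{E}_i}^{(s',\xi)}\|\le C(1+|\mathcal{E}_i|)\theta_{n,s'}^{\mathrm{AIE}}\le C'(\max_{i\in H}|\mathcal{E}_i|)\theta_{n,s'}^{\mathrm{AIE}}$; summing over $i\in H$ through $\|\cdot\|\le\sum\|\cdot\|$ gives $\bbE\|\bm{W}_{\mathcal{E}_H}-\bm{W}_{\mathcal{E}_H}^{(s',\xi)}\|\le Ch(\max_{i\in H}|\mathcal{E}_i|)\theta_{n,s'}^{\mathrm{AIE}}$, and symmetrically for $H'$. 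Recalling $\tilde\theta_{n,s}^{\mathrm{AIE}}=\theta_{n,s'}^{\mathrm{AIE}}$ for $s>2\max\{K,1\}$, the two cases combine to $|\Cov(\xi,\zeta)|\le\psi_{h,h'}(f,f')\,\tilde\theta_{n,s}^{\mathrm{AIE}}$ with the stated $\psi_{h,h'}$; $\psi$-weak dependence follows from Assumption \ref{as:ANI2}. For the last claim, $\mathcal{E}_i\subseteq N_{\bm{A}}(i,K)$ gives $|\mathcal{E}_i|\le\sum_{1\le s\le K}|S_{\bm{A}}^{\partial}(i,s)|$ up to the (controlled) contribution from units outside $S_n$, and the uniform neighborhood-growth restriction in Assumption \ref{as:weakLLN2}(i) delivers $\max_{i\in H}|\mathcal{E}_i|=O(1)$ and likewise for $H'$.

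I expect the main obstacle to be handling this extra $|\mathcal{E}_i|$ factor cleanly: one must argue that it can be pulled out uniformly over admissible $H$, so that $\psi_{h,h'}$ remains a legitimate nonrandom function of $(f,f')$ alone — which is precisely the purpose of the supplementary boundedness claim — and one must check that coupling at distance $s'=\lfloor s/2\rfloor$ still yields independence even though $\bm{W}_{\mathcal{E}_i}$ now aggregates outcomes of units up to distance $K$ from $i$, which is what forces the threshold $s>2\max\{K,1\}$ and the use of Assumption \ref{as:exposure}. Apart from this, the argument is a direct transcription of the proof of Lemma \ref{lem:psi1}.
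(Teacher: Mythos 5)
Your argument for the main $\psi$-weak dependence claim is essentially the paper's own proof: the same case split at $s \le 2\max\{K,1\}$, the same coupled variables built from $\bm{Z}_i^{(s',\xi)}$ with $s' = \lfloor s/2 \rfloor$, the same three-term covariance decomposition with the cross term killed by independence, and the same evaluation of the coupling error, in which only the $Y$- and $D$-components change (the $Y$-difference becomes $\sum_{j \in \mathcal{E}_i}[y_j(\bm{Z}) - y_j(\bm{Z}_i^{(s',\xi)})]$ times the bounded IV factor), producing the extra factor $\max_{i \in H}|\mathcal{E}_i|$ in $\psi_{h,h'}$. That part is correct and matches the paper.

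The gap is in your justification of the supplementary claim $\max_{i \in H}|\mathcal{E}_i| = O(1)$. First, $\mathcal{E}_i$ is a subset of $N_n$, whereas $S_{\bm{A}}^{\partial}(i,s)$ counts only units in $S_n$, so the inequality $|\mathcal{E}_i| \le \sum_{1 \le s \le K}|S_{\bm{A}}^{\partial}(i,s)|$ goes the wrong way in general, and the ``controlled contribution from units outside $S_n$'' is never actually controlled. Second, and more fundamentally, Assumption \ref{as:weakLLN1}(i) (as invoked through Assumption \ref{as:weakLLN2}) is not a uniform neighborhood-growth restriction: it bounds the sample \emph{average} $M_{S_n}^{\partial}(s) = |S_n|^{-1}\sum_{i \in S_n}|S_{\bm{A}}^{\partial}(i,s)|$ for $s \le 2K$, which is compatible with a single unit whose $K$-neighborhood grows with $n$; a bounded average does not deliver a bounded maximum by itself. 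The paper closes this step with a separate argument by contradiction: if $\max_{i \in S_n}|\mathcal{E}_i|$ diverged, then, since path distance is symmetric, an order-$n$ number of units would themselves have neighborhoods of order $n$ within distance at most $2K$, which would force $\max_{1 \le s \le 2K} M_{S_n}^{\partial}(s)$ to diverge and thus contradict Assumption \ref{as:weakLLN2}. You need a reciprocity argument of this kind (or an explicit uniform degree bound) rather than a direct appeal to the average-moment condition; as written, this last step of your proof would not go through.
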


\begin{proof}
	The desired result follows from the same arguments as in the proof of Lemma \ref{lem:psi1}, expect that \eqref{eq:WYs} and \eqref{eq:WDs} change to
	\begin{align*}
		W_{\mathcal{E}_i}^Y - W_{\mathcal{E}_i}^{Y,(s',\xi)}
		= \sum_{j \in \mathcal{E}_i} [ y_j(\bm{Z}) - y_j(\bm{Z}_i^{(s',\xi)}) ] \left[ \frac{ \bm{1}\{ Z_i = 1 \} }{ p_{S_n}(1) } - \frac{ \bm{1}\{ Z_i = 0 \} }{ p_{S_n}(0) } \right]
	\end{align*}
	and
	\begin{align*}
		W_{\mathcal{E}_i}^D - W_{\mathcal{E}_i}^{D,(s',\xi)}
		& = [ D_i(\bm{Z}) - D_i(\bm{Z}_i^{(s',\xi)}) ] \left[ \frac{ \bm{1}\{ Z_i = 1 \} }{ p_{S_n}(1) } - \frac{ \bm{1}\{ Z_i = 0 \} }{ p_{S_n}(0) } \right].
	\end{align*}
	Then, by Assumptions \ref{as:outcome} and \ref{as:overlap}, it holds that $\bbE \| \bm{W}_{\mathcal{E}_H} - \bm{W}_{\mathcal{E}_H}^{(s', \xi)} \| \le C h (\max_{i \in H} |\mathcal{E}_i| ) \theta_{n,s'}^{\mathrm{AIE}}$ for some positive constant $C$, where the variable definitions should be clear from the context.
	It is easy to see that $\max_{i \in H} |\mathcal{E}_i| \le \max_{i \in S_n} |\mathcal{E}_i| = O(1)$ by Assumption \ref{as:weakLLN2}.
	The rest of the proof is the same as the proof of Lemma \ref{lem:psi1}.
\end{proof}

Let $\{ \bm{c}_{n,i} \}_{i \in S_n}$ be a sequence of uniformly bounded nonrandom vectors in $\mathbb{R}^4$.

\begin{lemma}\label{lem:linearpsi2}
	Under Assumptions \ref{as:dgp} -- \ref{as:exposure} and \ref{as:ANI2}, the triangular array $\{ \bm{c}_{n,i}^\top \bm{W}_{\mathcal{E}_i} \}_{i \in S_n}$ is $\psi$-weakly dependent with the dependence coefficients $\{ \tilde \theta_{n,s}^{\mathrm{AIE}} \}_{s \ge 0}$ and
	\begin{align*}
		& \psi_{h, h'}(f, f') = C \left[ \| f \|_{\infty} \| f' \|_{\infty} + h \left( \max_{i \in H} |\mathcal{E}_i| \right) \| f' \|_{\infty} \Lip(f) + h' \left( \max_{i \in H'} |\mathcal{E}_i| \right) \| f \|_{\infty} \Lip(f') \right], \\
		& \qquad \forall h, h' \in \mathbb{N}, f \in \mathcal{L}_{4h}, f' \in \mathcal{L}_{4h'},
	\end{align*}
	with some positive constant $C$.
	Additionally, if Assumption \ref{as:weakLLN2} holds, $\max_{i \in H} |\mathcal{E}_i| = O(1)$ and $\max_{i \in H'} |\mathcal{E}_i| = O(1)$.
\end{lemma}


\section{Statistical Inference} \label{sec:inference}

To save space, we focus on the statistical inference on the ADE parameters.
The statistical inference on the other parameters is analogous.

\subsection{Network HAC variance estimator}\label{subsubsec:HAC}

We develop the network HAC estimator and prove its asymptotic property.
Under Assumption \ref{as:ANI1}, we can see that
\begin{align*}
	(\sigma_{S_n}^{\mathrm{ADEY}})^2 
	& = \frac{1}{|S_n|} \sum_{i \in S_n} \sum_{j \in S_n} \Cov\left[ V_i^{\mathrm{ADEY}}, V_j^{\mathrm{ADEY}} \right] \bm{1}\{ \ell_{\bm{A}}(i, j) \le n - 1 \},
\end{align*}
and analogous equalities hold for $(\sigma_{S_n}^{\mathrm{ADED}})^2$ and $(\sigma_{S_n}^{\mathrm{LADE}})^2$.
Then, the infeasible network HAC estimator of $(\sigma_{S_n}^{\mathrm{ADEY}})^2$ is given by
\begin{align*}
	(\tilde \sigma_{S_n}^{\mathrm{ADEY}})^2 
	& \coloneqq \frac{1}{|S_n|} \sum_{i \in S_n} \sum_{j \in S_n} (V_i^{\mathrm{ADEY}} - \bbE [V_i^{\mathrm{ADEY}}]) (V_j^{\mathrm{ADEY}} - \bbE [V_j^{\mathrm{ADEY}}]) \bm{1}\{ \ell_{\bm{A}}(i, j) \le b_n \},
\end{align*}
where $b_n \ge 0$ is a bandwidth parameter that grows as $n \to \infty$.
This estimator is infeasible because both $V_i^{\mathrm{ADEY}}$ and $\bbE [V_i^{\mathrm{ADEY}}]$ are unobservable to us.
For constructing feasible variance estimators, we compute
\begin{align*}
	\hat V_i^{\mathrm{ADEY}}
	& \coloneqq \hat W_i^Y - \frac{\hat \mu_{S_n}^Y(1, t)}{\hat p_{S_n}(1, t)} W_i^Z + \frac{\hat \mu_{S_n}^Y(0, t)}{\hat p_{S_n}(0, t)} W_i^{1-Z},
	\qquad
	\hat V_i^{\mathrm{ADED}}
	\coloneqq \hat W_i^D - \frac{\hat \mu_{S_n}^Y(1, t)}{\hat p_{S_n}(1, t)} W_i^Z + \frac{\hat \mu_{S_n}^Y(0, t)}{\hat p_{S_n}(0, t)} W_i^{1-Z},\\
	\hat V_i^{\mathrm{LADE}}
	& \coloneqq \frac{1}{\hat{\mathrm{ADED}}_{S_n}(t)} \hat V_i^{\mathrm{ADEY}} - \frac{\hat{\mathrm{ADEY}}_{S_n}(t)}{[\hat{\mathrm{ADED}}_{S_n}(t)]^2} \hat V_i^{\mathrm{ADED}},
\end{align*}
where
\begin{align*}
	\hat W_i^Y
	& \coloneqq Y_i \left[ \frac{\bm{1}\{ Z_i = 1, T_i = t \}}{\hat p_{S_n}(1, t)} - \frac{\bm{1}\{ Z_i = 0, T_i = t \}}{\hat p_{S_n}(0, t)} \right], \\
	\hat W_i^D
	& \coloneqq D_i \left[ \frac{\bm{1}\{ Z_i = 1, T_i = t \}}{\hat p_{S_n}(1, t)} - \frac{\bm{1}\{ Z_i = 0, T_i = t \}}{\hat p_{S_n}(0, t)} \right].
\end{align*}
Note that the sample mean of each of $\hat V_i^{\mathrm{ADEY}}$, $\hat V_i^{\mathrm{ADED}}$, and $\hat V_i^{\mathrm{LADE}}$ is zero.
Then, the feasible network HAC estimator is given by
\begin{align*}
	(\hat \sigma_{S_n}^{\mathrm{ADEY}})^2 
	& \coloneqq \frac{1}{|S_n|} \sum_{i \in S_n} \sum_{j \in S_n} \hat V_i^{\mathrm{ADEY}} \hat V_j^{\mathrm{ADEY}} \bm{1}\{ \ell_{\bm{A}}(i, j) \le b_n \},
\end{align*}
and $(\hat \sigma_{S_n}^{\mathrm{ADED}})^2$ and $(\hat \sigma_{S_n}^{\mathrm{LADE}})^2$ are analogously defined. 

Recall that $S_{\bm{A}}(i, s)$ denotes the subset of $S_n$ composed of units within $s$ distance from unit $i$.
We write its $k$-th sample moment as $M_{S_n}(s, k) \coloneqq |S_n|^{-1} \sum_{i \in S_n} |S_{\bm{A}}(i, s)|^k$.
Further, define $\mathcal{J}_{S_n}(s, b_n) \coloneqq \{ (i, j, k, l) \in S_n^4: \ell_{\bm{A}}(i, j) = s, \ell_{\bm{A}}(i, k) \le b_n, \ell_{\bm{A}}(j, l) \le b_n \}$.

\begin{assumption}[Weak dependence 3]\label{as:weakHAC}
	(i) There exists some $0 < \epsilon < 1$ such that $\sum_{s = 1}^{n-1} M_{S_n}^{\partial}(s) (\tilde \theta_{n,s}^{\mathrm{ADE}})^{1-\epsilon} = O(1)$ and $\sum_{s = 0}^{n-1} |\mathcal{J}_{S_n}(s, b_n)| (\tilde \theta_{n,s}^{\mathrm{ADE}})^{1-\epsilon} = o(|S_n|^2)$.
	(ii) $M_{S_n}(b_n, k) = o(|S_n|^{k/2})$ for each $k \in \{ 1, 2 \}$.
\end{assumption}

This assumption restricts both the network structure and the rate of divergence of $b_n$ in a similar manner to Assumption 7 of \cite{leung2022causal} and Assumption 4.1 of \cite{kojevnikov2021limit}.
The first part of Assumption \ref{as:weakHAC}(i) strengthens Assumption \ref{as:weakLLN1}(ii) and ensures $\sqrt{|S_n|}$-consistency of our estimators (see Remark \ref{remark:rate}).
The second part of Assumption \ref{as:weakHAC}(i) corresponds to Assumption 4.1(iii) of \cite{kojevnikov2021limit}.
Assumption \ref{as:weakHAC}(ii) is the same as Assumption 7(b) -- (c) of \cite{leung2022causal}.
Under these conditions, we can derive the probability limits of the infeasible oracle variance estimators, and evaluate the stochastic errors caused by replacing unobserved $V_i^{\mathrm{ADEY}}$, $V_i^{\mathrm{ADED}}$, and $V_i^{\mathrm{LADE}}$ with their estimators $\hat V_i^{\mathrm{ADEY}}$, $\hat V_i^{\mathrm{ADED}}$, and $\hat V_i^{\mathrm{LADE}}$.

Let
\begin{align*}
	B_{S_n}^{\mathrm{ADEY}} 
	& \coloneqq \frac{1}{|S_n|} \sum_{i \in S_n} \sum_{j \in S_n} \bbE[V_i^{\mathrm{ADEY}}] \bbE[V_j^{\mathrm{ADEY}}] \bm{1}\{ \ell_{\bm{A}}(i, j) \le b_n \}.
\end{align*}
We define $B_{S_n}^{\mathrm{ADED}}$ and $B_{S_n}^{\mathrm{LADE}}$ in the same manner.

\begin{theorem} \label{thm:HAC}
	Suppose that Assumptions \ref{as:dgp} -- \ref{as:ANI1} and \ref{as:weakHAC} hold.
	Then, if $b_n \to \infty$, we have (i) $(\hat \sigma_{S_n}^{\mathrm{ADEY}})^2 = (\sigma_{S_n}^{\mathrm{ADEY}})^2 + B_{S_n}^{\mathrm{ADEY}} + o_P(1)$ and (ii) $(\hat \sigma_{S_n}^{\mathrm{ADED}})^2 = (\sigma_{S_n}^{\mathrm{ADED}})^2 + B_{S_n}^{\mathrm{ADED}} + o_P(1)$.
	Additionally, if Assumptions \ref{as:relevance1} -- \ref{as:restrict1} hold, we have (iii) $(\hat \sigma_{S_n}^{\mathrm{LADE}})^2 = (\sigma_{S_n}^{\mathrm{LADE}})^2 + B_{S_n}^{\mathrm{LADE}} + o_P(1)$.
	
\end{theorem}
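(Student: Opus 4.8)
The plan is to establish Theorem \ref{thm:HAC} in two stages: first analyze the \emph{infeasible} HAC estimators built from $V_i^{\mathrm{ADEY}}$, $V_i^{\mathrm{ADED}}$, $V_i^{\mathrm{LADE}}$, and then control the stochastic error incurred by substituting the feasible plug-in versions $\hat V_i$. For the infeasible estimator $(\tilde \sigma_{S_n}^{\mathrm{ADEY}})^2 = |S_n|^{-1} \sum_{i,j} (V_i^{\mathrm{ADEY}} - \bbE V_i^{\mathrm{ADEY}})(V_j^{\mathrm{ADEY}} - \bbE V_j^{\mathrm{ADEY}}) \bm{1}\{\ell_{\bm{A}}(i,j) \le b_n\}$, I would first compute its expectation: writing $V_i = (V_i - \bbE V_i) + \bbE V_i$, the cross terms and the truncation of $\bm{1}\{\ell_{\bm{A}}(i,j) \le b_n\}$ versus $\bm{1}\{\ell_{\bm{A}}(i,j) \le n-1\}$ produce $\bbE[(\tilde \sigma_{S_n}^{\mathrm{ADEY}})^2] = (\sigma_{S_n}^{\mathrm{ADEY}})^2 + B_{S_n}^{\mathrm{ADEY}} + R_n$, where the remainder $R_n = |S_n|^{-1}\sum_{i,j} \Cov[V_i^{\mathrm{ADEY}}, V_j^{\mathrm{ADEY}}] \bm{1}\{b_n < \ell_{\bm{A}}(i,j) \le n-1\}$. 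Using the $\psi$-weak dependence of $\{V_i^{\mathrm{ADEY}}\}$ with coefficients $\{\tilde\theta_{n,s}^{\mathrm{ADE}}\}$ from Lemma \ref{lem:linearpsi1} together with uniform boundedness (Assumptions \ref{as:outcome}, \ref{as:overlap}), this tail sum is $O(|S_n|^{-1}\sum_{s > b_n} M_{S_n}^{\partial}(s)\tilde\theta_{n,s}^{\mathrm{ADE}})$, which is $o(1)$ by the first part of Assumption \ref{as:weakHAC}(i) (since $b_n \to \infty$). For the variance of $(\tilde\sigma_{S_n}^{\mathrm{ADEY}})^2$, I would bound $\Var[(\tilde\sigma_{S_n}^{\mathrm{ADEY}})^2]$ by a quadruple sum over $\mathcal{J}_{S_n}(s, b_n)$-type index sets and invoke the $\psi$-dependence bounds on fourth-order cumulants, which gives a bound of order $|S_n|^{-2}\sum_s |\mathcal{J}_{S_n}(s,b_n)|(\tilde\theta_{n,s}^{\mathrm{ADE}})^{1-\epsilon}$, shown to be $o(1)$ by the second part of Assumption \ref{as:weakHAC}(i); Chebyshev's inequality then gives $(\tilde\sigma_{S_n}^{\mathrm{ADEY}})^2 = (\sigma_{S_n}^{\mathrm{ADEY}})^2 + B_{S_n}^{\mathrm{ADEY}} + o_P(1)$. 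This is essentially the argument in Theorem 4.1 of \cite{kojevnikov2021limit} specialized to our bounded process.

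The second stage is to show $(\hat\sigma_{S_n}^{\mathrm{ADEY}})^2 - (\tilde\sigma_{S_n}^{\mathrm{ADEY}})^2 = o_P(1)$. I would write $\hat V_i^{\mathrm{ADEY}} - (V_i^{\mathrm{ADEY}} - \bbE V_i^{\mathrm{ADEY}})$ and expand the difference of the two quadratic forms. The estimation error decomposes into pieces driven by $\hat p_{S_n}(z,t) - p_{S_n}(z,t) = O_P(|S_n|^{-1/2})$ (Lemma \ref{lem:p1}), $\hat\mu_{S_n}^Y(z,t) - \bar\mu_{S_n}^Y(z,t) = O_P(|S_n|^{-1/2})$ (the strengthened rate from Lemma \ref{lem:consistency} under Assumption \ref{as:weakHAC}(i), cf. Remark \ref{remark:rate}), plus the discrepancy between $\bbE V_i^{\mathrm{ADEY}}$ and the corresponding recentering implicit in $\hat V_i^{\mathrm{ADEY}}$ (whose sample mean is zero). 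Each cross term in the quadratic form involves a factor $\bm{1}\{\ell_{\bm{A}}(i,j)\le b_n\}$, so the number of nonzero summands is controlled by $M_{S_n}(b_n,k)$; combining the $O_P(|S_n|^{-1/2})$ estimation rates with Assumption \ref{as:weakHAC}(ii), which bounds $M_{S_n}(b_n,k) = o(|S_n|^{k/2})$ for $k \in \{1,2\}$, the accumulated error is $o_P(1)$. The argument for $(\hat\sigma_{S_n}^{\mathrm{ADED}})^2$ is identical with $Y_i$ replaced by $D_i$ (and trivially bounded since $D_i \in \{0,1\}$). For $(\hat\sigma_{S_n}^{\mathrm{LADE}})^2$, I would additionally use Assumption \ref{as:relevance1} to bound $1/\mathrm{ADED}_{S_n}(t)$ away from zero, together with $\hat{\mathrm{ADED}}_{S_n}(t) \stackrel{p}{\to}$ a positive limit (Theorem \ref{thm:consistency1}), so that $\hat V_i^{\mathrm{LADE}}$ is a smooth bounded transformation of the ADEY/ADED quantities and the same perturbation analysis applies; invoking Theorem \ref{thm:LADE} to justify that the probability limit of $(\hat\sigma_{S_n}^{\mathrm{LADE}})^2$ decomposes as $(\sigma_{S_n}^{\mathrm{LADE}})^2 + B_{S_n}^{\mathrm{LADE}}$.

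I expect the main obstacle to be the variance bound for the infeasible HAC estimator, i.e., controlling $\Var[(\tilde\sigma_{S_n}^{\mathrm{ADEY}})^2]$: this requires carefully enumerating the quadruples $(i,j,k,l)$ with $\ell_{\bm{A}}(i,k)\le b_n$, $\ell_{\bm{A}}(j,l)\le b_n$ and showing that when $\ell_{\bm{A}}(i,j)=s$ is large the corresponding fourth-order covariance is small — this is where the $\psi$-dependence structure interacts delicately with the network geometry through $\mathcal{J}_{S_n}(s,b_n)$, and it is the step that genuinely needs the bounded-process simplification noted in the footnote to Assumption \ref{as:weakCLT1}. A secondary subtlety is bookkeeping the recentering: because $\hat V_i$ uses $\hat\mu_{S_n}$ rather than $\bar\mu_{S_n}$, the difference $\bbE V_i - \bbE \hat V_i$ is not mean-zero per unit but its weighted sum over $j$ within distance $b_n$ must still be shown negligible, which again leans on $M_{S_n}(b_n,k) = o(|S_n|^{k/2})$. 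Everything else is routine algebra and repeated application of Lemmas \ref{lem:p1}, \ref{lem:consistency}, and \ref{lem:linearpsi1}, so I would state the ADEY case in full and remark that the ADED and LADE cases follow by the same steps.
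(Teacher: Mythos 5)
Your overall architecture is the same as the paper's: decompose $(\hat\sigma_{S_n}^{\mathrm{ADEY}})^2$ into the centered infeasible HAC estimator, the bias $B_{S_n}^{\mathrm{ADEY}}$, and plug-in error terms; get consistency of the infeasible estimator from the $\psi$-weak dependence of $\{V_i^{\mathrm{ADEY}}\}$ (Lemma \ref{lem:linearpsi1}) together with Assumption \ref{as:weakHAC}(i) (the paper simply invokes Proposition 4.1 of \cite{kojevnikov2021limit} rather than re-deriving the mean/variance bounds you sketch); and control the substitution of $\hat V_i^{\mathrm{ADEY}}$ for $V_i^{\mathrm{ADEY}}$ via $\max_{i\in S_n}|\hat V_i^{\mathrm{ADEY}}-V_i^{\mathrm{ADEY}}|=O_P(|S_n|^{-1/2})$ (Lemmas \ref{lem:p1} and \ref{lem:consistency}) and $M_{S_n}(b_n,1)=o(|S_n|^{1/2})$. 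The ADED and LADE cases are handled exactly as you indicate.

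There is, however, a genuine gap in your second stage. With the centered definition of $(\tilde\sigma_{S_n}^{\mathrm{ADEY}})^2$ that you write down, the claim $(\hat\sigma_{S_n}^{\mathrm{ADEY}})^2-(\tilde\sigma_{S_n}^{\mathrm{ADEY}})^2=o_P(1)$ cannot hold: the difference is $B_{S_n}^{\mathrm{ADEY}}+o_P(1)$, and tracing the source exposes the term your bookkeeping misses. Writing $\hat V_i^{\mathrm{ADEY}}=(V_i^{\mathrm{ADEY}}-\bbE[V_i^{\mathrm{ADEY}}])+e_i$ with $e_i=\bbE[V_i^{\mathrm{ADEY}}]+O_P(|S_n|^{-1/2})$ uniformly, the difference of the two quadratic forms contains, besides $B_{S_n}^{\mathrm{ADEY}}$ and pieces that do carry an $O_P(|S_n|^{-1/2})$ factor, the mean-zero cross term
\[
\frac{2}{|S_n|}\sum_{i\in S_n}\bigl(V_i^{\mathrm{ADEY}}-\bbE[V_i^{\mathrm{ADEY}}]\bigr)\kappa_{n,i},
\qquad
\kappa_{n,i}\coloneqq\sum_{j\in S_n}\bbE[V_j^{\mathrm{ADEY}}]\,\bm{1}\{\ell_{\bm{A}}(i,j)\le b_n\},
\]
in which the heterogeneous means $\bbE[V_j^{\mathrm{ADEY}}]$ are $O(1)$, not $O_P(|S_n|^{-1/2})$; this is not the quantity ``$\bbE V_i-\bbE\hat V_i$'' you flag, and a uniform bound yields only $O(M_{S_n}(b_n,1))=o(|S_n|^{1/2})$, which is not $o(1)$, so Assumption \ref{as:weakHAC}(ii) alone cannot close it. The paper bounds this term by a dedicated second-moment argument: its variance splits into a diagonal part of order $|S_n|^{-1}M_{S_n}(b_n,2)$, killed by Assumption \ref{as:weakHAC}(ii) with $k=2$, and an off-diagonal part bounded, via the $\psi$-dependence covariance inequality, by $C|S_n|^{-2}\sum_{s\ge 1}|\mathcal{J}_{S_n}(s,b_n)|\,\tilde\theta_{n,s}^{\mathrm{ADE}}$, which is $o(1)$ by the second part of Assumption \ref{as:weakHAC}(i). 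In other words, the $\mathcal{J}_{S_n}(s,b_n)$ condition is needed not only for the oracle variance (where the paper outsources the work to \cite{kojevnikov2021limit}) but precisely for this bias--fluctuation interaction, which is the one step your sketch does not supply and where the argument as written would fail; the rest of your plan is in line with the paper's proof.
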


The proof is provided in the end of this section. 
In the proof, we show that $(\hat \sigma_{S_n}^{\mathrm{ADEY}})^2 = (\tilde \sigma_{S_n}^{\mathrm{ADEY}})^2 + B_{S_n}^{\mathrm{ADEY}} + o_P(1)$, where the infeasible estimator $(\tilde \sigma_{S_n}^{\mathrm{ADEY}})^2$ is consistent for $(\sigma_{S_n}^{\mathrm{ADEY}})^2$.
There is an asymptotic bias term $B_{S_n}^{\mathrm{ADEY}}$ due to the fact that we cannot estimate the heterogeneous mean $\bbE[ V_i^{\mathrm{ADEY}} ]$.
It is well known in the design-based uncertainty framework that heterogeneous means cause standard variance estimators to have asymptotic biases (cf. \citealp{imbens2015causal}).

\subsection{Wild bootstrap}\label{subsubsec:Bootstrap}

Alternatively to the HAC estimator, we can consider using a network-dependent bootstrap method.
Here, we particularly focus on \citeauthor{kojevnikov2021bootstrap}'s (\citeyear{kojevnikov2021bootstrap}) wild bootstrap approach.

For exposition, we focus only on constructing a confidence interval for $\mathrm{ADEY}_{S_n}(t)$.
The following procedure can be applied to the other parameters as well.
As shown in \eqref{eq:ADEYdecomp} in Appendix \ref{app:proofs}, we have
\begin{align*}
	\sqrt{|S_n|}\left( \hat{\mathrm{ADEY}}_{S_n}(t) - \mathrm{ADEY}_{S_n}(t) \right) = \frac{1}{\sqrt{|S_n|}} \sum_{i \in S_n}V_i^{\mathrm{ADEY}} + o_P(1).
\end{align*}
Thus, if it is possible to simulate the distribution of $|S_n|^{-1/2} \sum_{i \in S_n}V_i^{\mathrm{ADEY}}$, we can construct an asymptotically valid confidence interval for $\mathrm{ADEY}_{S_n}(t)$.
To this end, noting that the sample mean of $\hat V_i^{\mathrm{ADEY}}$ over $S_n$ is zero, we construct a bootstrap counterpart $V_i^{*,\mathrm{ADEY}}$ of $V_i^{\mathrm{ADEY}}$ in the following procedure: $V_i^{*,\mathrm{ADEY}} \coloneqq \hat V_i^{\mathrm{ADEY}} R_i$, where $R_i$ is the $i$-th element of the $|S_n| \times 1$ vector $[\Omega_{S_n}(b_n)]^{1/2} \zeta_{S_n}$ with $\Omega_{S_n}(b_n) \coloneqq ( |S_{\bm{A}}(i, b_n) \cap S_{\bm{A}}(j, b_n)| / M_{S_n}(b_n, 1) )_{i,j \in S_n}$, $\zeta_{S_n}$ is an $|S_n| \times 1$ vector of random variables drawn from $\mathrm{Normal}(0, \bm{I}_{|S_n|})$ independently of the data, and $b_n$ is a bandwidth parameter.
Then, by repeatedly drawing $\zeta_{S_n}$ many times, we can obtain the distribution of $|S_n|^{-1/2}\sum_{i \in S_n}V_i^{*,\mathrm{ADEY}}$ conditional on the observed data, which serves as an approximation of the distribution of $|S_n|^{-1/2} \sum_{i \in S_n}V_i^{\mathrm{ADEY}}$.
An intuition for the (first-order) validity of this bootstrap method is as follows. 
Since the conditional expectation of $V_i^{*,\mathrm{ADEY}}$ given the observed data is zero, we have
\begin{align*}
	\Var\left[ \frac{1}{\sqrt{|S_n|}} \sum_{i \in S_n}  V_i^{*,\mathrm{ADEY}} \; \middle| \; \text{ data } \right]
	= \frac{1}{|S_n|}\sum_{i \in S_n} \sum_{j \in S_n} \hat V_i^{\mathrm{ADEY}} \hat V_j^{\mathrm{ADEY}} [\Omega_{S_n}(b_n)]_{i,j}.
\end{align*}
Thus, this is a version of the HAC estimator with kernel $\Omega_{S_n}(b_n)$.
For more details, see \cite{kojevnikov2021bootstrap}.

\subsection{Proof of Theorem \ref{thm:HAC}}

To save space, we prove only the result for $\sigma_{S_n}^{\mathrm{ADEY}}$ (those for $\sigma_{S_n}^{\mathrm{ADED}}$ and $\sigma_{S_n}^{\mathrm{LADE}}$ can be shown in the same manner).
It is easy to see that
\begin{align} \label{eq:HACdecomp}
	\begin{split}
		(\hat \sigma_{S_n}^{\mathrm{ADEY}})^2 
		& = (\tilde \sigma_{S_n}^{\mathrm{ADEY}})^2 + B_{S_n}^{\mathrm{ADEY}}  \\
		& \quad +  \frac{1}{|S_n|} \sum_{i \in S_n} \sum_{j \in S_n} \left( \hat V_i^{\mathrm{ADEY}} \hat V_j^{\mathrm{ADEY}} - V_i^{\mathrm{ADEY}} V_j^{\mathrm{ADEY}} \right) \bm{1}\{ \ell_{\bm{A}}(i, j) \le b_n \} \\
		& \quad + \frac{2}{|S_n|} \sum_{i \in S_n} \sum_{j \in S_n} ( V_i^{\mathrm{ADEY}} - \bbE[V_i^{\mathrm{ADEY}}] ) \bbE[V_j^{\mathrm{ADEY}}] \bm{1}\{ \ell_{\bm{A}}(i, j) \le b_n \},
	\end{split}
\end{align}
where $(\tilde \sigma_{S_n}^{\mathrm{ADEY}})^2$ is the infeasible oracle estimator defined as
\begin{align*}
	(\tilde \sigma_{S_n}^{\mathrm{ADEY}})^2 
	& \coloneqq \frac{1}{|S_n|} \sum_{i \in S_n} \sum_{j \in S_n} \left( V_i^{\mathrm{ADEY}} - \bbE [V_i^{\mathrm{ADEY}}] \right) \left( V_j^{\mathrm{ADEY}} - \bbE [V_j^{\mathrm{ADEY}}] \right) \bm{1}\{ \ell_{\bm{A}}(i, j) \le b_n \}.
\end{align*}
By Lemma \ref{lem:linearpsi1} and Assumptions \ref{as:outcome}, \ref{as:overlap}, and \ref{as:weakHAC}(i), Proposition 4.1 of \cite{kojevnikov2021limit} implies that $(\tilde \sigma_{S_n}^{\mathrm{ADEY}})^2 = (\sigma_{S_n}^{\mathrm{ADEY}})^2 + o_P(1)$.
Thus, we obtain the desired result if the second and third lines of \eqref{eq:HACdecomp} are asymptotically negligible.

For the second line of \eqref{eq:HACdecomp}, observe that
\begin{align} \label{eq:VV}
	\begin{split}
		& \frac{1}{|S_n|} \sum_{i \in S_n} \sum_{j \in S_n} \left( \hat V_i^{\mathrm{ADEY}} \hat V_j^{\mathrm{ADEY}} - V_i^{\mathrm{ADEY}} V_j^{\mathrm{ADEY}} \right) \bm{1}\{ \ell_{\bm{A}}(i, j) \le b_n \} \\
		& = \frac{1}{|S_n|} \sum_{i \in S_n} \sum_{j \in S_n} \left( \hat V_i^{\mathrm{ADEY}} - V_i^{\mathrm{ADEY}} \right) \hat V_j^{\mathrm{ADEY}} \bm{1}\{ \ell_{\bm{A}}(i, j) \le b_n \} \\
		& \quad + \frac{1}{|S_n|} \sum_{i \in S_n} \sum_{j \in S_n} \left( \hat V_j^{\mathrm{ADEY}} - V_j^{\mathrm{ADEY}} \right) V_i^{\mathrm{ADEY}} \bm{1}\{ \ell_{\bm{A}}(i, j) \le b_n \}.
	\end{split}
\end{align}
Since $\max_{j \in S_n} |\hat V_j^{\mathrm{ADEY}}| = O_P(1)$ and $\max_{i \in S_n} |\hat V_i^{\mathrm{ADEY}} - V_i^{\mathrm{ADEY}}| = O_P(|S_n|^{-1/2})$ by Assumptions \ref{as:outcome}, \ref{as:overlap}, and \ref{as:weakHAC}(i) and Lemmas \ref{lem:p1} and \ref{lem:consistency}, we have
\begin{align*}
	& \left| \frac{1}{|S_n|} \sum_{i \in S_n} \sum_{j \in S_n} \left( \hat V_i^{\mathrm{ADEY}} - V_i^{\mathrm{ADEY}} \right) \hat V_j^{\mathrm{ADEY}} \bm{1}\{ \ell_{\bm{A}}(i, j) \le b_n \} \right| \\
	& \le \left( \max_{j \in S_n} | \hat V_j^{\mathrm{ADEY}} | \right) \left( \max_{i \in S_n} | \hat V_i^{\mathrm{ADEY}} - V_i^{\mathrm{ADEY}} | \right) \frac{1}{|S_n|} \sum_{i \in S_n} \sum_{j \in S_n} \bm{1}\{ \ell_{\bm{A}}(i, j) \le b_n \} \\
	& = O_P(1) \cdot O_P\left( \frac{1}{\sqrt{|S_n|}} \right) \cdot M_{S_n}(b_n, 1),
\end{align*}
which is $o_P(1)$ under Assumption \ref{as:weakHAC}(ii).
Similarly, we can show that the second term of \eqref{eq:VV} is $o_P(1)$.
Thus, the second line of \eqref{eq:HACdecomp} is $o_P(1)$.

To evaluate the third line of \eqref{eq:HACdecomp}, let $\kappa_{n,i} \coloneqq \sum_{j \in S_n} \bbE[V_j^{\mathrm{ADEY}}] \bm{1}\{ \ell_{\bm{A}}(i, j) \le b_n \}$.
Using the norm inequality, we have
\begin{align*}
	& \bbE \left| \frac{1}{|S_n|} \sum_{i \in S_n} ( V_i^{\mathrm{ADEY}} - \bbE [V_i^{\mathrm{ADEY}}] ) \kappa_{n,i} \right| \\
	& \le \left( \bbE \left[ \left( \frac{1}{|S_n|} \sum_{i \in S_n} ( V_i^{\mathrm{ADEY}} - \bbE V_i^{\mathrm{ADEY}} ) \kappa_{n,i} \right)^2 \right] \right)^{1/2} \\
	& = \left( \frac{1}{|S_n|^2} \sum_{i \in S_n} \Var[ V_i^{\mathrm{ADEY}} ] \kappa_{n,i}^2 + \frac{1}{|S_n|^2} \sum_{i \in S_n} \sum_{j \in S_n \setminus \{ i \}} \Cov[ V_i^{\mathrm{ADEY}}, V_j^{\mathrm{ADEY}} ] \kappa_{n,i} \kappa_{n,j} \right)^{1/2}.
\end{align*}
Noting that $V_i^{\mathrm{ADEY}}$ is bounded by Assumptions \ref{as:outcome} and \ref{as:overlap}, we have $ |S_n|^{-2} \sum_{i \in S_n} \Var[ V_i^{\mathrm{ADEY}} ] \kappa_{n,i}^2 \le C |S_n|^{-1} M_{S_n}(b_n, 2) = o(1)$ by Assumption \ref{as:weakHAC}(ii).
Further, we can see that
\begin{align*}
	& \left| \frac{1}{|S_n|^2} \sum_{i \in S_n} \sum_{j \in S_n \setminus \{ i \}} \Cov[ V_i^{\mathrm{ADEY}}, V_j^{\mathrm{ADEY}} ] \kappa_{n,i} \kappa_{n,j} \right| \\
	& \le \frac{1}{|S_n|^2} \sum_{s = 1}^{n-1} \sum_{i \in S_n} \sum_{j \in S_n}  \bm{1}\{ \ell_{\bm{A}}(i, j) = s \} |\Cov[ V_i^{\mathrm{ADEY}}, V_j^{\mathrm{ADEY}} ]| \cdot |\kappa_{n,i}| \cdot |\kappa_{n,j}| \\
	& \le \frac{C}{|S_n|^2} \sum_{s = 1}^{n-1} \tilde \theta_{n,s}^{\mathrm{ADE}} \sum_{i \in S_n} \sum_{j \in S_n} \sum_{k \in S_n} \sum_{l \in S_n} \bm{1}\{ \ell_{\bm{A}}(i, j) = s \} \bm{1}\{ \ell_{\bm{A}}(i, k) \le b_n \} \bm{1}\{ \ell_{\bm{A}}(j, l) \le b_n \} \\
	& = \frac{C}{|S_n|^2} \sum_{s = 1}^{n-1} |\mathcal{J}_{S_n}(s, b_n)| \tilde \theta_{n,s}^{\mathrm{ADE}}
	= o(1),
\end{align*}
where the second inequality follows from the fact that $\{ V_i^{\mathrm{ADEY}} \}_{i \in S_n}$ is $\psi$-weakly dependent by Lemma \ref{lem:linearpsi1} and the last line follows from the second part of Assumption \ref{as:weakHAC}(i).
Thus, the third line of \eqref{eq:HACdecomp} is $o_P(1)$.
\qed 


\section{Average Overall Effects} \label{sec:overall}

We consider the identification and estimation of average overall effect (AOE) parameters in a similar spirit to \cite{hu2022average}.

\subsection{Identification}

Let $\bar \mu_{S_n}^Y(z; \bar{\mathcal{E}}) \coloneqq |S_n|^{-1} \sum_{i \in S_n} \sum_{j \in \bar{\mathcal{E}}_i} \mu_{ji}^Y(z)$, where $\bar{\mathcal{E}}_i \coloneqq \mathcal{E}_i \cup \{ i \}$.
The AOE of the IV on the outcome is defined by $\mathrm{AOEY}_{S_n}
\coloneqq \bar \mu_{S_n}^Y(1; \bar{\mathcal{E}}) - \bar \mu_{S_n}^Y(0; \bar{\mathcal{E}})$, which can be rewritten as the sum of the ADE and AIE estimands:
\begin{align*}
	\mathrm{AOEY}_{S_n}
	& = \frac{1}{ |S_n| } \sum_{i \in S_n} [ \mu_i^Y(1) - \mu_i^Y(0) ] + \frac{1}{ |S_n| } \sum_{i \in S_n} \sum_{j \in \mathcal{E}_i} [ \mu_{ji}^Y(1) - \mu_{ji}^Y(0) ] \\
	& = \mathrm{ADEY}_{S_n} + \mathrm{AIEY}_{S_n}.
\end{align*}
Similarly, we define $\bar \mu_{S_n}^D(z; \bar{\mathcal{E}}) \coloneqq |S_n|^{-1} \sum_{i \in S_n} \sum_{j \in \bar{\mathcal{E}}_i} \mu_{ji}^D(z)$ and $\mathrm{AOED}_{S_n} \coloneqq \bar \mu_{S_n}^D(1; \bar{\mathcal{E}}) - \bar \mu_{S_n}^D(0; \bar{\mathcal{E}}) = \mathrm{ADED}_{S_n} + \mathrm{AIED}_{S_n}$.

The following proposition shows that the AOE estimands can be interpreted as the weighted average of the effect of $i$'s IV on the sum of $i$'s own outcome and $j$'s ($j \in \mathcal{E}_i$) outcomes and that on the sum of $i$'s own treatment and $j$'s treatments with the weight equal to $\pi_i(\bm{z}_{-i})$.
The proof is the same as Proposition \ref{prop:AIE_ITT} and thus omitted.

\begin{proposition} \label{prop:AOE_ITT}
	Under Assumption \ref{as:independence},
	\begin{align*}
		\mathrm{AOEY}_{S_n}
		& = \frac{1}{|S_n|} \sum_{i \in S_n} \sum_{\bm{z}_{-i} \in \{ 0, 1 \}^{n-1}} \sum_{j \in \bar{\mathcal{E}}_i} \{ y_j(Z_i = 1, \bm{Z}_{-i} = \bm{z}_{-i}) - y_j(Z_i = 0, \bm{Z}_{-i} = \bm{z}_{-i}) \} \pi_i(\bm{z}_{-i}), \\
		\mathrm{AOED}_{S_n}
		& = \frac{1}{|S_n|} \sum_{i \in S_n} \sum_{\bm{z}_{-i} \in \{ 0, 1 \}^{n-1}} \sum_{j \in \bar{\mathcal{E}}_i} \{ D_j(Z_i = 1, \bm{Z}_{-i} = \bm{z}_{-i}) - D_j(Z_i = 0, \bm{Z}_{-i} = \bm{z}_{-i}) \} \pi_i(\bm{z}_{-i}).
	\end{align*}
\end{proposition}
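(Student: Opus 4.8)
The plan is to reuse the argument from the proof of Proposition~\ref{prop:AIE_ITT} essentially verbatim, with the interference set $\mathcal{E}_i$ replaced by its augmentation $\bar{\mathcal{E}}_i = \mathcal{E}_i \cup \{i\}$; no genuinely new ingredient is required, since that proof never exploits the fact that $i \notin \mathcal{E}_i$. An equivalent route is to start from the additive decomposition $\mathrm{AOEY}_{S_n} = \mathrm{ADEY}_{S_n} + \mathrm{AIEY}_{S_n}$ (and likewise for $D$) recorded just above the statement, apply Proposition~\ref{prop:ITT} in its constant-IEM form from Remark~\ref{remark:constantIEM1} --- in which $\pi_i(\bm{z}_{-i}, t)$ collapses to $\pi_i(\bm{z}_{-i})$ --- together with Proposition~\ref{prop:AIE_ITT}, and then merge the two resulting sums into a single one indexed by $j \in \bar{\mathcal{E}}_i$. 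I would present the direct route, as it is self-contained.

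Concretely, first I would record the deterministic identity, valid for every unit $j$ (including $j = i$, where $y_j(Z_i = z_i, \bm{Z}_{-i} = \bm{z}_{-i})$ reads as $y_i(z_i, \bm{z}_{-i})$),
\[
Y_j = \sum_{z_i = 0}^{1} \sum_{\bm{z}_{-i} \in \{0,1\}^{n-1}} \bm{1}\{Z_i = z_i,\, \bm{Z}_{-i} = \bm{z}_{-i}\}\; y_j(Z_i = z_i,\, \bm{Z}_{-i} = \bm{z}_{-i}).
\]
Taking the conditional expectation given $Z_i = z$ and using $Y_j = y_j(\bm{Z})$ gives
\[
\mu_{ji}^Y(z) = \sum_{\bm{z}_{-i} \in \{0,1\}^{n-1}} y_j(Z_i = z,\, \bm{Z}_{-i} = \bm{z}_{-i})\; \Pr[\bm{Z}_{-i} = \bm{z}_{-i} \mid Z_i = z],
\]
and by Assumption~\ref{as:independence} the mutual independence of $\{Z_i\}_{i \in N_n}$ makes $Z_i$ independent of $\bm{Z}_{-i}$, so $\Pr[\bm{Z}_{-i} = \bm{z}_{-i} \mid Z_i = z] = \pi_i(\bm{z}_{-i})$ and hence $\mu_{ji}^Y(z) = \sum_{\bm{z}_{-i}} y_j(Z_i = z, \bm{Z}_{-i} = \bm{z}_{-i})\, \pi_i(\bm{z}_{-i})$.

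Next I would substitute this into $\mathrm{AOEY}_{S_n} = |S_n|^{-1} \sum_{i \in S_n} \sum_{j \in \bar{\mathcal{E}}_i} [\mu_{ji}^Y(1) - \mu_{ji}^Y(0)]$, interchange the (finite) sums, and collect the two pieces of $\mu_{ji}^Y(1) - \mu_{ji}^Y(0)$ under the common weight $\pi_i(\bm{z}_{-i})$; this yields exactly the displayed expression for $\mathrm{AOEY}_{S_n}$. Repeating the same chain of steps with $D_j = D_j(\bm{Z})$ in place of $Y_j = y_j(\bm{Z})$ --- that is, replacing $y_j$ by $D_j$ throughout --- gives the formula for $\mathrm{AOED}_{S_n}$.

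There is essentially no obstacle here; the computation is routine. The only point that deserves a sentence of care is notational: for the own index $j = i$ the symbol $y_j(Z_i = z, \bm{Z}_{-i} = \bm{z}_{-i})$ must be read as $y_i(z, \bm{z}_{-i})$, per the notational convention introduced in Section~\ref{sec:model}, so that the $j = i$ summand of the $\bar{\mathcal{E}}_i$-sum reproduces the ADE-type term of Proposition~\ref{prop:ITT} while the $j \in \mathcal{E}_i$ summands reproduce the AIE-type terms of Proposition~\ref{prop:AIE_ITT}, consistent with $\mathrm{AOEY}_{S_n} = \mathrm{ADEY}_{S_n} + \mathrm{AIEY}_{S_n}$.
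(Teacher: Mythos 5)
Your proposal is correct and matches the paper's approach: the paper simply states that the proof is the same as that of Proposition \ref{prop:AIE_ITT} (with $\mathcal{E}_i$ replaced by $\bar{\mathcal{E}}_i$) and omits it, which is precisely the direct route you carry out, including the harmless $j=i$ notational point.
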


Next, we turn to the local average overall effect (LAOE), which is defined by
\begin{align*}
	\mathrm{LAOE}_{S_n}
	\coloneqq \sum_{i \in S_n} \sum_{\bm{z}_{-i} \in \{ 0, 1 \}^{n-1}} \sum_{j \in \bar{\mathcal{E}}_i} \{ y_j(Z_i = 1, \bm{Z}_{-i} = \bm{z}_{-i}) - y_j(Z_i = 0, \bm{Z}_{-i} = \bm{z}_{-i}) \} \frac{ \mathcal{C}_i(\bm{z}_{-i}) \pi_i(\bm{z}_{-i}) }{ \sum_{i \in S_n} \bbE[ \mathcal{C}_i ] }.
\end{align*}
The LAOE can be written as the sum of the LADE and LAIE parameters:
\begin{align*}
	\mathrm{LAOE}_{S_n} = \mathrm{LADE}_{S_n} + \mathrm{LAIE}_{S_n}.
\end{align*}
Considering this, it is trivial from Theorems \ref{thm:LADE} and \ref{thm:LAIE1} that the LAOE is identifiable from a Wald type estimand.

\begin{theorem} \label{thm:LAOE}
	Suppose that Assumptions \ref{as:independence}, \ref{as:relevance2}, and \ref{as:monotone2} hold.
	In addition, let Assumption \ref{as:restrict2} hold when $\mathcal{E}_i$ is replaced by $\bar{\mathcal{E}}_i$.
	Then, $\mathrm{LAOE}_{S_n} = \mathrm{AOEY}_{S_n} / \mathrm{ADED}_{S_n}$.
\end{theorem}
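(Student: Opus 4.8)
The plan is to reduce Theorem \ref{thm:LAOE} to the two Wald-type identification results already established, namely Theorems \ref{thm:LADE} and \ref{thm:LAIE1}, together with the algebraic decomposition $\mathrm{LAOE}_{S_n} = \mathrm{LADE}_{S_n} + \mathrm{LAIE}_{S_n}$ noted just above the statement. The only subtlety is that the hypotheses of Theorem \ref{thm:LAOE} are stated with $\bar{\mathcal{E}}_i = \mathcal{E}_i \cup \{i\}$ in place of $\mathcal{E}_i$, so I first need to check that this stronger restricted interference assumption simultaneously delivers both the $\mathrm{LADE}$ identification and the $\mathrm{LAIE}$ identification.

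First I would observe that Assumption \ref{as:restrict2} with $\bar{\mathcal{E}}_i$ in place of $\mathcal{E}_i$ asserts, for all $i \in S_n$, $j \in \bar{\mathcal{E}}_i$, and $\bm{z}_{-i}$ with $\pi_i(\bm{z}_{-i}) > 0$, that $y_j(Z_i = 1, \bm{Z}_{-i} = \bm{z}_{-i}) = y_j(Z_i = 0, \bm{Z}_{-i} = \bm{z}_{-i})$ whenever $D_i(Z_i = 1, \bm{Z}_{-i} = \bm{z}_{-i}) = D_i(Z_i = 0, \bm{Z}_{-i} = \bm{z}_{-i})$. Specializing to $j = i$ (which lies in $\bar{\mathcal{E}}_i$ but not $\mathcal{E}_i$) gives exactly Assumption \ref{as:restrict1} evaluated without conditioning on $T_i = t$; restricting to $j \in \mathcal{E}_i$ gives exactly Assumption \ref{as:restrict2}. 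Hence the single hypothesis in Theorem \ref{thm:LAOE} implies both Assumption \ref{as:restrict1} (for the unconditional weighting $\pi_i(\bm{z}_{-i})$) and Assumption \ref{as:restrict2}. Combined with Assumptions \ref{as:independence}, \ref{as:relevance2}, and \ref{as:monotone2}, I may therefore invoke Theorem \ref{thm:LAIE1} to get $\mathrm{LAIE}_{S_n} = \mathrm{AIEY}_{S_n} / \mathrm{ADED}_{S_n}$, and I may invoke the constant-IEM version of Theorem \ref{thm:LADE} (see Remark \ref{remark:constantIEM1}, which notes the identification results remain valid with $T$ constant, in which case $\mathrm{LADE}_{S_n}(t)$ reduces to the unconditional $\mathrm{LADE}_{S_n}$ and $\mathrm{ADEY}_{S_n}(t)/\mathrm{ADED}_{S_n}(t)$ to $\mathrm{ADEY}_{S_n}/\mathrm{ADED}_{S_n}$) to get $\mathrm{LADE}_{S_n} = \mathrm{ADEY}_{S_n}/\mathrm{ADED}_{S_n}$. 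Note that with a constant IEM, Assumption \ref{as:relevance2} and Assumption \ref{as:monotone2} coincide with Assumptions \ref{as:relevance1} and \ref{as:monotone1}, and Assumption \ref{as:restrict1} above matches the constant-IEM instance of the restricted interference condition, so the hypotheses line up.

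Then I would conclude by pure arithmetic:
\begin{align*}
	\mathrm{LAOE}_{S_n}
	= \mathrm{LADE}_{S_n} + \mathrm{LAIE}_{S_n}
	= \frac{\mathrm{ADEY}_{S_n}}{\mathrm{ADED}_{S_n}} + \frac{\mathrm{AIEY}_{S_n}}{\mathrm{ADED}_{S_n}}
	= \frac{\mathrm{ADEY}_{S_n} + \mathrm{AIEY}_{S_n}}{\mathrm{ADED}_{S_n}}
	= \frac{\mathrm{AOEY}_{S_n}}{\mathrm{ADED}_{S_n}},
\end{align*}
where the first equality is the decomposition established immediately before the theorem and the last uses $\mathrm{AOEY}_{S_n} = \mathrm{ADEY}_{S_n} + \mathrm{AIEY}_{S_n}$. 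The common denominator $\mathrm{ADED}_{S_n}$ is well-defined and bounded away from zero by Assumption \ref{as:relevance2}, since in the proof of Theorem \ref{thm:LAIE1} one shows $\mathrm{ADED}_{S_n} = |S_n|^{-1}\sum_{i \in S_n} \bbE[\mathcal{C}_i]$, so no division issue arises.

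I do not anticipate a genuine obstacle here; the main thing to get right is the bookkeeping in the second paragraph — verifying that the single $\bar{\mathcal{E}}_i$-version of Assumption \ref{as:restrict2} really does cover both the $j=i$ case needed for $\mathrm{LADE}_{S_n}$ and the $j \in \mathcal{E}_i$ case needed for $\mathrm{LAIE}_{S_n}$, and that the relevance/monotonicity assumptions are being applied in their correct (unconditional) forms. Everything else is a direct citation of Theorems \ref{thm:LADE} and \ref{thm:LAIE1} plus the additive decompositions of the overall effects.
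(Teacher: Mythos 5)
Your proposal is correct and follows essentially the same route as the paper, which states the result as an immediate consequence of the decompositions $\mathrm{LAOE}_{S_n} = \mathrm{LADE}_{S_n} + \mathrm{LAIE}_{S_n}$ and $\mathrm{AOEY}_{S_n} = \mathrm{ADEY}_{S_n} + \mathrm{AIEY}_{S_n}$ combined with Theorems \ref{thm:LADE} and \ref{thm:LAIE1}. Your bookkeeping --- checking that the $\bar{\mathcal{E}}_i$-version of Assumption \ref{as:restrict2} delivers, via the $j=i$ case, the constant-IEM instance of Assumption \ref{as:restrict1} needed for the unconditional $\mathrm{LADE}_{S_n}$, and via $j \in \mathcal{E}_i$ the hypothesis of Theorem \ref{thm:LAIE1} --- is exactly the detail the paper leaves implicit in calling the result trivial.
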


\subsection{Estimation and asymptotic theory}

We estimate $\mathrm{AOEY}_{S_n}$ by $\hat{\mathrm{AOEY}}_{S_n} \coloneqq \hat \mu_{S_n}^Y(1; \bar{\mathcal{E}}) - \hat \mu_{S_n}^Y(0; \bar{\mathcal{E}})$, where 
\begin{align*}
	\hat \mu_{S_n}^Y(z; \bar{\mathcal{E}})
	\coloneqq \frac{1}{|S_n|} \sum_{i \in S_n} \sum_{j \in \bar{\mathcal{E}}_i} \frac{ Y_j \cdot \bm{1}\{ Z_i = z\} }{ \hat p_{S_n}(z) }.
\end{align*}
Similarly, we can define the estimators of $\mathrm{AOED}_{S_n}$ and $\bar \mu_{S_n}^D(z; \bar{\mathcal{E}})$, denoted as $\hat{\mathrm{AOED}}_{S_n}$ and $\hat \mu_{S_n}^D(z; \bar{\mathcal{E}})$, respectively.
Then, we have
\begin{align*}
	\hat{\mathrm{LAOE}}_{S_n}
	= \frac{ \hat{\mathrm{AOEY}}_{S_n} }{ \hat{\mathrm{ADED}}_{S_n} },
\end{align*}
where $\hat{\mathrm{ADED}}_{S_n}$ is defined in Section \ref{subsec:estimator}.

The asymptotic properties of these estimators can be derived in the same way as in Section \ref{subsec:asymptotics}.
Given the identification result in Theorem \ref{thm:LAOE}, we focus on the estimation of $\mathrm{AOEY}_{S_n}$ and $\mathrm{LAOE}_{S_n}$;
the case of $\mathrm{AOED}_{S_n}$ is analogous.
Let $\theta_{n,s}^{\mathrm{AOE}} \coloneqq \max \{ \theta_{n,s}^{\mathrm{ADE}}, \theta_{n,s}^{\mathrm{AIE}} \}$, and $\tilde \theta_{n,s}^{\mathrm{AOE}}$ is defined in the same way as in \eqref{eq:tildetheta}.
Suppose that Assumptions \ref{as:ANI1}, \ref{as:weakLLN1}, and \ref{as:weakCLT1} hold when $\tilde \theta_{n,s}^{\mathrm{ADE}}$ is replaced by $\tilde \theta_{n,s}^{\mathrm{AOE}}$.
For each $i \in S_n$, let
\begin{align*}
	V_{\bar{\mathcal{E}}_i}^{\mathrm{AOEY}}
	& \coloneqq W_{{\bar{\mathcal{E}}_i}}^Y - \frac{ \bar \mu_{S_n}^Y(1; \bar{\mathcal{E}}) }{ p_{S_n}(1) } W_{\bar{\mathcal{E}}_i}^Z + \frac{ \bar \mu_{S_n}^Y(0; \bar{\mathcal{E}}) }{ p_{S_n}(0) } W_{\bar{\mathcal{E}}_i}^{1-Z},
	\qquad 
	V_{\bar{\mathcal{E}}_i}^{\mathrm{ADED}}
	\coloneqq W_{{\bar{\mathcal{E}}_i}}^D - \frac{ \bar \mu_{S_n}^D(1) }{ p_{S_n}(1) } W_{\bar{\mathcal{E}}_i}^Z + \frac{ \bar \mu_{S_n}^D(0) }{ p_{S_n}(0) } W_{\bar{\mathcal{E}}_i}^{1-Z}, \\
	V_{\bar{\mathcal{E}}_i}^{\mathrm{LAOE}}
	& \coloneqq \frac{ 1 }{ \mathrm{ADED}_{S_n} } V_{\bar{\mathcal{E}}_i}^{\mathrm{AOEY}} - \frac{ \mathrm{AOEY}_{S_n} }{ \mathrm{ADED}_{S_n}^2 } V_{\bar{\mathcal{E}}_i}^{\mathrm{ADED}},
\end{align*}
where
\begin{align*}
	W_{\bar{\mathcal{E}}_i}^Y 
	& \coloneqq \sum_{j \in \bar{\mathcal{E}}_i} Y_j \left[ \frac{ \bm{1}\{ Z_i = 1 \} }{ p_{S_n}(1) } - \frac{ \bm{1}\{ Z_i = 0 \} }{ p_{S_n}(0) } \right], \\ 
	W_{\bar{\mathcal{E}}_i}^D 
	& \coloneqq D_i \left[ \frac{ \bm{1}\{ Z_i = 1 \} }{ p_{S_n}(1) } - \frac{ \bm{1}\{ Z_i = 0 \} }{ p_{S_n}(0) } \right], \\ 
	W_{\bar{\mathcal{E}}_i}^Z
	& \coloneqq \bm{1}\{ Z_i = 1 \},
	\qquad 
	W_{\bar{\mathcal{E}}_i}^{1-Z} 
	\coloneqq \bm{1}\{ Z_i = 0 \}.
\end{align*}
Let $(\sigma_{S_n}^\mathrm{AOEY})^2 \coloneqq \Var[ |S_n|^{-1/2} \sum_{i \in S_n} V_{\bar{\mathcal{E}}_i}^{\mathrm{AOEY}} ]$, and $( \sigma_{S_n}^\mathrm{LAOE} )^2$ is defined analogously.
Given these definitions, we obtain the asymptotic normality results for the estimators of $\mathrm{AOEY}_{S_n}$ and $\mathrm{LAOE}_{S_n}$ by replacing the variables in Theorem \ref{thm:normal1} with the corresponding variables defined here.


\section{Average Spillover Effects} \label{sec:spillover}

We consider the identification and estimation of average spillover effect (ASE) parameters similar to those considered in \cite{imai2020causal}.

\subsection{Identification}

\subsubsection{Intention-to-treat analysis}

The ASE of the IV on the outcome is defined by $\mathrm{ASEY}_{S_n}(z, t, t') \coloneqq \bar \mu_{S_n}^Y(z, t) - \bar \mu_{S_n}^Y(z, t')$ for $z \in \{ 0, 1 \}$ and $t, t' \in \mathcal{T}$ such that $t \neq t'$, where $\bar \mu_{S_n}^Y(z, t)$ is defined in Section \ref{sec:identification}.
Similarly, we define $\mathrm{ASED}_{S_n}(z, t, t') \coloneqq \bar \mu_{S_n}^D(z, t) - \bar \mu_{S_n}^D(z, t')$.

The next proposition characterizes these estimands in terms of the potential outcome and the potential treatment status under the potential misspecification of IEM.
The proof is trivial from the proof of Proposition \ref{prop:ITT} and thus omitted.

\begin{proposition} \label{prop:ASE_ITT}
	Under Assumption \ref{as:independence},
	\begin{align*}
		\mathrm{ASEY}_{S_n}(z, t, t')
		& = \frac{1}{|S_n|} \sum_{i \in S_n} \sum_{\bm{z}_{-i} \in \{0, 1\}^{n-1}} y_i(z, \bm{z}_{-i}) \{ \pi_i(\bm{z}_{-i}, t) - \pi_i(\bm{z}_{-i}, t') \}, \\
		\mathrm{ASED}_{S_n}(z, t, t')
		& = \frac{1}{|S_n|} \sum_{i \in S_n} \sum_{\bm{z}_{-i} \in \{0, 1\}^{n-1}} D_i(z, \bm{z}_{-i}) \{ \pi_i(\bm{z}_{-i}, t) - \pi_i(\bm{z}_{-i}, t') \}.
	\end{align*}
\end{proposition}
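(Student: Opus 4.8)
The plan is to obtain Proposition \ref{prop:ASE_ITT} as an immediate consequence of the computation already carried out in the proof of Proposition \ref{prop:ITT}, simply differencing it across two exposure values. First I would recall the key identity \eqref{eq:muY}: starting from the observed-outcome decomposition $Y_i = \sum_{z_i} \sum_{\bm{z}_{-i} \in \{0,1\}^{n-1}} \bm{1}\{ Z_i = z_i, \bm{Z}_{-i} = \bm{z}_{-i} \} y_i(z_i, \bm{z}_{-i})$ and using that $T_i = T(i, \bm{Z}_{-i}, \bm{A})$ is a deterministic function of $\bm{Z}_{-i}$ with $\bm{A}$ held fixed, Assumption \ref{as:independence} makes $Z_i$ independent of the pair $(\bm{Z}_{-i}, T_i)$, so that
\[
	\Pr[\bm{Z}_{-i} = \bm{z}_{-i} \mid Z_i = z, T_i = t] = \Pr[\bm{Z}_{-i} = \bm{z}_{-i} \mid T_i = t] = \pi_i(\bm{z}_{-i}, t),
\]
whence $\mu_i^Y(z, t) = \sum_{\bm{z}_{-i} \in \{0,1\}^{n-1}} y_i(z, \bm{z}_{-i}) \pi_i(\bm{z}_{-i}, t)$. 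An identical argument applied to $D_i = \sum_{z_i} \sum_{\bm{z}_{-i}} \bm{1}\{ Z_i = z_i, \bm{Z}_{-i} = \bm{z}_{-i} \} D_i(z_i, \bm{z}_{-i})$ yields $\mu_i^D(z, t) = \sum_{\bm{z}_{-i}} D_i(z, \bm{z}_{-i}) \pi_i(\bm{z}_{-i}, t)$.

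Second, I would unfold the definition $\mathrm{ASEY}_{S_n}(z, t, t') = \bar \mu_{S_n}^Y(z, t) - \bar \mu_{S_n}^Y(z, t') = |S_n|^{-1} \sum_{i \in S_n} [ \mu_i^Y(z, t) - \mu_i^Y(z, t') ]$, substitute the two instances of the identity above, and merge the inner sums over $\bm{z}_{-i}$ (the factor $y_i(z, \bm{z}_{-i})$ does not depend on $t$ or $t'$):
\[
	\mu_i^Y(z, t) - \mu_i^Y(z, t') = \sum_{\bm{z}_{-i} \in \{0,1\}^{n-1}} y_i(z, \bm{z}_{-i}) \{ \pi_i(\bm{z}_{-i}, t) - \pi_i(\bm{z}_{-i}, t') \}.
\]
Averaging over $i \in S_n$ delivers the claimed expression for $\mathrm{ASEY}_{S_n}(z, t, t')$; the same three lines with $D_i(z, \bm{z}_{-i})$ in place of $y_i(z, \bm{z}_{-i})$ give the formula for $\mathrm{ASED}_{S_n}(z, t, t')$, which is why the paper states the proof is trivial from that of Proposition \ref{prop:ITT}.

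There is no substantive obstacle here, since the result is a direct restatement of \eqref{eq:muY}. The only point deserving (minor) care is the conditioning step displayed above: one must note explicitly that Assumption \ref{as:independence}, together with $T_i$ being a function of $\bm{Z}_{-i}$ alone, renders conditioning on $Z_i = z$ irrelevant once one conditions on $T_i = t$. This is precisely where the independence hypothesis is used, and no stronger condition (relevance, monotonicity, overlap, or restricted interference) enters, consistent with the fact that Proposition \ref{prop:ASE_ITT} is purely a descriptive identification statement about the ITT estimands.
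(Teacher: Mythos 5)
Your proposal is correct and matches the paper's intended argument: the paper omits the proof precisely because it follows trivially from the identity \eqref{eq:muY} established in the proof of Proposition \ref{prop:ITT}, and your differencing of $\mu_i^Y(z,t)-\mu_i^Y(z,t')$ (and likewise for $\mu_i^D$) under Assumption \ref{as:independence} is exactly that route. Your remark on where independence enters (conditioning on $Z_i=z$ becoming irrelevant given $T_i=t$ since $T_i$ is a function of $\bm{Z}_{-i}$ alone) is the right point of care and is consistent with the paper.
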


Differently from the ADE estimands considered in Proposition \ref{prop:ITT}, the characterizations of the ASE estimands are somewhat difficult to derive causal interpretation.
This is mainly due to the misspecification of IEM. 
To be more precise, let the pre-specified IEM $T$ be correct in the sense of \eqref{eq:correctIEM}.
Then, we can write the potential outcome and potential treatment status given $Z_i = z$ and $T_i = t$ as $\tilde y_i(z, t)$ and $\tilde d_i(z, t)$, respectively.
Noting that $Y_i = \tilde y_i(Z_i, T_i)$ and $D_i = \tilde d_i(Z_i, T_i)$, it is straightforward to see that the ASE estimands defined by this correct $T$ have the following causal interpretation:
\begin{align*}
	\mathrm{ASEY}_{S_n}(z, t, t')
	= \frac{1}{|S_n|} \sum_{i \in S_n} \{ \tilde y_i(z, t) - \tilde y_i(z, t') \},
	\qquad
	\mathrm{ASED}_{S_n}(z, t, t')
	= \frac{1}{|S_n|} \sum_{i \in S_n} \{ \tilde d_i(z, t) - \tilde d_i(z, t') \}.
\end{align*}

Even without the knowledge of true IEM, the ASE estimands give us some beneficial information regarding the spillover effects; that is, if the ASE for some IEM is non-zero, it indicates the presence of some form of interference (see Remark \ref{rem:spillover}).

\subsubsection{Local average spillover effect}

Given the results in Proposition \ref{prop:ASE_ITT}, under the misspecification of IEM, it would be difficult to identify a meaningful ASE parameter by a Wald-type estimand.
For this reason, we assume that a correct IEM is known to us.

\begin{assumption}[Correct specification] \label{as:correctIEM}
	The IEM $T$ is correct in the sense of \eqref{eq:correctIEM}.
\end{assumption}


For each $z \in \{ 0, 1 \}$ and $t, t' \in \mathcal{T}$ such that $t \neq t'$, let $\tilde{\mathcal{C}}_i(z, t, t') \coloneqq \bm{1}\{ \tilde d_i(z, t) = 1, \tilde d_i(z, t') = 0 \}$ indicate a complier whose treatment status is switched by changing the instrumental exposure's value.
The local average spillover effect (LASE) is defined by
\begin{align*}
	\mathrm{LASE}_{S_n}(z, t, t') 
	\coloneqq \sum_{i \in S_n} \{ \tilde y_i(z, t) - \tilde y_i(z, t') \} \frac{ \tilde{\mathcal{C}}_i(z, t, t') }{ \sum_{i \in S_n} \tilde{\mathcal{C}}_i(z, t, t') }.
\end{align*}
The LASE measures the average causal effect of changing the instrumental exposure on the outcome over the compliers.
By the definition of $\tilde{\mathcal{C}}_i(z, t, t')$, we can see that the LASE captures both the average effect of changing unit's own treatment receipt and the average effect caused by changing the instrumental exposure from $t'$ to $t$.
Note that the definition of LASE assumes the existence of spillovers between the instrumental exposure and the treatment receipt.

\begin{assumption}[Relevance 3] \label{as:relevance3}
	$|S_n|^{-1} \sum_{i \in S_n} \tilde{\mathcal{C}}_i(z, t, t') \ge c$ for a constant $c > 0$.
\end{assumption}

\begin{assumption}[Monotonicity 3] \label{as:monotone3}
	$\tilde d_i(z, t) \ge \tilde d_i(z, t')$ for all $i \in S_n$.
\end{assumption}

\begin{assumption}[Restricted interference 3] \label{as:restrict3}
	For all $i \in S_n$, $\tilde y_i(z, t) = \tilde y_i(z, t')$ holds whenever $\tilde d_i(z, t) = \tilde d_i(z, t')$.
\end{assumption}

Without Assumption \ref{as:relevance3}, the LASE is no longer well-defined.
This assumption may be violated when the treatment status of the focal unit does not depend on the IVs of the others.
Assumption \ref{as:monotone3} restricts the direction of the treatment response to the instrumental exposure.
For example, the assumption can be satisfied with the treatment choice equation in Example \ref{example:latent1}.
Assumption \ref{as:restrict3} rules out the effect of the instrumental exposure on the potential outcome for the units whose treatment statuses do not depend on their instrumental exposures.

The following theorem shows that the LASE is identifiable from a Wald-type estimand.
The proof is given in the end of this section.

\begin{theorem}\label{thm:LASE}
	Under Assumptions \ref{as:correctIEM} -- \ref{as:restrict3}, $\mathrm{LASE}_{S_n}(z, t, t') = \mathrm{ASEY}_{S_n}(z, t, t') / \mathrm{ASED}_{S_n}(z, t, t')$.
\end{theorem}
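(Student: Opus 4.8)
The plan is to mirror the argument used for Theorem \ref{thm:LADE}, since the LASE is structurally the ``conditional-on-exposure'' analogue of the LADE with the roles of $z_i$ and $t$ swapped. First I would observe that, under Assumption \ref{as:correctIEM}, the observed variables satisfy $Y_i = \tilde y_i(Z_i, T_i)$ and $D_i = \tilde d_i(Z_i, T_i)$, so that for each fixed $z \in \{0,1\}$ the quantities $\mu_i^Y(z,t) = \bbE[Y_i \mid Z_i = z, T_i = t] = \tilde y_i(z,t)$ and $\mu_i^D(z,t) = \tilde d_i(z,t)$ are deterministic numbers. Hence $\bar\mu_{S_n}^Y(z,t) = |S_n|^{-1}\sum_{i \in S_n} \tilde y_i(z,t)$ and similarly for $D$, and therefore $\mathrm{ASEY}_{S_n}(z,t,t') = |S_n|^{-1}\sum_{i\in S_n}\{\tilde y_i(z,t) - \tilde y_i(z,t')\}$ and $\mathrm{ASED}_{S_n}(z,t,t') = |S_n|^{-1}\sum_{i\in S_n}\{\tilde d_i(z,t) - \tilde d_i(z,t')\}$, i.e.\ the causal interpretation already recorded in the excerpt (the display following \eqref{eq:correctIEM}, and the one in \ref{page:R1-6-1}). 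This step just unwinds definitions; no probabilistic input beyond correct specification is needed here because we have conditioned on $Z_i = z$.

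Next I would decompose the denominator. For each $i$, write $\tilde d_i(z,t) - \tilde d_i(z,t') = \bm{1}\{\tilde d_i(z,t)\neq\tilde d_i(z,t')\}\{\tilde d_i(z,t)-\tilde d_i(z,t')\} + \bm{1}\{\tilde d_i(z,t)=\tilde d_i(z,t')\}\{\tilde d_i(z,t)-\tilde d_i(z,t')\}$. The second bracket is identically zero. Under Assumption \ref{as:monotone3} ($\tilde d_i(z,t)\ge\tilde d_i(z,t')$), the term $\bm{1}\{\tilde d_i(z,t)\neq\tilde d_i(z,t')\}\{\tilde d_i(z,t)-\tilde d_i(z,t')\}$ equals $\bm{1}\{\tilde d_i(z,t)=1,\tilde d_i(z,t')=0\} = \tilde{\mathcal{C}}_i(z,t,t')$, so $\mathrm{ASED}_{S_n}(z,t,t') = |S_n|^{-1}\sum_{i\in S_n}\tilde{\mathcal{C}}_i(z,t,t')$. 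Assumption \ref{as:relevance3} guarantees this is bounded below by $c>0$, so the ratio is well defined.

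For the numerator I would apply the same partition to $\tilde y_i(z,t) - \tilde y_i(z,t')$: split according to whether $\tilde d_i(z,t)=\tilde d_i(z,t')$ or not. On the event $\tilde d_i(z,t)=\tilde d_i(z,t')$, Assumption \ref{as:restrict3} forces $\tilde y_i(z,t)=\tilde y_i(z,t')$, so that contribution vanishes; on the complementary event, monotonicity (Assumption \ref{as:monotone3}) again identifies the indicator with $\tilde{\mathcal{C}}_i(z,t,t')$. Hence $\mathrm{ASEY}_{S_n}(z,t,t') = |S_n|^{-1}\sum_{i\in S_n}\{\tilde y_i(z,t)-\tilde y_i(z,t')\}\tilde{\mathcal{C}}_i(z,t,t')$. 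Dividing by $\mathrm{ASED}_{S_n}(z,t,t') = |S_n|^{-1}\sum_{i\in S_n}\tilde{\mathcal{C}}_i(z,t,t')$ cancels the $|S_n|^{-1}$ factors and yields exactly $\mathrm{LASE}_{S_n}(z,t,t')$ as defined, completing the proof.

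I do not anticipate a genuine obstacle here: the result is essentially a bookkeeping exercise once Assumption \ref{as:correctIEM} reduces everything to deterministic $\tilde y_i, \tilde d_i$. The only point requiring minor care is making sure the complier-type decomposition is carried out in a way that is visibly symmetric to the proof of Theorem \ref{thm:LADE} — in particular that Assumptions \ref{as:monotone3} and \ref{as:restrict3} are invoked in exactly the roles that Assumptions \ref{as:monotone1} and \ref{as:restrict1} play there — and confirming that Assumption \ref{as:independence} is \emph{not} needed (unlike Theorem \ref{thm:LADE}), since we never need to rewrite a conditional probability over $\bm{Z}_{-i}$; conditioning on $(Z_i, T_i)$ together with correct specification already does all the work.
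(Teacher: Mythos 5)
Your proposal is correct and follows essentially the same argument as the paper's proof: correct specification reduces $\mu_i^Y(z,t)$ and $\mu_i^D(z,t)$ to $\tilde y_i(z,t)$ and $\tilde d_i(z,t)$, monotonicity identifies $\tilde d_i(z,t)-\tilde d_i(z,t')$ with $\tilde{\mathcal{C}}_i(z,t,t')$, restricted interference kills the non-complier contribution in the numerator, and relevance makes the ratio well defined. Your observation that Assumption \ref{as:independence} is not needed also matches the paper, whose proof never invokes it.
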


\subsection{Estimation and asymptotic theory}

We estimate the ASE estimands by $\hat{\mathrm{ASEY}}_{S_n}(z, t, t') \coloneqq \hat \mu_{S_n}^Y(z, t) - \hat \mu_{S_n}^Y(z, t')$ and $\hat{\mathrm{ASED}}_{S_n}(z, t, t') \coloneqq \hat \mu_{S_n}^D(z, t) - \hat \mu_{S_n}^D(z, t')$, where $\hat \mu_{S_n}^Y(z, t)$ and $\hat \mu_{S_n}^D(z, t)$ are defined in Section \ref{subsec:estimator}.
Then, the LASE estimator is given by
\begin{align*}
	\hat{\mathrm{LASE}}_{S_n}(z, t, t')
	\coloneqq \frac{ \hat{\mathrm{ASEY}}_{S_n}(z, t, t') }{ \hat{\mathrm{ASED}}_{S_n}(z, t, t') }.
\end{align*}

In the same manner as in Section \ref{subsec:asymptotics}, we can prove the consistency and asymptotic normality for the ASE estimators.
Specifically, for each $i \in S_n$, define  
\begin{align*}
	V_i^{\mathrm{ASEY}}
	& \coloneqq W_i^Y(z, t, t') - \frac{ \bar \mu_{S_n}^Y(z, t) }{ p_{S_n}(z, t) } W_i(z, t) + \frac{ \bar \mu_{S_n}^Y(z, t') }{ p_{S_n}(z, t') } W_i(z, t'), \\
	V_i^{\mathrm{ASED}}
	& \coloneqq W_i^D(z, t, t') - \frac{ \bar \mu_{S_n}^D(z, t) }{ p_{S_n}(z, t) } W_i(z, t) + \frac{ \bar \mu_{S_n}^D(z, t') }{ p_{S_n}(z, t') } W_i(z, t'), \\
	V_i^{\mathrm{LASE}}
	& \coloneqq \frac{1}{\mathrm{ASED}_{S_n}(z, t, t')} V_i^{\mathrm{ASEY}}  - \frac{\mathrm{ASEY}_{S_n}(z, t, t')}{[\mathrm{ASED}_{S_n}(z, t, t')]^2} V_i^{\mathrm{ASED}},
\end{align*}
where
\begin{align*}
	W_i^Y(z, t, t')
	& \coloneqq Y_i \left[ \frac{ \bm{1}\{ Z_i = z, T_i = t \} }{ p_{S_n}(z ,t) } - \frac{ \bm{1}\{ Z_i = z, T_i = t' \} }{ p_{S_n}(z ,t') } \right], \\
	W_i^D(z, t, t')
	& \coloneqq D_i \left[ \frac{ \bm{1}\{ Z_i = z, T_i = t \} }{ p_{S_n}(z ,t) } - \frac{ \bm{1}\{ Z_i = z, T_i = t' \} }{ p_{S_n}(z ,t') } \right], \\
	W_i(z, t) 
	& \coloneqq \bm{1}\{ Z_i = z, T_i = t \}.
\end{align*}
Let $(\sigma_{S_n}^\mathrm{ASEY})^2 \coloneqq \Var[ |S_n|^{-1/2} \sum_{i \in S_n} V_i^{\mathrm{ASEY}} ]$, and $( \sigma_{S_n}^\mathrm{ASED} )^2$ and $( \sigma_{S_n}^\mathrm{LASE} )^2$ are defined analogously.
Then, the asymptotic normality results for the ASE estimators can be established when the variables in Theorem \ref{thm:normal1} are replaced by the corresponding variables defined here.

\begin{remark}[Testing the presence of spillovers]\label{rem:spillover}
	If there do not exist any spillover effects actually such that $y_i(z_i, \bm{z}_{-i}) = y_i(z_i)$ and $D_i(z_i, \bm{z}_{-i}) = D_i(z_i)$, clearly from Proposition \ref{prop:ASE_ITT}, both $\mathrm{ASEY}_{S_n}(z,t,t')$ and $\mathrm{ASED}_{S_n}(z,t,t')$ should be zero for any IEM.
	In other words, if $\hat{\mathrm{ASEY}}_{S_n}(z,t,t')$ or $\hat{\mathrm{ASED}}_{S_n}(z,t,t')$ is significantly different from zero at least for some arbitrary IEM, we can deduce the existence of some spillover effects even without the knowledge of correct IEM.
\end{remark}

\subsection{Proof of Theorem \ref{thm:LASE}}

Under Assumptions \ref{as:correctIEM} and \ref{as:monotone3}, observe that $\mu_i^D(z, t) - \mu_i^D(z, t') = \tilde d_i(z, t) - \tilde d_i(z, t') = \tilde{\mathcal{C}}_i(z, t, t')$.
Further, by Assumptions \ref{as:correctIEM}, \ref{as:monotone3}, and \ref{as:restrict3}, we have
\begin{align*}
	\mu_i^Y(z, t) - \mu_i^Y(z, t')
	& = \tilde y_i(z, t) - \tilde y_i(z, t') \\
	& = [\tilde y_i(z, t) - \tilde y_i(z, t')] \mathbf{1}\{ \tilde d_i(z, t) \neq \tilde d_i(z, t')  \} \\
	& \quad + [\tilde y_i(z, t) - \tilde y_i(z, t')] \mathbf{1}\{ \tilde d_i(z, t) = \tilde d_i(z, t') \} \\
	& = [\tilde y_i(z, t) - \tilde y_i(z, t')] \tilde{\mathcal{C}}_i(z, t, t').
\end{align*}
In conjunction with Assumption \ref{as:relevance3}, we obtain the desired result.
\qed


\section{Additional Results for Section \ref{sec:identification}} \label{sec:identification2}

\subsection{Additional discussion on compliers} \label{subsec:complier}

We compare the definitions of compliers in this study and \cite{vazquez2023causal}.
For ease of exposition, we focus on the simple setting wherein spillovers occur only within pairs of two units.
We label two units in each pair as units 1 and 2, and we consider the type of unit 1 (the case of unit 2 is analogous).
For each $z_1, z_2 \in \{ 0, 1 \}$, denote the potential treatment status of unit 1 given $Z_1 = z_1$ and $Z_2 = z_2$ as $D_1(z_1, z_2)$.
In this case, the type of unit 1 in our framework is determined from
\begin{align}\label{eq:ourtype}
	\begin{split}
		\mathcal{AT}_1(z_2) 
		& = \bm{1}\{ D_1(1, z_2) = 1, D_1(0, z_2) = 1 \},\\
		\mathcal{C}_1(z_2) 
		& = \bm{1}\{ D_1(1, z_2) = 1, D_1(0, z_2) = 0 \},\\
		\mathcal{NT}_1(z_2) 
		& = \bm{1}\{ D_1(1, z_2) = 0, D_1(0, z_2) = 0 \}.
	\end{split}
\end{align}
By contrast, \cite{vazquez2023causal} considers the following types (cf. Table 1 of the paper):
\begin{align} \label{eq:vtype}
	\begin{split}
		\mathtt{AT}_1^{\mathtt{V}}
		& = \bm{1}\{ D_1(1, 1) = 1, D_1(1, 0) = 1, D_1(0, 1) = 1, D_1(0, 0) = 1 \},\\
		\mathtt{SC}_1^{\mathtt{V}}
		& = \bm{1}\{ D_1(1, 1) = 1, D_1(1, 0) = 1, D_1(0, 1) = 1, D_1(0, 0) = 0 \},\\
		\mathtt{C}_1^{\mathtt{V}}
		& = \bm{1}\{ D_1(1, 1) = 1, D_1(1, 0) = 1, D_1(0, 1) = 0, D_1(0, 0) = 0 \},\\
		\mathtt{GC}_1^{\mathtt{V}}
		& = \bm{1}\{ D_1(1, 1) = 1, D_1(1, 0) = 0, D_1(0, 1) = 0, D_1(0, 0) = 0 \},\\
		\mathtt{NT}_1^{\mathtt{V}}
		& = \bm{1}\{ D_1(1, 1) = 0, D_1(1, 0) = 0, D_1(0, 1) = 0, D_1(0, 0) = 0 \},
	\end{split}
\end{align}
where $\mathtt{AT}_1^{\mathtt{V}}$ stands for an always taker, $\mathtt{SC}_1^{\mathtt{V}}$ a social-interaction complier, $\mathtt{C}_1^{\mathtt{V}}$ a complier, $\mathtt{GC}_1^{\mathtt{V}}$ a group complier, and $\mathtt{NT}_1^{\mathtt{V}}$ a never taker.
Here, we add the superscript $\mathtt{V}$ to highlight that these variables originate from \cite{vazquez2023causal}.
Then, it is straightforward to observe the following relationships between \eqref{eq:ourtype} and \eqref{eq:vtype}:
\begin{align*}
	\mathtt{AT}_1^\mathtt{V} & = \mathcal{AT}_1(1) \times \mathcal{AT}_1(0), \\
	\mathtt{SC}_1^\mathtt{V} & = \mathcal{AT}_1(1) \times \mathcal{C}_1(0), \\
	\mathtt{C}_1^\mathtt{V}  & = \mathcal{C}_1(1) \times \mathcal{C}_1(0), \\
	\mathtt{GC}_1^\mathtt{V} & = \mathcal{C}_1(1) \times \mathcal{NT}_1(0), \\
	\mathtt{NT}_1^\mathtt{V} & = \mathcal{NT}_1(1) \times \mathcal{NT}_1(0).
\end{align*}
Furthermore, under the monotonicity condition of \cite{vazquez2023causal} (cf. Assumption 3 of the paper), we observe that
\begin{align*}
	\mathcal{C}_1(1) 
	& = \bm{1}\{ D_1(1, 1) = 1, D_1(0, 1) = 0 \} \\
	& = \bm{1}\{ D_1(1, 1) = 1, D_1(1, 0) = 1, D_1(0, 1) = 0, D_1(0, 0) = 0 \} \\
	& \quad + \bm{1}\{ D_1(1, 1) = 1, D_1(1, 0) = 0, D_1(0, 1) = 0, D_1(0, 0) = 0 \} \\
	& = \mathtt{C}_1^\mathtt{V} + \mathtt{GC}_1^\mathtt{V}
\end{align*}
and that
\begin{align*}
	\mathcal{C}_1(0) 
	& = \bm{1}\{ D_1(1, 0) = 1, D_1(0, 0) = 0 \} \\
	& = \bm{1}\{ D_1(1, 1) = 1, D_1(1, 0) = 1, D_1(0, 1) = 1, D_1(0, 0) = 0 \} \\
	& \quad + \bm{1}\{ D_1(1, 1) = 1, D_1(1, 0) = 1, D_1(0, 1) = 0, D_1(0, 0) = 0 \} \\
	& = \mathtt{C}_1^\mathtt{V} + \mathtt{SC}_1^\mathtt{V}.
\end{align*}
Hence, the compliers in our framework comprise the compliers and group/social-interaction compliers in \cite{vazquez2023causal}.

\subsection{Testable Implications of \eqref{eq:sufficient1} -- \eqref{eq:sufficient3}} \label{subsec:testable}

\paragraph*{Testable implication of \eqref{eq:sufficient1}}

Under condition \eqref{eq:sufficient1}, Proposition \ref{prop:ITT} implies that $\mathrm{ASED}_{S_n}(z, t, t') = 0$ for all $z \in \{ 0, 1 \}$ and $t, t' \in \mathcal{T}$.
Thus, if $\mathrm{ASED}_{S_n}(z, t, t') \neq 0$ is observed, it indicates a violation of condition \eqref{eq:sufficient1}.

\paragraph*{Testable implication of \eqref{eq:sufficient2}}

Denote $m_i^{(1)}(y, z, t) \coloneqq \bbE[ \bm{1}\{ Y_i \le y \} D_i \mid Z_i = z, T_i = t ]$ for $y \in \mathbb{R}$, $z \in \{ 0, 1 \}$, and $t \in \mathcal{T}$.
Condition \eqref{eq:sufficient2} implies that $m_i^{(1)}(y, z, t) = \sum_{\bm{z}_{-i} \in \{ 0, 1\}^{n-1}} \bm{1}\{ Y_i(1) \le y \} D_i(z, \bm{z}_{-i}) \pi_i(\bm{z}_{-i}, t)$ by Assumption \ref{as:independence}.
As a result, we have $m_i^{(1)}(y, 1, t) - m_i^{(1)}(y, 0, t) \ge 0$ under Assumption \ref{as:monotone1}.
Note that this inequality might not hold without condition \eqref{eq:sufficient2}.
Similarly, letting $m_i^{(0)}(y, z, t) \coloneqq \bbE[ \bm{1}\{ Y_i \le y \} (1 - D_i) \mid Z_i = z, T_i = t ]$, we can show that $m_i^{(0)}(y, 1, t) - m_i^{(0)}(y, 0, t) \le 0$.
Note that although these inequalities cannot be directly tested for each $i$, we can check whether they hold or not on average for some sub-samples.  

\paragraph*{Testable implication of \eqref{eq:sufficient3}}

Let $g_i: \{ 0, 1 \}^{n-1} \to \mathbb{R}_+$ be a known non-negative function of $\bm{D}_{-i}$.
Then, we can show that
\begin{align*}
	\bbE[ D_i g_i(\bm{D}_{-i}) \mid Z_i = 0, T_i = t ]
	& = \sum_{\bm{z}_{-i} \in \{0, 1\}^{n-1}} \mathcal{AT}_i(\bm{z}_{-i}) g_i(\bm{D}_{-i}(Z_i = 0, \bm{Z}_{-i} = \bm{z}_{-i})) \pi_i(\bm{z}_{-i}, t)
\end{align*}
by Assumptions \ref{as:independence} and \ref{as:monotone1}.
Similarly, 
\begin{align*}
	\bbE[ D_i g_i(\bm{D}_{-i}) \mid Z_i = 1, T_i = t ]
	& = \sum_{\bm{z}_{-i} \in \{0, 1\}^{n-1}} \{ \mathcal{C}_i(\bm{z}_{-i}) + \mathcal{AT}_i(\bm{z}_{-i}) \} g_i(\bm{D}_{-i}(Z_i = 1, \bm{Z}_{-i} = \bm{z}_{-i})) \pi_i(\bm{z}_{-i}, t).
\end{align*}
Thus, condition \eqref{eq:sufficient3} leads to
\begin{align*}
	&\bbE[ D_i g_i(\bm{D}_{-i}) \mid Z_i = 1, T_i = t ] - \bbE[ D_i g_i(\bm{D}_{-i}) \mid Z_i = 0, T_i = t ] \\
	& \qquad \qquad =  \sum_{\bm{z}_{-i} \in \{0, 1\}^{n-1}} \mathcal{C}_i(\bm{z}_{-i}) g_i(\bm{D}_{-i}(Z_i = 1, \bm{Z}_{-i} = \bm{z}_{-i})) \pi_i(\bm{z}_{-i}, t) \ge 0.
\end{align*}
A similar argument give that $\bbE[ (1 - D_i) g_i(\bm{D}_{-i}) \mid Z_i = 0, T_i = t ] - \bbE[ ( 1- D_i) g_i(\bm{D}_{-i}) \mid Z_i = 1, T_i = t ]  \ge 0$.
Then, these inequalities provide testable implications of \eqref{eq:sufficient3}.

\subsection{Additional discussion on local average indirect effect} \label{sec:AIE2}

We investigate a causal interpretation of the Wald-type estimand $\mathrm{AIEY}_{S_n} / \mathrm{AIED}_{S_n}$, which differs from the one in Theorem \ref{thm:LAIE1}.

For each given $\bm{z}_{-i} \in \{ 0, 1 \}^{n-1}$, let $\mathcal{C}_j(\bm{Z}_{-i} = \bm{z}_{-i}) \coloneqq \bm{1}\{ D_j(Z_i = 1, \bm{Z}_{-i} = \bm{z}_{-i}) = 1, D_j(Z_i = 0, \bm{Z}_{-i} = \bm{z}_{-i}) = 0 \}$ denote $j$'s compliance status with respect to $i$'s IV.
We write its realization as $\mathcal{C}_j(\bm{Z}_{-i})$.

\begin{assumption}[Relevance 4] \label{as:relevance4}
	$|S_n|^{-1} \sum_{i \in S_n} \sum_{j \in \mathcal{E}_i} \bbE[ \mathcal{C}_j(\bm{Z}_{-i}) ] \ge c$ for a constant $c > 0$.
\end{assumption}

\begin{assumption}[Monotonicity 4] \label{as:monotone4}
	$D_j(Z_i = 1, \bm{Z}_{-i} = \bm{z}_{-i}) \ge D_j(Z_i = 0, \bm{Z}_{-i} = \bm{z}_{-i})$ for all $i \in S_n$, $j \in \mathcal{E}_i$, and $\bm{z}_{-i} \in \{ 0, 1 \}^{n-1}$ such that $\pi_i(\bm{z}_{-i}) > 0$.
\end{assumption}

\begin{assumption}[Restricted interference 4] \label{as:restrict4}
	For all $i \in S_n$, $j \in \mathcal{E}_i$, and $\bm{z}_{-i} \in \{ 0, 1 \}^{n-1}$ such that $\pi_i(\bm{z}_{-i}) > 0$, $y_j(Z_i = 1, \bm{Z}_{-i} = \bm{z}_{-i}) = y_j(Z_i = 0, \bm{Z}_{-i} = \bm{z}_{-i})$ holds whenever $D_j(Z_i = 1, \bm{Z}_{-i} = \bm{z}_{-i}) = D_j(Z_i = 0, \bm{Z}_{-i} = \bm{z}_{-i})$.
\end{assumption}

Assumption \ref{as:relevance4} requires the existence of compliers in this context.
Under Assumption \ref{as:monotone4}, each unit's IV can only affect the treatment status of other units in the non-negative direction.
Assumption \ref{as:restrict2} restricts the interference structure in a different way from Assumption \ref{as:restrict1}.
In the same manner as in Section \ref{subsec:LADE}, we can see that a sufficient condition for Assumption \ref{as:restrict4} is no treatment spillover effect on the outcome, that is, \eqref{eq:sufficient2}.
Meanwhile, \eqref{eq:sufficient1} and \eqref{eq:sufficient3} do not fulfill Assumption \ref{as:restrict4}.
In this sense, Assumption \ref{as:restrict4} is more restrictive than Assumptions \ref{as:restrict1} and \ref{as:restrict2}.

The next theorem shows that the Wald-type estimand $\mathrm{AIEY}_{S_n} / \mathrm{AIED}_{S_n}$ has a causal characterization.
The proof is given at the end of this section.

\begin{theorem} \label{thm:LAIE2}
	Under Assumptions \ref{as:independence} and \ref{as:relevance4} -- \ref{as:restrict4}, $\mathrm{AIEY}_{S_n} / \mathrm{AIED}_{S_n}$ equals to
	\begin{align*}
		\sum_{i \in S_n} \sum_{\bm{z}_{-i} \in \{ 0, 1\}^{n-1}} \sum_{j \in \mathcal{E}_i} \{ y_j(Z_i = 1, \bm{Z}_{-i} = \bm{z}_{-i}) - y_j(Z_i = 0, \bm{Z}_{-i} = \bm{z}_{-i}) \} \frac{ \mathcal{C}_j(\bm{Z}_{-i} = \bm{z}_{-i}) \pi_i(\bm{z}_{-i}) }{ \sum_{i \in S_n} \sum_{j \in \mathcal{E}_i} \bbE[ \mathcal{C}_j(\bm{Z}_{-i}) ] }.
	\end{align*}
\end{theorem}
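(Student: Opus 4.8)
The plan is to follow the proof of Theorem \ref{thm:LAIE1} almost verbatim, with unit $i$'s own compliance status replaced throughout by unit $j$'s compliance status $\mathcal{C}_j(\bm{Z}_{-i})$ with respect to $i$'s IV. First I would establish the denominator identity. By Proposition \ref{prop:AIE_ITT}, which requires only Assumption \ref{as:independence},
\begin{align*}
	\mathrm{AIED}_{S_n}
	= \frac{1}{|S_n|} \sum_{i \in S_n} \sum_{\bm{z}_{-i} \in \{0,1\}^{n-1}} \sum_{j \in \mathcal{E}_i} \{ D_j(Z_i = 1, \bm{Z}_{-i} = \bm{z}_{-i}) - D_j(Z_i = 0, \bm{Z}_{-i} = \bm{z}_{-i}) \} \pi_i(\bm{z}_{-i}).
\end{align*}
I would split each summand according to whether $D_j(Z_i = 1, \bm{z}_{-i}) = D_j(Z_i = 0, \bm{z}_{-i})$ or not. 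On the set where they coincide the difference is zero; on the set where they differ, Assumption \ref{as:monotone4} forces $D_j(Z_i = 1, \bm{z}_{-i}) = 1$ and $D_j(Z_i = 0, \bm{z}_{-i}) = 0$, so the difference equals $\mathcal{C}_j(\bm{Z}_{-i} = \bm{z}_{-i})$. (Summands with $\pi_i(\bm{z}_{-i}) = 0$ contribute nothing, so it suffices that the assumptions hold on the support.) Hence $\mathrm{AIED}_{S_n} = |S_n|^{-1} \sum_{i \in S_n} \sum_{j \in \mathcal{E}_i} \bbE[\mathcal{C}_j(\bm{Z}_{-i})]$, using $\Pr[\bm{Z}_{-i} = \bm{z}_{-i}] = \pi_i(\bm{z}_{-i})$.

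Next I would treat the numerator in the same way. Applying Proposition \ref{prop:AIE_ITT} again,
\begin{align*}
	\mathrm{AIEY}_{S_n}
	= \frac{1}{|S_n|} \sum_{i \in S_n} \sum_{\bm{z}_{-i} \in \{0,1\}^{n-1}} \sum_{j \in \mathcal{E}_i} \{ y_j(Z_i = 1, \bm{Z}_{-i} = \bm{z}_{-i}) - y_j(Z_i = 0, \bm{Z}_{-i} = \bm{z}_{-i}) \} \pi_i(\bm{z}_{-i}),
\end{align*}
and I would split according to the same event. Where $D_j(Z_i = 1, \bm{z}_{-i}) = D_j(Z_i = 0, \bm{z}_{-i})$, Assumption \ref{as:restrict4} kills the outcome difference; on the complement, Assumption \ref{as:monotone4} again identifies the relevant indicator with $\mathcal{C}_j(\bm{Z}_{-i} = \bm{z}_{-i})$. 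This yields $\mathrm{AIEY}_{S_n} = |S_n|^{-1} \sum_{i \in S_n} \sum_{\bm{z}_{-i}} \sum_{j \in \mathcal{E}_i} \{ y_j(Z_i = 1, \bm{z}_{-i}) - y_j(Z_i = 0, \bm{z}_{-i}) \} \mathcal{C}_j(\bm{Z}_{-i} = \bm{z}_{-i}) \pi_i(\bm{z}_{-i})$. Dividing this by the denominator identity and invoking Assumption \ref{as:relevance4} to guarantee that the denominator is bounded away from zero then gives the claimed formula.

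The main obstacle is essentially just bookkeeping rather than anything substantive: one must keep track that it is precisely Assumption \ref{as:monotone4} that rules out ``defiers'' of $i$'s IV among the units in $\mathcal{E}_i$, and hence makes $D_j(Z_i = 1, \bm{z}_{-i}) - D_j(Z_i = 0, \bm{z}_{-i}) \in \{0, \mathcal{C}_j(\bm{Z}_{-i} = \bm{z}_{-i})\}$, and that the restriction to $\{\bm{z}_{-i}: \pi_i(\bm{z}_{-i}) > 0\}$ is carried consistently through the case split. All remaining computations are the same algebra as in the proof of Theorem \ref{thm:LAIE1}, so I would present them compactly by reference to that proof.
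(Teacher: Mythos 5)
Your proposal is correct and matches the paper's own argument: the paper likewise starts from the expectation representation of $\mathrm{AIED}_{S_n}$ and $\mathrm{AIEY}_{S_n}$ under Assumption \ref{as:independence}, splits each summand by whether $D_j(Z_i=1,\bm{Z}_{-i}) = D_j(Z_i=0,\bm{Z}_{-i})$, invokes Assumption \ref{as:monotone4} to identify the difference with $\mathcal{C}_j(\bm{Z}_{-i})$ and Assumption \ref{as:restrict4} to eliminate the noncomplier terms in the numerator, and then uses Assumption \ref{as:relevance4} to form the ratio. No gaps; your bookkeeping about the support $\{\bm{z}_{-i}: \pi_i(\bm{z}_{-i})>0\}$ is consistent with the paper's treatment.
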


For a complier $j$ such that $\mathcal{C}_j(\bm{Z}_{-i} = \bm{z}_{-i}) = 1$, observe that
\begin{align*}
	& y_j(Z_i = 1, \bm{Z}_{-i} = \bm{z}_{-i}) - y_j(Z_i = 0, \bm{Z}_{-i} = \bm{z}_{-i}) \\
	& = \underbrace{ Y_j(1, \bm{D}_{-j}(Z_i = 1, \bm{Z}_{-i} = \bm{z}_{-i})) - Y_j(0, \bm{D}_{-j}(Z_i = 1, \bm{Z}_{-i} = \bm{z}_{-i})) }_{\text{effect of changing $j$'s treatment}} \\
	& \quad + \underbrace{ Y_j(0, \bm{D}_{-j}(Z_i = 1, \bm{Z}_{-i} = \bm{z}_{-i})) - Y_j(0, \bm{D}_{-j}(Z_i = 0, \bm{Z}_{-i} = \bm{z}_{-i})) }_{\text{effect of changing others' treatments through $i$'s IV}}.
\end{align*}
As such, the causal parameter in Theorem \ref{thm:LAIE2} can be regarded as the sum of the average effect of changing $j$'s treatment through $i$'s IV on $j$'s outcome and that of changing others' treatments.

\subsection{Proof of Theorem \ref{thm:LAIE2}}

Using Assumption \ref{as:independence}, observe that
\begin{align*}
	\mathrm{AIED}_{S_n}
	& = \frac{1}{|S_n|} \sum_{i \in S_n} \sum_{j \in \mathcal{E}_i} \bbE[ D_j(Z_i = 1, \bm{Z}_{-i}) - D_j(Z_i = 0, \bm{Z}_{-i}) ] \\
	& = \frac{1}{|S_n|} \sum_{i \in S_n} \sum_{j \in \mathcal{E}_i} \bbE[ \bm{1}\{ D_j(Z_i = 1, \bm{Z}_{-i}) \neq D_j(Z_i = 0, \bm{Z}_{-i}) \} ] \\
	& = \frac{1}{|S_n|} \sum_{i \in S_n} \sum_{j \in \mathcal{E}_i} \bbE[ \mathcal{C}_j(\bm{Z}_{-i}) ] \\
	& = \frac{1}{|S_n|} \sum_{i \in S_n} \sum_{\bm{z}_{-i} \in \{ 0, 1\}^{n-1}} \sum_{j \in \mathcal{E}_i} \mathcal{C}_j(\bm{Z}_{-i} = \bm{z}_{-i}) \pi_i(\bm{z}_{-i}),
\end{align*}
where the third equality follows from Assumption \ref{as:monotone4}.
Similarly, we have
\begin{align*}
	& \mathrm{AIEY}_{S_n} \\
	& = \frac{1}{|S_n|} \sum_{i \in S_n} \sum_{j \in \mathcal{E}_i} \bbE[ y_j(Z_i = 1, \bm{Z}_{-i}) - y_j(Z_i = 0, \bm{Z}_{-i}) ] \\
	& = \frac{1}{|S_n|} \sum_{i \in S_n} \sum_{j \in \mathcal{E}_i} \bbE[ \{ y_j(Z_i = 1, \bm{Z}_{-i}) - y_j(Z_i = 0, \bm{Z}_{-i}) \} \cdot \bm{1}\{ D_j(Z_i = 1, \bm{Z}_{-i}) \neq D_j(Z_i = 0, \bm{Z}_{-i}) \} ] \\
	& \quad + \frac{1}{|S_n|} \sum_{i \in S_n} \sum_{j \in \mathcal{E}_i} \bbE[ \{ y_j(Z_i = 1, \bm{Z}_{-i}) - y_j(Z_i = 0, \bm{Z}_{-i}) \} \cdot \bm{1}\{ D_j(Z_i = 1, \bm{Z}_{-i}) = D_j(Z_i = 0, \bm{Z}_{-i}) \} ] \\
	& = \frac{1}{|S_n|} \sum_{i \in S_n} \sum_{j \in \mathcal{E}_i} \bbE[ \{ y_j(Z_i = 1, \bm{Z}_{-i}) - y_j(Z_i = 0, \bm{Z}_{-i}) \} \mathcal{C}_j(\bm{Z}_{-i}) ] \\
	& = \frac{1}{|S_n|} \sum_{i \in S_n} \sum_{\bm{z}_{-i} \in \{ 0, 1 \}^{n-1}} \sum_{j \in \mathcal{E}_i} \{ y_j(Z_i = 1, \bm{Z}_{-i} = \bm{z}_{-i}) - y_j(Z_i = 0, \bm{Z}_{-i} = \bm{z}_{-i}) \} \mathcal{C}_j(\bm{Z}_{-i} = \bm{z}_{-i}) \pi_i(\bm{z}_{-i}),
\end{align*}
where the third equality follows from Assumptions \ref{as:monotone4} and \ref{as:restrict4}.
In conjunction with Assumption \ref{as:relevance4}, we obtain the desired result.
\qed


\section{Monte Carlo Simulation}\label{sec:MC}

We investigate the finite sample properties of the proposed methods.
The next subsection considers a simple ring-shaped network, considering the ease of obtaining the closed-form expressions for the population parameters.
Subsequently, we conduct similar simulation analysis based on real friendship networks of students.

\subsection{Simulations on a ring-shaped network}

Suppose that the individuals are aligned as a circle in the order $i = 1,2, \ldots$, forming a ring-shaped network $\bm{A}$.
Define $G_{ij}^{(L)}$ as a dummy variable indicating whether $j$ is within $L$ path-length from $i$ on $\bm{A}$:
\begin{align*}
	G_{ij}^{(L)} =
	\begin{cases}
		1 & \text{if} \;\; \min\{ | i - j |, | i - j + n| , | i - j - n| \} \le L\\
		0 & \text{otherwise}
	\end{cases}
\end{align*}

In this Monte Carlo analysis, the following two DGPs are considered:
\begin{align*}
	\textbf{DGP 1:} \quad 
	Y_i
	& = \beta_{0i} + \beta_{1i} D_i, \\
	D_i
	& = \mathbf{1}\left\{\gamma_{0i} + \gamma_1 Z_i + \gamma_2 \sum_{j \neq i} G_{ij}^{(L)} Z_j \ge 0 \right\}, 
\end{align*}
where $\beta_{0i}$, $\beta_{1i}$, and $\gamma_{0i}$ are drawn from $\mathrm{Normal}(1, 1)$, $\mathrm{Uniform}(1, 2)$, and $\mathrm{Normal}(-1.5, 1)$, respectively, $(\gamma_1, \gamma_2) = (1, 1)$, and $Z_i$'s are IID $\mathrm{Bernoulli}(0.4)$.
\begin{align*}
	\textbf{DGP 2:} \quad 
	Y_i
	& = \beta_{0i} + \beta_{1i} D_i + \beta_{2i} \sum_{j \neq i} G_{ij}^{(L)} D_j, \\
	D_i
	& = \mathbf{1}\left\{\gamma_{0i} + \gamma_1 Z_i \ge 0 \right\}, 
\end{align*}
where $\beta_{0i}$, $\beta_{1i}$, $\beta_{2i}$, and $\gamma_{0i}$ are drawn from $\mathrm{Normal}(1, 1)$, $\mathrm{Uniform}(1, 2)$,  $\mathrm{Normal}(1, 1)$, and $\mathrm{Normal}(-1, 1)$, respectively, $\gamma_1 = 1$, and $Z_i$'s are IID $\mathrm{Bernoulli}(0.4)$.
The individual-specific coefficients are drawn only once, and they are fixed throughout the simulations.

For both DGPs, we consider two cases $L \in \{2, 3\}$.
For the specification of IEM, we consider two versions for each DGP in which the one is a correctly specified IEM and the other is misspecified: 

\begin{table*}[h]
	\begin{center}
		\begin{tabular}{c|ll}
			DGP & \multicolumn{1}{c}{Correct IEM} & \multicolumn{1}{c}{Inorrect IEM} \\ \hline
			\\[-1em]
			\textbf{1} & $T_i = \sum_{j \neq i} G_{ij}^{(L)}Z_j$ & $T_i = \sum_{j \neq i} G_{ij}^{(1)} Z_j$ \\
			\\[-1em]
			\textbf{2} & $T_i = \sum_{j \neq i} G_{ij}^{(L)}\bm{1} \left\{ \gamma_{0j} + \gamma_1 Z_j \ge 0 \right\}$ & $T_i = \sum_{j \neq i} G_{ij}^{(1)}\bm{1} \left\{ \gamma_{0j} + \gamma_1 Z_j \ge 0 \right\}$
		\end{tabular}
	\end{center}
\end{table*}

As the target parameters of interest, we consider $(\mathrm{ADEY}_{N_n}(2), \mathrm{LADE}_{N_n}(2), \mathrm{AIEY}_{N_n}, \mathrm{LAIE}_{N_n})$ in DGP 1 and $(\mathrm{ADEY}_{N_n}(1), \mathrm{LADE}_{N_n}(1), \mathrm{AIEY}_{N_n}, \mathrm{LAIE}_{N_n})$ in DGP 2.
The closed-form expression for $\mu_i^D(z, t)$, $\mu_i^Y(z, t)$, $\mu_{ji}^D(z)$, and $\mu_{ji}^Y(z)$ under correct and incorrect IEMs are summarized as below: letting $Z \sim \mathrm{Bernoulli}(0.4)$ and $\tilde Z^{(p)} \sim \mathrm{Binomial}(p, 0.4)$, for DGP 1,
\begin{small}\begin{align*}
		\textbf{Correct IEM} \qquad 
		\mu_i^D(z, t)
		& = \bm{1} \left\{ \gamma_{0i} + \gamma_1 z + \gamma_2 t \ge 0 \right\}, \\
		\mu_i^Y(z, t)
		& = \beta_{0i} + \beta_{1i} \bm{1} \left\{ \gamma_{0i} + \gamma_1 z + \gamma_2 t \ge 0 \right\}, \\
		\mu_{ji}^D(z)
		& = \Pr\left( \gamma_1 Z  + \gamma_2 \tilde Z^{(2L - 1)} \ge -(\gamma_{0j} + \gamma_2 z) \right) \quad \text{for $j \in \mathcal{E}_i$,}\\
		\mu_{ji}^Y(z)
		& = \beta_{0j} + \beta_{1j} \Pr\left( \gamma_1 Z  + \gamma_2 \tilde Z^{(2L - 1)} \ge -(\gamma_{0j} + \gamma_2 z) \right) \quad \text{for $j \in \mathcal{E}_i$.}
\end{align*}\end{small}
Similarly,
\begin{small}\begin{align*}
		\textbf{Incorrect IEM} \qquad
		\mu_i^D(z, t)
		& = \Pr\left( \gamma_2 \tilde Z^{(2L - 2)}  \ge -(\gamma_{0i} + \gamma_1 z + \gamma_2 t) \right), \\
		\mu_i^Y(z, t)
		& = \beta_{0i} + \beta_{1i} \Pr\left( \gamma_2 \tilde Z^{(2L - 2)}  \ge -(\gamma_{0i} + \gamma_1 z + \gamma_2 t) \right).
\end{align*}\end{small}
Note that the set $\mathcal{E}_i$ induced by the incorrect IEM is a subset of the true interference set and the neighbors are not distinguished in this DGP, hence the forms of $\mu_{ji}^D(z)$ and $\mu_{ji}^Y(z)$ do not change as long as $j \in \mathcal{E}_i$.
Similarly, for DGP 2, we have
\begin{small}\begin{align*}
		\textbf{Correct IEM} \qquad
		\mu_i^D(z, t)
		& = \bm{1} \left\{ \gamma_{0i} + \gamma_1 z \ge 0 \right\}, \\
		\mu_i^Y(z, t)
		& = \beta_{0i} + \beta_{1i} \bm{1} \left\{ \gamma_{0i} + \gamma_1 z \ge 0 \right\} + \beta_{2i} t, \\
		\mu_{ji}^D(z)
		& = \Pr \left( Z \ge -\gamma_{0j}/\gamma_1  \right) \quad \text{for $j \in \mathcal{E}_i$}, \\
		\mu_{ji}^Y(z)
		& = \beta_{0j} + \beta_{1j} \Pr \left( Z \ge -\gamma_{0j}/\gamma_1  \right) + \beta_{2j} \bm{1} \left\{ \gamma_{0i} + \gamma_1 z \ge 0 \right\} \\
		& \quad + \beta_{2j} \sum_{l \neq j,i} G_{jl}^{(L)} \Pr \left( Z \ge -\gamma_{0l}/\gamma_1  \right) \quad \text{for $j \in \mathcal{E}_i$}, 
\end{align*}\end{small}
and
\begin{small}\begin{align*}
		\textbf{Incorrect IEM} \qquad
		\mu_i^D(z, t)
		& = \bm{1} \left\{ \gamma_{0i} + \gamma_1 z \ge 0 \right\}, \\
		\quad \mu_i^Y(z, t)
		& = \beta_{0i} + \beta_{1i} \bm{1} \left\{ \gamma_{0i} + \gamma_1 z \ge 0 \right\} + \beta_{2i} t + \beta_{2i} \sum_{j \neq i}( G_{ij}^{(L)} - G_{ij}^{(1)}) \Pr \left( Z \ge -\gamma_{0j}/\gamma_1  \right).
\end{align*}\end{small}
The other population parameters can be computed analogously.

The data are generated for two sample sizes $n \in \{500, 1000\}$.
Note that in our DGPs, all individuals have the same network structure and the same distribution of $T_i$.
Thus, we use the whole sample $N_n$ as $S_n$ (see Remark \ref{remark:subpopulation} for a related discussion).
The performance of the estimators is measured in terms of the bias and the root mean squared error (RMSE) based on 1,000 Monte Carlo repetitions.

The results are summarized in Table \ref{table:MC_bias_rmse}.
First, we can see that our estimators perform satisfactorily overall.
Second, in the LADE estimation for DGP 1, the RMSE values tend to be very large.
This is because there are several times in the 1,000 repetitions when the estimated ADED is almost zero (when the ADED estimate was exactly zero, we redrew a new $Z$), resulting in extremely large LADE estimates in such cases.
This phenomenon is not prominent in DGP 2.
Lastly, as the range of interactions ($L$) increases, the RMSE values tend to increase slightly.

\begin{table}[ht]
	\caption{Bias and RMSE}
	\label{table:MC_bias_rmse}
	\begin{footnotesize}
		\begin{center}
			\begin{tabular}{rr|rrrrrrrr}
				\hline
				\textbf{DGP 1} &  & \multicolumn{8}{|c}{\underline{Correct IEM}} \\
				&  & \multicolumn{2}{|c}{ADEY} & \multicolumn{2}{c}{LADE} & \multicolumn{2}{c}{AIEY} & \multicolumn{2}{c}{LAIE} \\
				\multicolumn{1}{c}{$L$} & \multicolumn{1}{c}{$n$} & \multicolumn{1}{|c}{Bias} & \multicolumn{1}{c}{RMSE} & \multicolumn{1}{c}{Bias} & \multicolumn{1}{c}{RMSE} & \multicolumn{1}{c}{Bias} & \multicolumn{1}{c}{RMSE} & \multicolumn{1}{c}{Bias} & \multicolumn{1}{c}{RMSE} \\ 
				\hline
				2 & 500 & -0.0051  & 0.1797  & -0.0094  & 0.7550  & -0.0181  & 0.2224  & 0.2134  & 1.4212  \\
				& 1000 & 0.0025  & 0.1236  & 0.0132  & 0.4954  & -0.0106  & 0.1668  & 0.0740  & 0.9447  \\
				3 & 500 & -0.0044  & 0.1812  & -0.0082  & 0.7404  & -0.0137  & 0.3131  & 0.5096  & 2.5651  \\
				& 1000 & -0.0038  & 0.1302  & -0.0054  & 0.5175  & -0.0024  & 0.2268  & 0.3168  & 1.7220  \\
				\hline
				&  & \multicolumn{8}{|c}{\underline{Incorrect IEM}}  \\
				&  & \multicolumn{2}{|c}{ADEY} & \multicolumn{2}{c}{LADE} & \multicolumn{2}{c}{AIEY} & \multicolumn{2}{c}{LAIE} \\
				\multicolumn{1}{c}{$L$} & \multicolumn{1}{c}{$n$} & \multicolumn{1}{|c}{Bias} & \multicolumn{1}{c}{RMSE} & \multicolumn{1}{c}{Bias} & \multicolumn{1}{c}{RMSE} & \multicolumn{1}{c}{Bias} & \multicolumn{1}{c}{RMSE} & \multicolumn{1}{c}{Bias} & \multicolumn{1}{c}{RMSE} \\ 
				\hline
				2 & 500 & 0.0040  & 0.2503  & -0.6032  & 11.2950  & -0.0080  & 0.1658  & 0.1060  & 0.8640  \\
				& 1000 & -0.0006  & 0.1735  & 0.0459  & 3.5362  & -0.0078  & 0.1181  & 0.0207  & 0.5420  \\
				3 & 500 & 0.0025  & 0.2433  & 0.2594  & 25.0826  & -0.0027  & 0.1654  & 0.1696  & 1.1689  \\
				& 1000 & -0.0003  & 0.1654  & -0.0539  & 14.3886  & -0.0033  & 0.1205  & 0.0806  & 0.7671  \\
				\hline \hline
				\textbf{DGP 2} &  & \multicolumn{8}{|c}{\underline{Correct IEM}} \\
				&  & \multicolumn{2}{|c}{ADEY} & \multicolumn{2}{c}{LADE} & \multicolumn{2}{c}{AIEY} & \multicolumn{2}{c}{LAIE} \\
				\multicolumn{1}{c}{$L$} & \multicolumn{1}{c}{$n$} & \multicolumn{1}{|c}{Bias} & \multicolumn{1}{c}{RMSE} & \multicolumn{1}{c}{Bias} & \multicolumn{1}{c}{RMSE} & \multicolumn{1}{c}{Bias} & \multicolumn{1}{c}{RMSE} & \multicolumn{1}{c}{Bias} & \multicolumn{1}{c}{RMSE} \\ 
				\hline
				2 & 500 & -0.0022  & 0.2172  & 0.0025  & 0.6014  & -0.0364  & 0.4781  & -0.0848  & 1.2994  \\
				& 1000 & -0.0155  & 0.1491  & 0.0105  & 0.4391  & -0.0202  & 0.3302  & -0.0641  & 0.9589  \\
				3 & 500 & 0.0103  & 0.2409  & 0.0141  & 0.6708  & -0.0812  & 0.7975  & -0.2019  & 2.2011  \\
				& 1000 & -0.0297  & 0.1804  & -0.0354  & 0.5362  & -0.0330  & 0.5545  & -0.1047  & 1.6226  \\
				\hline
				&  & \multicolumn{8}{|c}{\underline{Incorrect IEM}}  \\
				&  & \multicolumn{2}{|c}{ADEY} & \multicolumn{2}{c}{LADE} & \multicolumn{2}{c}{AIEY} & \multicolumn{2}{c}{LAIE} \\
				\multicolumn{1}{c}{$L$} & \multicolumn{1}{c}{$n$} & \multicolumn{1}{|c}{Bias} & \multicolumn{1}{c}{RMSE} & \multicolumn{1}{c}{Bias} & \multicolumn{1}{c}{RMSE} & \multicolumn{1}{c}{Bias} & \multicolumn{1}{c}{RMSE} & \multicolumn{1}{c}{Bias} & \multicolumn{1}{c}{RMSE} \\ 
				\hline
				2 & 500 & -0.0229  & 0.3248  & -0.0522  & 0.9437  & -0.0216  & 0.3109  & -0.0557  & 0.8558  \\
				& 1000 & -0.0110  & 0.2138  & -0.0367  & 0.6318  & -0.0094  & 0.2033  & -0.0305  & 0.5978  \\
				3 & 500 & -0.0320  & 0.4182  & -0.0730  & 1.2309  & -0.0300  & 0.3921  & -0.0810  & 1.0937  \\
				& 1000 & -0.0155  & 0.2694  & -0.0514  & 0.8136  & -0.0144  & 0.2505  & -0.0461  & 0.7471  \\
				\hline
			\end{tabular}
		\end{center}
	\end{footnotesize}
\end{table}

We next examine the performance of the HAC estimator and the wild bootstrap approach introduced in Sections \ref{subsubsec:HAC} and \ref{subsubsec:Bootstrap}, respectively.
The DGPs and the target parameters considered are the same as above.
The number of bootstrap repetitions is set to 500.
For each parameter in each setup, we compute the coverage rate of the 95\% confidence interval obtained based on these two approaches.
The bandwidth is chosen from $b_n \in \{L + 1, L + 2\}$ for both approaches.
The results are summarized in Tables \ref{table:MC_covratio_hac} and \ref{table:MC_covratio_boot}.
The performances of the HAC estimator and the wild bootstrap are quite similar.   
They show that the estimated confidence intervals have coverage reasonably close to the nominal probability, except in some special circumstances (e.g., the LADE parameter for DGP 1 under the incorrect IEM).
However, it should be kept in mind that, since our variance estimators contain asymptotic biases as shown in Theorem \ref{thm:HAC}, any over- or under-coverage must be due to that bias to some extent.

\begin{table}[ht]
	\caption{Coverage ratio of the 95\% CI (HAC)}
	\label{table:MC_covratio_hac}
	\begin{footnotesize}
		\begin{center}
			\begin{tabular}{cccc|cccc|cccc}
				\hline
				&  &  &  & \multicolumn{4}{c|}{\underline{Correct IEM}} & \multicolumn{4}{c}{\underline{Incorrect IEM}}  \\
				DGP & $L$ & $n$ & $b_n$ & ADEY & LADE & AIEY & LAIE & ADEY & LADE & AIEY & LAIE \\
				\hline
				1 & 2 & 500 & $L+1$ & 0.935  & 0.961  & 0.938  & 0.944  & 0.924  & 0.990  & 0.932  & 0.934  \\
				&  &  & $L+2$ & 0.930  & 0.960  & 0.941  & 0.941  & 0.912  & 0.987  & 0.934  & 0.936  \\
				&  & 1000 & $L+1$ & 0.950  & 0.966  & 0.937  & 0.944  & 0.944  & 0.982  & 0.933  & 0.945  \\
				&  &  & $L+2$ & 0.946  & 0.964  & 0.938  & 0.944  & 0.943  & 0.983  & 0.930  & 0.947  \\
				& 3 & 500 & $L+1$ & 0.933  & 0.946  & 0.931  & 0.947  & 0.912  & 0.995  & 0.926  & 0.947  \\
				&  &  & $L+2$ & 0.926  & 0.945  & 0.924  & 0.945  & 0.905  & 0.995  & 0.923  & 0.938  \\
				&  & 1000 & $L+1$ & 0.934  & 0.949  & 0.931  & 0.948  & 0.948  & 0.996  & 0.932  & 0.952  \\
				&  &  & $L+2$ & 0.934  & 0.949  & 0.932  & 0.950  & 0.946  & 0.996  & 0.929  & 0.950  \\
				\hline
				2 & 2 & 500 & $L+1$ & 0.931  & 0.948  & 0.939  & 0.944  & 0.950  & 0.953  & 0.937  & 0.936  \\
				&  &  & $L+2$ & 0.934  & 0.943  & 0.933  & 0.939  & 0.935  & 0.949  & 0.939  & 0.937  \\
				&  & 1000 & $L+1$ & 0.944  & 0.957  & 0.947  & 0.946  & 0.944  & 0.938  & 0.950  & 0.942  \\
				&  &  & $L+2$ & 0.939  & 0.953  & 0.945  & 0.945  & 0.943  & 0.938  & 0.945  & 0.945  \\
				& 3 & 500 & $L+1$ & 0.938  & 0.949  & 0.928  & 0.929  & 0.926  & 0.941  & 0.921  & 0.928  \\
				&  &  & $L+2$ & 0.934  & 0.941  & 0.926  & 0.921  & 0.922  & 0.934  & 0.915  & 0.925  \\
				&  & 1000 & $L+1$ & 0.930  & 0.944  & 0.934  & 0.929  & 0.947  & 0.943  & 0.944  & 0.945  \\
				&  &  & $L+2$ & 0.930  & 0.941  & 0.927  & 0.928  & 0.944  & 0.939  & 0.939  & 0.942  \\
				\hline
			\end{tabular}
		\end{center}
	\end{footnotesize}
\end{table}

\begin{table}[ht]
	\caption{Coverage ratio of the 95\% CI (Bootstrap)}
	\label{table:MC_covratio_boot}
	\begin{footnotesize}
		\begin{center}
			\begin{tabular}{cccc|cccc|cccc}
				\hline
				&  &  &  & \multicolumn{4}{c|}{\underline{Correct IEM}} & \multicolumn{4}{c}{\underline{Incorrect IEM}}  \\
				DGP & $L$ & $n$ & $b_n$ & ADEY & LADE & AIEY & LAIE & ADEY & LADE & AIEY & LAIE \\
				\hline
				1 & 2 & 500 & $L+1$ & 0.936  & 0.961  & 0.931  & 0.943  & 0.922  & 0.986  & 0.933  & 0.936  \\
				&  &  & $L+2$ & 0.933  & 0.954  & 0.931  & 0.938  & 0.910  & 0.989  & 0.931  & 0.935  \\
				&  & 1000 & $L+1$ & 0.948  & 0.968  & 0.935  & 0.942  & 0.941  & 0.983  & 0.924  & 0.935  \\
				&  &  & $L+2$ & 0.949  & 0.965  & 0.932  & 0.940  & 0.941  & 0.986  & 0.924  & 0.940  \\
				& 3 & 500 & $L+1$ & 0.935  & 0.944  & 0.925  & 0.950  & 0.911  & 0.998  & 0.923  & 0.945  \\
				&  &  & $L+2$ & 0.926  & 0.939  & 0.915  & 0.947  & 0.905  & 0.995  & 0.916  & 0.940  \\
				&  & 1000 & $L+1$ & 0.937  & 0.950  & 0.921  & 0.946  & 0.947  & 0.995  & 0.933  & 0.950  \\
				&  &  & $L+2$ & 0.934  & 0.945  & 0.921  & 0.944  & 0.940  & 0.995  & 0.930  & 0.949  \\
				\hline
				2 & 2 & 500 & $L+1$ & 0.933  & 0.942  & 0.932  & 0.941  & 0.937  & 0.950  & 0.935  & 0.934  \\
				&  &  & $L+2$ & 0.931  & 0.938  & 0.932  & 0.937  & 0.935  & 0.951  & 0.931  & 0.929  \\
				&  & 1000 & $L+1$ & 0.944  & 0.956  & 0.944  & 0.943  & 0.940  & 0.938  & 0.946  & 0.939  \\
				&  &  & $L+2$ & 0.943  & 0.953  & 0.944  & 0.940  & 0.937  & 0.936  & 0.948  & 0.941  \\
				& 3 & 500 & $L+1$ & 0.935  & 0.950  & 0.927  & 0.924  & 0.920  & 0.941  & 0.923  & 0.927  \\
				&  &  & $L+2$ & 0.929  & 0.943  & 0.926  & 0.919  & 0.917  & 0.935  & 0.924  & 0.925  \\
				&  & 1000 & $L+1$ & 0.930  & 0.942  & 0.931  & 0.924  & 0.939  & 0.941  & 0.943  & 0.940  \\
				&  &  & $L+2$ & 0.929  & 0.938  & 0.928  & 0.925  & 0.942  & 0.938  & 0.937  & 0.943  \\
				\hline
			\end{tabular}
		\end{center}
	\end{footnotesize}
\end{table}

\subsection{Simulations on a real students' network}

Next, we investigate the performance of our estimators and inferential procedures based on a real network dataset that is taken from the same students' friendship data as we used in Section \ref{subsec:application}.
For this dataset, we select the schools whose size (number of respondents) is greater than 500, leaving 13,523 students as our population $N_n$.
As mentioned in the main text, we need to select a subset $S_n$ of the population to maintain the distributional homogeneity of treatment exposures.
In this simulation analysis, we construct $S_n$ by $S_n = \{i \in N_n : \sum_{j \neq i} A_{ij} = 11\}$, which is the largest subgroup among the groups defined by the network degree, with $|S_n| =$ 1,372.
Then, based on this network data, we consider the following DGP:
\begin{align*}
	Y_i
	& = \beta_{0i} + \beta_{1i} D_i, \\
	D_i
	& = \mathbf{1}\left\{\gamma_{0i} + \gamma_1 Z_i + \gamma_{2i} T_i \ge 0 \right\},
\end{align*}
where $T_i = \mathbf{1}\{\sum_{j \neq i} A_{ij} Z_j \ge 5\}$, $\beta_{0i}$, $\beta_{1i}$, $\gamma_{0i}$, and $\gamma_{2i}$ are drawn from $\mathrm{Normal}(0, 1)$, $\mathrm{Normal}(1, 1)$,  $\mathrm{Normal}(-1, 1)$, and $\mathrm{Uniform}(1, 2)$, respectively, $\gamma_1 = 0.7$, and $Z_i$'s are IID $\mathrm{Bernoulli}(0.5)$.
Unlike the previously considered DGPs, it is not easy to derive the closed-form expressions for the population parameters in this case.
Thus, we approximate the population values of the target parameters of interest $(\mathrm{ADEY}_{S_n}(1), \mathrm{LADE}_{S_n}(1), \mathrm{AIEY}_{S_n}, \mathrm{LAIE}_{S_n})$ by Monte Carlo integration, and consider only the case of correct IEM.

Below, we report the performance of our estimators based on 1,000 repetitions and the coverage rate of the confidence intervals constructed by the HAC and wild bootstrap approaches in Table \ref{table:MC_summary_real}, altogether.
The bandwidth is chosen from $b_n \in \{5,6\}$, and the number of bootstrap repetitions is 500.
From these results, we can confirm that our estimators work reasonably well.
The coverage ratios are slightly below the nominal level, but this might be specific to the DGP considered here and also due to the asymptotic biases in the variance estimators.

\begin{table}[ht]
	\caption{Summary of simulation results}
	\label{table:MC_summary_real}
	\begin{footnotesize}
		\begin{center}
			\begin{tabular}{r|rrrrrrrr}
				\hline
				& \multicolumn{8}{|c}{\underline{Estimation performance}}  \\
				& \multicolumn{2}{|c}{ADEY} & \multicolumn{2}{c}{LADE} & \multicolumn{2}{c}{AIEY} & \multicolumn{2}{c}{LAIE} \\
				& \multicolumn{1}{c}{Bias} & \multicolumn{1}{c}{RMSE} & \multicolumn{1}{c}{Bias} & \multicolumn{1}{c}{RMSE} & \multicolumn{1}{c}{Bias} & \multicolumn{1}{c}{RMSE} & \multicolumn{1}{c}{Bias} & \multicolumn{1}{c}{RMSE} \\
				\hline
				& -0.0024  & 0.0886  & -0.0131  & 0.4646  & -0.0102  & 0.2589  & 0.0018  & 1.4445  \\
				\hline
				& \multicolumn{8}{|c}{\underline{Coverage ratio}}  \\
				& \multicolumn{2}{|c}{ADEY} & \multicolumn{2}{c}{LADE} & \multicolumn{2}{c}{AIEY} & \multicolumn{2}{c}{LAIE} \\
				\multicolumn{1}{c}{$b_n$} & \multicolumn{1}{|c}{HAC} & \multicolumn{1}{c}{Bootstrap} & \multicolumn{1}{c}{HAC} & \multicolumn{1}{c}{Bootstrap} & \multicolumn{1}{c}{HAC} & \multicolumn{1}{c}{Bootstrap} & \multicolumn{1}{c}{HAC} & \multicolumn{1}{c}{Bootstrap} \\
				\hline
				5 & 0.933  & 0.920  & 0.941  & 0.934  & 0.926  & 0.912  & 0.935  & 0.922  \\
				6 & 0.931  & 0.920  & 0.941  & 0.932  & 0.926  & 0.912  & 0.932  & 0.921  \\
				\hline
			\end{tabular}
		\end{center}
	\end{footnotesize}
\end{table}


\section{Additional Empirical Results} \label{sec:application2}

The descriptive statistics for the seed-eligible students are summarized in Table \ref{table:descriptive}.
We can see that all students without invitation did not actually join the intervention program, implying that the monotonicity condition in Assumption \ref{as:monotone1} holds.
It is also interesting that not just IEM1 but the distribution of IEM2 is also insensitive to the student's own invitation status.
This would suggest that one's treatment eligibility does not have substantial impacts on the others' treatment choices.
Indeed, we found that the conditional distribution of $\sum_{j \neq i} A_{ij} D_j$ given $Z_i = 1$ is almost identical to that given $Z_i = 0$.

Panel (a) of Table \ref{table:empirical2} presents the ADE estimates obtained by a constant IEM discussed in Remark \ref{remark:constantIEM1}.
Similar to the ADE estimates conditional on $T_i = t$ in Table \ref{table:empirical1}, we can see that receiving an invitation and the participation in the intervention program have statistically significant positive direct effects on the probability of wearing a wristband.

Panel (b) of Table \ref{table:empirical2} provides the results of the AOE estimation discussed in Appendix \ref{sec:overall}.
In line with the AIE estimation in Table \ref{table:empirical1}, we set $K = 1$.
All estimates are positive and statistically significant.
In particular, the results of $\mathrm{AOEY}_{S_n}$ and $\mathrm{LAOE}_{S_n}$ imply substantial overall effects of the treatment eligibility and the program participation on the probability of wearing a wristband.

In Panel (c) of Table \ref{table:empirical2}, we report the estimates and the HAC standard errors for the ASE parameters discussed in Appendix \ref{sec:spillover}.
Because the experimental design implies that $\tilde{\mathcal{C}}_i(0, t, t') = 0$ for all $i$, we focus on the ASE estimates when we set $Z_i = 1$.
We can see that almost all ASE estimates obtained by IEM1 are statistically insignificant.
Moreover, the coexistence of the positive and negative estimates of $\mathrm{ASED}_{S_n}(1, 1, 0)$ is inconsistent with Assumption \ref{as:monotone3}, implying the failure of the identification of $\mathrm{LASE}_{S_n}(1, 1, 0)$ for IEM1.
By contrast, the estimates obtained by IEM2 suggest the presence of significant positive spillover effects on both the outcome and the treatment receipt.
However, because the ASE parameters have clear causal interpretation only when the selected IEM is correct in the sense of Assumption \ref{as:correctIEM}, we should be cautious in interpreting the results obtained by IEM2.

\begin{table}[t]
	\caption{Descriptive statistics for seed-eligible students\label{table:descriptive}} 
	\begin{footnotesize}
		\begin{center}
			\begin{tabular}{rrrr}
				\hline 
				&  & \multicolumn{2}{c}{Treatment eligibility} \\ 
				\cline{3-4}
				& \multicolumn{1}{c}{Overall} & \multicolumn{1}{c}{$Z_i = 0$} & \multicolumn{1}{c}{$Z_i = 1$}\\ 
				\hline 
				\multicolumn{1}{l}{Treatment} &  &  &  \\ 
				$D_i = 0$ & 2,357 (79\%) & 1,491 (100\%) & 866 (58\%) \\ 
				$D_i = 1$ & 626 (21\%) & 0 (0\%) & 626 (42\%) \\ 
				\multicolumn{1}{l}{Outcome} &  &  &  \\ 
				$Y_i = 0$ & 2,684 (90\%) & 1,392 (93\%) & 1,292 (87\%) \\ 
				$Y_i = 1$ & 299 (10\%) & 99 (7\%) & 200 (13\%) \\ 
				\multicolumn{1}{l}{IEM1} &  &  &  \\ 
				$T_{1i} = 0$ & 1,458 (49\%) & 711 (48\%) & 747 (50\%) \\ 
				$T_{1i} = 1$ & 1,525 (51\%) & 780 (52\%) & 745 (50\%) \\ 
				\multicolumn{1}{l}{IEM2} &  &  &  \\ 
				$T_{2i} = 0$ & 2,334 (78\%) & 1,151 (77\%) & 1,183 (79\%) \\ 
				$T_{2i} = 1$ & 649 (22\%) & 340 (23\%) & 309 (21\%) \\ 
				\hline 
			\end{tabular}
		\end{center}
	\end{footnotesize}
\end{table}

\begin{table}[t]
	\caption{Additional empirical results} \label{table:empirical2}
	\begin{footnotesize}
		\begin{subtable}[h]{\textwidth}
			\centering
			\caption{Average direct effects (constant IEM)}
			\begin{tabular}{rrrcrrcrrcrr}
				\hline
				\multicolumn{3}{c}{\bfseries }&\multicolumn{1}{c}{\bfseries }&\multicolumn{2}{c}{\bfseries $\mathrm{ADEY}_{S_n}$}&\multicolumn{1}{c}{\bfseries }&\multicolumn{2}{c}{\bfseries $\mathrm{ADED}_{S_n}$}&\multicolumn{1}{c}{\bfseries }&\multicolumn{2}{c}{\bfseries $\mathrm{LADE}_{S_n}$}\tabularnewline
				\cline{5-6} \cline{8-9} \cline{11-12}
				\multicolumn{1}{c}{$S_n$}&\multicolumn{1}{c}{$|S_n|$}&\multicolumn{1}{c}{$b_n$}&\multicolumn{1}{c}{}&\multicolumn{1}{c}{Estimate}&\multicolumn{1}{c}{SE}&\multicolumn{1}{c}{}&\multicolumn{1}{c}{Estimate}&\multicolumn{1}{c}{SE}&\multicolumn{1}{c}{}&\multicolumn{1}{c}{Estimate}&\multicolumn{1}{c}{SE}\tabularnewline
				\hline
				$S_n^{\ge 1}$&$2244$&$2$&&$0.073$&$0.012$&&$0.416$&$0.030$&&$0.175$&$0.026$\tabularnewline
				\hline
			\end{tabular}
		\end{subtable}
		\hfill \\ \\ \hfill
		\begin{subtable}[h]{\textwidth}
			\centering
			\caption{Average overall effects with $K = 1$}
			\begin{tabular}{rrrcrrcrrcrrcrr}
				\hline
				\multicolumn{3}{c}{\bfseries }&\multicolumn{1}{c}{\bfseries }&\multicolumn{2}{c}{\bfseries $\mathrm{AOEY}_{S_n}$}&\multicolumn{1}{c}{\bfseries }&\multicolumn{2}{c}{\bfseries $\mathrm{AOED}_{S_n}$}&\multicolumn{1}{c}{\bfseries }&\multicolumn{2}{c}{\bfseries $\mathrm{ADED}_{S_n}$}&\multicolumn{1}{c}{\bfseries }&\multicolumn{2}{c}{\bfseries $\mathrm{LAOE}_{S_n}$}\tabularnewline
				\cline{5-6} \cline{8-9} \cline{11-12} \cline{14-15}
				\multicolumn{1}{c}{$S_n$}&\multicolumn{1}{c}{$|S_n|$}&\multicolumn{1}{c}{$b_n$}&\multicolumn{1}{c}{}&\multicolumn{1}{c}{Estimate}&\multicolumn{1}{c}{SE}&\multicolumn{1}{c}{}&\multicolumn{1}{c}{Estimate}&\multicolumn{1}{c}{SE}&\multicolumn{1}{c}{}&\multicolumn{1}{c}{Estimate}&\multicolumn{1}{c}{SE}&\multicolumn{1}{c}{}&\multicolumn{1}{c}{Estimate}&\multicolumn{1}{c}{SE}\tabularnewline
				\hline
				$S_n^{\ge 1}$&$2244$&$2$&&$0.185$&$0.048$&&$0.366$&$0.040$&&$0.416$&$0.030$&&$0.445$&$0.108$\tabularnewline
				\hline
			\end{tabular}
		\end{subtable}
		\hfill \\ \\ \hfill
		\begin{subtable}[h]{\textwidth}
			\centering
			\caption{Average spillover effects conditional on $Z_i = 1$}
			\begin{tabular}{rrrrcrrcrrcrr}
				\hline
				&\multicolumn{3}{c}{\bfseries }&\multicolumn{1}{c}{\bfseries }&\multicolumn{2}{c}{\bfseries $\mathrm{ASEY}_{S_n}(1,1,0)$}&\multicolumn{1}{c}{\bfseries }&\multicolumn{2}{c}{\bfseries $\mathrm{ASED}_{S_n}(1,1,0)$}&\multicolumn{1}{c}{\bfseries }&\multicolumn{2}{c}{\bfseries $\mathrm{LASE}_{S_n}(1,1,0)$}\tabularnewline
				\cline{6-7} \cline{9-10} \cline{12-13}
				&\multicolumn{1}{c}{$S_n$}&\multicolumn{1}{c}{$|S_n|$}&\multicolumn{1}{c}{$b_n$}&\multicolumn{1}{c}{}&\multicolumn{1}{c}{Estimate}&\multicolumn{1}{c}{SE}&\multicolumn{1}{c}{}&\multicolumn{1}{c}{Estimate}&\multicolumn{1}{c}{SE}&\multicolumn{1}{c}{}&\multicolumn{1}{c}{Estimate}&\multicolumn{1}{c}{SE}\tabularnewline
				\hline
				\multicolumn{13}{l}{\underline{(i) IEM1}} \tabularnewline
				&$S_n(1)$&$1023$&$2$&&$ 0.021$&$0.032$&&$-0.058$&$0.048$&&$-0.364$&$0.744$\tabularnewline
				&$S_n(2)$&$ 702$&$2$&&$ 0.011$&$0.042$&&$ 0.052$&$0.062$&&$ 0.210$&$0.719$\tabularnewline
				&$S_n(3)$&$ 315$&$2$&&$-0.159$&$0.098$&&$-0.206$&$0.128$&&$ 0.772$&$0.396$\tabularnewline
				\hline
				\multicolumn{13}{l}{\underline{(ii) IEM2}} \tabularnewline
				&$S_n(1)$&$1023$&$2$&&$0.252$&$0.052$&&$0.594$&$0.044$&&$0.425$&$0.084$\tabularnewline
				&$S_n(2)$&$ 702$&$2$&&$0.232$&$0.043$&&$0.750$&$0.046$&&$0.309$&$0.054$\tabularnewline
				&$S_n(3)$&$ 315$&$2$&&$0.246$&$0.068$&&$0.816$&$0.053$&&$0.302$&$0.083$\tabularnewline
				\hline
			\end{tabular}
		\end{subtable}
	\end{footnotesize}
\end{table}


\end{document}